\pgfplotsset{compat=1.5}
\newtheorem{theorem}{Theorem}[section]
\newtheorem{lemma}[theorem]{Lemma}
\newtheorem{proposition}[theorem]{Proposition}
\newtheorem{definition}[theorem]{Definition}
\newenvironment{proofof}[1]{\begin{trivlist} \item {\bf Proof
#1:~~}}
  {\qed\end{trivlist}}
\newcommand{\namedref}[2]{\hyperref[#2]{#1~\ref*{#2}}}
\newcommand{\thmlab}[1]{\label{thm:#1}}
\newcommand{\thmref}[1]{\namedref{Theorem}{thm:#1}}
\newcommand{\lemlab}[1]{\label{lem:#1}}
\newcommand{\lemref}[1]{\namedref{Lemma}{lem:#1}}
\newcommand{\seclab}[1]{\label{sec:#1}}
\newcommand{\secref}[1]{\namedref{Section}{sec:#1}}
\newcommand{\applab}[1]{\label{app:#1}}
\newcommand{\appref}[1]{\namedref{Appendix}{app:#1}}
\newcommand{\alglab}[1]{\label{alg:#1}}
\newcommand{\algref}[1]{\namedref{Algorithm}{alg:#1}}
\newcommand{\deflab}[1]{\label{def:#1}}
\def \MSE    {\mdef{\mathsf{MSE}}}
\newcommand\norm[1]{\left\lVert#1\right\rVert}
\newcommand{\PPr}[1]{\ensuremath{\mathbf{Pr}\left[#1\right]}}
\newcommand{\PPPr}[2]{\ensuremath{\underset{#1}{\mathbf{Pr}}\left[#2\right]}}
\newcommand{\Ex}[1]{\ensuremath{\mathbb{E}\left[#1\right]}}
\newcommand{\Var}[1]{\ensuremath{\mathbb{V}\left[#1\right]}}
\renewcommand{\O}[1]{\ensuremath{\mathcal{O}\left(#1\right)}}
\newcommand{\tO}[1]{\ensuremath{\tilde{\mathcal{O}}\left(#1\right)}}
\newcommand{\eps}{\varepsilon}
\def \calA    {\mdef{\mathcal{A}}}
\def \calD    {\mdef{\mathcal{D}}}
\def \calM    {\mdef{\mathcal{M}}}
\def \calP    {\mdef{\mathcal{P}}}
\def \calR    {\mdef{\mathcal{R}}}
\def \calS    {\mdef{\mathcal{S}}}
\def \calV    {\mdef{\mathcal{V}}}
\def \calZ    {\mdef{\mathcal{Z}}}
\newcommand{\mdef}[1]{{\ensuremath{#1}}\xspace}  % Math Def which can also be used in normal text.
\DeclareMathOperator*{\Ber}{Ber}
\DeclareMathOperator*{\Unif}{Unif}
\newcommand{\E}[2][]{\mdef{\underset{#1}{\mathbb{E}}\left[#2\right]}} % Expected value
\newcommand{\ignore}[1]{}
\newif\ifnotes\notestrue %set this to true if notes are visible and to false (next line) if they should be hidden
\newcommand{\samson}[1]{\textcolor{purple}{{\bf (Samson:} {#1}{\bf ) }} \marginpar{\tiny\bf
             \begin{minipage}[t]{0.5in}
               \raggedright S:
            \end{minipage}}}            							
\newcommand{\samson}[1]{}
\renewcommand*{\@fnsymbol}[1]{\textcolor{mahogany}{\ensuremath{\ifcase#1\or *\or \dagger\or \ddagger\or
 \mathsection\or \triangledown\or \mathparagraph\or \|\or **\or \dagger\dagger
   \or \ddagger\ddagger \else\@ctrerr\fi}}}
\providecommand{\email}[1]{\href{mailto:#1}{\nolinkurl{#1}\xspace}}
\definecolor{mahogany}{rgb}{0.75, 0.25, 0.0}
\definecolor{darkblue}{rgb}{0.0, 0.0, 0.55}
\definecolor{darkpastelgreen}{rgb}{0.01, 0.75, 0.24}
\definecolor{bleudefrance}{rgb}{0.19, 0.55, 0.91}
\definecolor{darkgreen}{rgb}{0.0, 0.2, 0.13}
\definecolor{darkgoldenrod}{rgb}{0.72, 0.53, 0.04}
\definecolor{darkred}{rgb}{0.55, 0.0, 0.0}
\newcommand{\Err}{\mathsf{Err}}
\renewcommand{\E}{\mathbb{E}}
\newcommand{\sphere}{\mathbb{S}}
\newcommand{\dist}{\mathsf{dist}}
\newcommand{\reals}{\mathbb{R}} % Real number symbol
\newcommand{\ball}{\mathbb{B}}
\newcommand{\ltwo}[1]{\norm{#1}_2} % l2 norm
\newcommand{\defeq}{\coloneqq}
\newif\ificml
\title{Private Vector Mean Estimation in the Shuffle Model: \\Optimal Rates Require Many Messages}
\author{Hilal Asi\thanks{Apple Inc. \texttt{hilal.asi94@gmail.com}.} \and Vitaly Feldman\thanks{Apple Inc. \texttt{vitaly.edu@gmail.com}.} \and Jelani Nelson\thanks{UC Berkeley. \texttt{minilek@berkeley.edu}. Supported by NSF grant CCF-1951384, ONR grant N00014-18-1-2562, and ONR DORECG award N00014-17-1-2127.} \and Huy L. Nguyen\thanks{Northeastern. \texttt{huylenguyen@gmail.com}.  Supported in part by NSF CAREER grant CCF-1750716 and NSF CCF-2311649.} \and 
Kunal Talwar\thanks{Apple Inc. \texttt{kunal@kunaltalwar.org}.} \and
Samson Zhou\thanks{Texas A\&M University. \texttt{samsonzhou@gmail.com}. Supported in part by NSF CCF-2335411.}}
\begin{document}
\allowdisplaybreaks
\maketitle

\begin{abstract}
We study the problem of private vector mean estimation in the shuffle model of privacy where $n$ users each have a unit vector $v^{(i)} \in\mathbb{R}^d$. We propose a new  multi-message protocol that achieves the optimal error using $\tilde{\mathcal{O}}\left(\min(n\varepsilon^2,d)\right)$ messages per user. Moreover, we show that any (unbiased) protocol that achieves optimal error requires each user to send $\Omega(\min(n\varepsilon^2,d)/\log(n))$ messages, demonstrating the optimality of our message complexity up to logarithmic factors.

Additionally, we study the single-message setting and design a protocol that achieves mean squared error $\O{dn^{d/(d+2)}\varepsilon^{-4/(d+2)}}$. Moreover, we show that \emph{any} single-message protocol must incur mean squared error $\Omega(dn^{d/(d+2)})$, showing that our protocol is optimal in the  standard setting where $\varepsilon = \Theta(1)$. Finally, we study robustness to malicious users and show that malicious users can incur large additive error with a single shuffler.

%the setting where each user can send multiple messages and the goal is to achieve mean squared error matching the optimal error in the centralized setting. 
%We show that any protocol that achieves optimal error must require each user to send $\Omega(\min(n\eps^2,d)/\log(n))$ messages. 
%We then give an $(\eps,\delta)$-approximate differentially private protocol that achieves the optimal error using $\O{d\log\frac{1}{\delta}}$ messages per user, which nearly matches the optimal communication given by our lower bound. 
%Finally, we study robustness to malicious users and show that malicious users can incur large additive error with a single shuffler. 
%Our result motivates the study of using multiple shufflers and indeed, we show that our protocol can be made robust against malicious users if multiple shufflers are available. 
\end{abstract}

\section{Introduction}
Vector mean estimation is a fundamental problem in federated learning, where a large number of distributed users can provide information to collaboratively train a machine learning model. 
%\ha{maybe change everything to mean estimation?}
Formally, there are $n$ users that each have a real-valued vector $v^{(i)}\in\mathbb{R}^d$. In the vector mean estimation problem, the goal is to compute the average of the vectors $v=\frac{1}{n}\sum_{i=1}^n v^{(i)}$, whereas in the closely related vector aggregation problem, the goal is to compute the sum of the vectors $nv = \sum_{i=1}^n v^{(i)}$. As the privacy error scales with the norms of the vectors, we normalize and thus assume that $\|v^{(i)}\|_2 \leq 1$.
The vectors could represent frequencies of sequences of words in smartphone data for predictive text suggestions, shopping records for financial transactions or recommendation systems, various medical statistics for patients from different healthcare institutions, or gradient updates to be used to train a machine learning model. 
Thus, vector mean estimation and vector aggregation are used in a number of applications, such as deep learning through federated learning~\cite{ShokriS15,AbadiCGMMT016,McMahanMRHA17}, frequent itemset mining~\cite{sun2014personalized}, linear regression~\cite{NguyenXYHSS16}, and stochastic optimization~\cite{ChaudhuriMS11,CheuJMP22}. 

Due to the sensitive nature of many of these data types, recent efforts have concentrated on facilitating federated analytics while preserving privacy. 
Differential privacy (DP)~\cite{DworkMNS06} has emerged as a widely adopted rigorous mathematical definition that quantifies the amount of privacy leaked by a mechanism for any given individual user. 
In particular, local differential privacy (LDP)~\cite{KasiviswanathanLNRS11} demands that the distribution of the transcript of the communication protocol cannot be greatly affected by a change in a single distributed user's input. 
This approach enables the distributed collection of insightful statistics about a population, while protecting the private information of individual data subjects even with an untrusted curator who analyzes the collected statistics. 

Unfortunately, in order to ensure privacy, the local model often requires a high amount of noise that results in poor accuracy of the resulting mechanisms. 
For example, in the simple case where $v^{(i)}\in\{0,1\}$, i.e., binary summation, there exist private mechanisms with $\O{1}$ additive error in the central setting where the data curator is trusted~\cite{DworkMNS16}, but the additive error must be $\Omega(\sqrt{n})$ in the local model~\cite{BeimelNO08,ChanSS12}. 
Consequently, the Encode, Shuffle, Analyze (ESA) model was proposed as an alternative distributed setting that could potentially result in a lower error~\cite{BittauEMMRLRKTS17}. The shuffle model of privacy
is a special case of the ESA framework introduced by~\cite{CheuSUZZ19}, where a trusted shuffler receives and permutes a set of encoded messages from the distributed users, before passing them to an untrusted data curator. 
\cite{CheuSUZZ19} and \cite{ErlingssonFeMiRaTaTh19} showed that for the important tasks  of binary and real-valued summation, there are shuffle protocols that nearly match the accuracy of the optimal central DP mechanisms. 
Of note, \cite{BalleBGN19,BalleBGN20,GhaziMPV20,GhaziKMPS21} study the $1$-dimensional real summation problem both under the lens of minimizing the error and the message complexity to achieve optimal error. 
In particular, \cite{GhaziKMPS21} show that there is an optimal protocol that requires each user to send $1+o(1)$ messages in expectation. However, the natural extension of their approach to $d$-dimensional mean estimation requires a number of messages that is exponential in $d$.

For $d$-dimensional mean estimation in the single-message setting, the most relevant works are that of \cite{ScottCM21,ScottCM22}, who study minimizing the mean-squared error of protocols that aim to compute the mean of vectors $u^{(1)},\ldots,u^{(n)}\in\mathbb{R}^d$, where each sampled vector $u^{(i)}$ consists of a number of coordinates sampled from the input vector $v^{(i)}\in\mathbb{R}^d$. 
\cite{ScottCM21,ScottCM22} treat the sampled vectors $u^{(i)}$ as the true vectors and show a single-message shuffle protocol for estimating their mean. 
However, the mean-squared error of the overall protocol can be large, due to the large variance incurred by the procedure of sampling vectors $u^{(i)}$ from the true vectors. 

Private vector mean estimation in the shuffle model is thus not well-understood, both under single-message and multi-message settings. In particular the following natural questions are open: first, in the multi-message setting, what is the total number of messages required in the shuffle model in order to obtain optimal rates for vector mean estimation. Secondly, what are the optimal algorithms for the single-messages setting.

Another desiderata in the design of distributed algorithms is that of robustness to malicious agents. 
In our context, we would like the system to be somewhat robust to one or a small number of clients that behave maliciously. 
For a problem like vector aggregation, a client can always misrepresent their input, and thus impact the sum; when vectors are restricted to having norm at most one, this can impact the true sum by at most two in the norm. 
The poisoning robustness of a protocol is defined to be $\rho$ if the impact of an adversarial client on the computed sum is upper bounded, in Euclidean norm, by $\rho$. Thus a protocol that computes the exact sum has robustness $2$. We would like to design protocols with robustness that is not much larger. We note that robustness of this kind has been previously been studied in other models of privacy~\cite{CheuSU19,Talwar22}. In the shuffle model with multiple messages, there are two different possible models from the robustness point of view.

In any implementation of a shuffle protocol that aims to achieve robustness, one must limit the number of contributions a single client can make: indeed if a single malicious client can pretend to be a million different clients without being detected, one cannot hope to achieve any reasonable robustness. 
In typical implementations of  a shuffler, such control can be achieved. For example, in a mix-net implementation of shuffling~\cite{BittauEMMRLRKTS17}, each client sends a non-anonymous but encrypted message to the first hop, where this first server can see who sent the message but not the contents of the message. 
This first hop can then validate that each sender sends at most one, or at most a predetermined number of messages to the server. 
When this bound is $B$ and there are $n$ clients, this server can implement this rate control using $\O{n}$ counters that can count up to $B$, for a total of $n\log_2 (B+1)$ bits of storage. 
We call this model, where each client can send a bounded number of messages to a single shuffler, the {\em multi-message shuffle} model.

This is distinct from a {\em multi-shuffler} model, where a client is allowed to send $1$ message to each of $B$ shufflers (or equivalently, $B$ messages to a single shuffler with the constraint that there be at most $1$ of each of $B$ ``types'' of message). 
To ensure robustness, the shuffler would then need to rate-limit each type of message. 
When implemented in a mix-net setting as above, this multi-shuffler would require the first hop server to store $\O{nB}$ bits. 
It is easy to see that information-theoretically, a server cannot ensure $nB$ separate rate limits using $o(nB)$ bits of state. 
For large $B$, this is significantly more than the  $n\log_2 (B+1)$ bits that suffice for the multi-message shuffle. 

Similarly in other implementations, e.g. those building on PrivacyPass or OHTTP tokens~\cite{DavidsonGSTV18, ietf-ohttp, ietf-privacypass-rate-limit-tokens-01}, there is a server that implements the rate control at some step, and its cost scales as $nB$ for multiple shufflers, compared to $n\log_2 (B+1)$ for a multi-message shuffle. Thus from an overhead point of view, these two models are significantly different. As a concrete example, when $n = 10^8$, and the vectors are $d=10^6$-dimensional, a $d$-message shuffle requires a few hundred megabytes of storage for the counters, whereas a multi-shuffle would require $12$ terabytes of storage. It is thus much preferable to design algorithms that are robust in the multi-message shuffle model, rather than in the multi-shuffler model.

\subsection{Our Contributions}
In this work, we study the vector aggregation and vector mean estimation problems in the shuffle model of privacy, both in the single-message and multi-message settings, and from the viewpoint of robustness.
We show the following results.

\paragraph{Multiple messages per user (\Cref{sec:multi-msg}).}
We consider the multi-message setting where users are allowed to send multiple messages.
We propose a new protocol in the shuffle model that obtains optimal mean squared error of $\tO{\frac{d}{\eps^2}}$ using $\tO{\min(d,n\eps^2)}$ messages per user, matching the performance of the central model of privacy~\cite{BassilyST14} up to logarithmic factors.

\begin{restatable}{theorem}{thmmanymsgub}
\thmlab{thm:many:msg:ub}
There exists an $(\eps,\delta)$-DP mechanism for vector aggregation that uses $\tO{\min(d,n\eps^2)}$ messages per user and achieves mean squared error $ \tO{\frac{d}{\eps^2}}$. 
\end{restatable}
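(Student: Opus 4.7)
The plan is to give two protocols covering the complementary regimes $d \leq n\eps^2$ (low-dimensional) and $d > n\eps^2$ (high-dimensional), and to glue them together using a standard case split.

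For the low-dimensional regime $d \leq n\eps^2$, the idea is to realize the $d$-dimensional Gaussian mechanism of $\ell_2$-sensitivity $1$ inside the shuffle model by running a scalar shuffle-DP summation protocol independently on each of the $d$ coordinates. Each user sends coordinate-tagged scalar messages: $\tilde{O}(1)$ messages per coordinate, for a total of $\tilde{O}(d)$ messages per user. The per-coordinate scalar primitive (e.g., a variant of the split-and-mix protocols of \cite{GhaziKMPS21} with discrete-Gaussian-calibrated blanket noise) is chosen so that the aggregate noise on each coordinate is (close to) $\mathcal{N}(0,\sigma^2)$ with $\sigma=\tilde{\Theta}(1/\eps)$. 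Because the $d$ per-coordinate noises are independent Gaussians and the global $\ell_2$-sensitivity of vector aggregation is $1$, the resulting $d$-dimensional output is exactly a (discrete) Gaussian mechanism and is $(\eps,\delta)$-DP by the standard $\ell_2$-sensitivity analysis. The MSE is $d\sigma^2 = \tilde{O}(d/\eps^2)$.

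For the high-dimensional regime $d > n\eps^2$, the plan is a two-stage reduction: first compress the dimension, then apply the low-dimensional protocol. We apply a public random rotation (e.g., a subsampled randomized Hadamard transform, or a Kashin-type representation) that maps each $v^{(i)}\in\mathbb{R}^d$ to $w^{(i)}\in\mathbb{R}^D$ with $D=\Theta(d)$, $\|w^{(i)}\|_2\leq 1$, and $\|w^{(i)}\|_\infty = \tilde{O}(1/\sqrt{d})$ with high probability, while allowing lossless inversion of the rotation on the aggregate. Each user then subsamples $k=\tilde{\Theta}(n\eps^2)$ coordinates of $w^{(i)}$ uniformly at random, rescales by $D/k$ to remove bias, and sends those sparse contributions through the low-dimensional protocol above. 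A direct variance computation gives per-coordinate sampling variance $\tilde{O}((D/k)\cdot\|w^{(i)}\|_\infty^2) = \tilde{O}(1/k)$ per user, summing to $\tilde{O}(nD/k)=\tilde{O}(d/\eps^2)$ across the $D$ coordinates, which matches the DP error up to logarithmic factors. Each user sends $\tilde{O}(n\eps^2)$ messages, and the analyst aggregates per coordinate and inverts the rotation to report the mean.

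The main obstacle is the privacy composition in the low-dimensional regime: a naive advanced-composition argument over $d$ independent $(\eps_0,\delta_0)$-DP shuffle protocols would require $\eps_0 \approx \eps/\sqrt{d\log(1/\delta)}$ per coordinate and blow up the MSE by a factor of $d$, giving only $\tilde{O}(d^2/\eps^2)$. The remedy is to force the scalar shuffle protocol's aggregate noise to be (close to) Gaussian so that the concatenated $d$-coordinate output is a single isotropic Gaussian mechanism, and then invoke the $\ell_2$-sensitivity analysis (equivalently, Rényi/Gaussian DP composition), which saves the $\sqrt{d}$ factor. The secondary technical step is a careful calibration of the discrete noise parameters of the scalar primitive, together with showing that the discrete-to-continuous-Gaussian gap contributes only $\tilde{O}(1)$ extra variance per coordinate; the public randomness used by the rotation and the local randomness used by subsampling in the high-dimensional regime are independent of the inputs and enter the privacy analysis only as side information.
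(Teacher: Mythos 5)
In the regime $d \le n\eps^2$ your route differs from the paper's, and your identified obstacle has a simpler resolution than the one you propose. The paper does use naive per-coordinate composition with $\eps_0 = \eps/(2\sqrt{2d\log(2/\delta)})$, but it first applies the Kashin transformation so that every coordinate of $\frac{\sqrt{d}}{\Ck}U_{\mathsf K}v^{(i)}$ is bounded by $1$ while the inverse map shrinks squared error by a factor of $d$: the per-coordinate error $O(1/\eps_0^2)=O(d\log(1/\delta)/\eps^2)$ summed over $2d$ coordinates gives $O(d^2\log(1/\delta)/\eps^2)$, and multiplying by $\Ck^2/d$ when inverting recovers $O(d\log(1/\delta)/\eps^2)$. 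So the factor-$d$ composition loss you worry about is cancelled by flattening, with no need to force the aggregate noise to be Gaussian. Your alternative (engineer the scalar shuffle primitive so the $d$ aggregate noises form an isotropic discrete Gaussian and invoke the joint $\ell_2$-sensitivity analysis) can also work, but note it requires more than the $(\eps_0,\delta_0)$-DP guarantee of \cite{GhaziKMPS21}: you need the entire multi-coordinate transcript to be simulatable from the $d$ Gaussian-noised sums, which means reproving the split-and-mix security reduction with infinitely divisible discrete Gaussian noise rather than citing the scalar protocol as a black box.

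The genuine gap is in the regime $d > n\eps^2$. Your subsample-and-rescale protocol accounts only for the sampling variance $\tilde{O}(nD/k)=\tilde{O}(d/\eps^2)$ and never recomputes the privacy noise for the subsampled inputs. After flattening, every coordinate of $w^{(i)}$ has magnitude $\Theta(1/\sqrt{d})$, so the rescaled sparse contribution $\frac{D}{k}w^{(i)}_{T_i}$ has $\ell_2$ norm $\Theta\left(\frac{D}{k}\cdot\sqrt{k}\cdot\frac{1}{\sqrt{d}}\right)=\Theta\left(\sqrt{d/k}\right)$, and this holds for \emph{every} realization of $T_i$ (neighboring inputs differ by a flattened vector, so coordinate subsampling provides no amplification). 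Calibrating Gaussian noise to this sensitivity across $D=\Theta(d)$ coordinates costs $\tilde{\Theta}\!\left(\frac{d^2}{k\eps^2}\right)=\frac{d}{\eps^2}\cdot\frac{d}{n\eps^2}$, which exceeds the target by exactly the factor $d/(n\eps^2)>1$ that defines this regime; the same loss appears if you instead compose over the $k$ scalar sub-protocols each user touches. Closing this gap requires a genuinely different idea — the paper itself does not give one, instead citing \cite{chen2023privacy}, whose analysis exploits the randomness of the compression itself in the privacy accounting rather than paying worst-case sensitivity of the compressed message. As written, your high-dimensional protocol achieves only $\tilde{O}(d^2/(n\eps^4))$, so the $\tilde{O}(n\eps^2)$-message half of the theorem is not established.
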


Moreover, we prove the following lower bound which shows that $\Omega(\min(n\eps^2,d)/\log(n))$ messages are necessary in the shuffle model in order to obtain the optimal rate. The lower bound holds for any unbiased or summation protocol (as we define in~\Cref{sec:pre}).
\begin{restatable}{theorem}{thmmanymsglb}
\thmlab{thm:many:msg:lb}
For any (unbiased or summation) $(\eps,\delta)$-Shuffle DP protocol for vector aggregation that achieves the optimal mean squared error $\O{\frac{d}{\eps^2}}$, must send $k = \Omega(\min(n\eps^2,d)/\log(n))$ messages.
\end{restatable}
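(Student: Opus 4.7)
The plan is to combine privacy amplification by shuffling (used in the reverse direction) with the $L_2$-geometry local-DP variance lower bound for unbiased vector mean estimation, exploiting the summation / unbiasedness assumption. Writing the protocol output as $Y=\sum_{i,j}m_{i,j}$ and the per-user contribution as $S_i=\sum_j m_{i,j}$, user-independence and the unbiasedness hypothesis give $\E[Y]=\sum_i v^{(i)}$ and $\Var{Y}=\sum_i \Var{S_i}$, so the MSE upper bound $\O{d/\eps^2}$ translates directly into an upper bound on the aggregate per-user variance. After symmetrizing the protocol so that each $v^{(i)}$ is split evenly across its $k$ messages, this also yields an upper bound on a suitable per-message variance.

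Next I would derive an effective per-message local-DP parameter. Treating each of the $m=nk$ messages as emitted by a virtual user and applying amplification-by-shuffling at the virtual-user level, together with group privacy for the $k$ messages owned by a single real user, shows that the per-message distribution cannot shift by more than an $\eps_m$-LDP factor under a change of its originating input, with $\eps_m=\tilde O(\eps\sqrt{n/(k\log(1/\delta))})$. Plugging $\eps_m$ together with the per-message target norm $r=\O{1/k}$ into the Duchi--Jordan--Wainwright lower bound
\[
\E\bigl[\ltwo{\hat v - v}^2\bigr] \;\geq\; \Omega\bigl(r^2\min(d,\,1/\eps_m^2)\bigr)
\]
for $\eps_m$-LDP unbiased estimators of $v\in\reals^d$ with $\ltwo{v}\le r$, and summing over the $nk$ messages, produces a total variance lower bound of the form $\Omega\bigl((d/\eps^2)\cdot\min(n\eps^2,d)/(k\log(1/\delta))\bigr)$. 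Matching against the assumed upper bound $\O{d/\eps^2}$ and using $\log(1/\delta)=\Theta(\log n)$ in the standard regime then forces $k=\Omega(\min(n\eps^2,d)/\log n)$.

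The main obstacle is the reverse amplification step, because amplification-by-shuffling is canonically stated in the direction ``per-user LDP $\Rightarrow$ shuffle DP.'' Recovering a per-message LDP-style bound from a given shuffle-DP protocol requires a separate hypothesis-testing / coupling argument: one argues that for any pair of neighboring input tuples differing at a single virtual message, the total variation and KL divergence between shuffled-output distributions are bounded in terms of $\eps,\delta,n,k$, and then invokes a Le Cam / Bhattacharyya-style conversion to feed the Duchi--Jordan--Wainwright bound. A secondary subtlety is that the final bound has two regimes encoded in $\min(n\eps^2,d)$: the $d$ term is a dimension barrier (each message is a single vector in $\reals^d$, so $k$ messages collectively carry at most $k$ orthogonal unit-variance signals), while the $n\eps^2$ term is a privacy-budget barrier coming from the shuffle amplification; both are unified by the $r^2\min(d,1/\eps_m^2)$ form of the local-DP variance lower bound and so the case analysis is absorbed into a single application of the inequality.
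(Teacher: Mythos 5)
Your proposal has a fatal gap at its central step: the ``reverse amplification'' claim that a shuffle-DP protocol induces an effective per-message local-DP parameter $\eps_m=\tilde O(\eps\sqrt{n/(k\log(1/\delta))})$. Amplification by shuffling only goes in the direction ``local DP $\Rightarrow$ shuffle DP,'' and the converse is false: in a shuffle-DP protocol an individual message may be a deterministic, fully revealing function of the user's input (this is exactly how split-and-mix and the protocol of Ghazi et al.\ work --- single messages are additive shares that are not differentially private on their own). So there is no $\eps_m$-LDP guarantee per virtual user to feed into the Duchi--Jordan--Wainwright bound, and the ``hypothesis-testing / coupling argument'' you defer to would have to manufacture a property the hypothesis simply does not imply. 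Two secondary issues: there is no symmetrization that forces each user's vector to be ``split evenly'' in norm across its $k$ messages, so the per-message radius $r=\O{1/k}$ is unjustified; and your explanation of the $d$ regime (``$k$ messages carry at most $k$ orthogonal unit-variance signals'') is not an argument, since a single message in $\reals^d$ can encode $d$ real coordinates.

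The paper's proof takes a genuinely different, combinatorial route that sidesteps per-message privacy entirely. It first symmetrizes so the per-user error is rotation-invariant, then shows (via the variance/bias decomposition you also use, applied to the all-identical dataset) that $\Ex{\ltwo{\calR^+(v_1)-v_1}^2}\le d/(n\eps^2)$. The adversary then enumerates all $\binom{nk}{k}$ subsets of $k$ messages and records each subset's sum; one of these subsets is exactly user $1$'s message set, so the resulting candidate set $S$ contains a point within squared distance $d/(n\eps^2)$ of $v_1$ with high probability. Taking a $\rho$-packing of $\sphere^{d-1}$ of size $2^{\Omega(d)}$ and the datasets $X_i=(v_i,v_1,\dots,v_1)$, the $(\eps,\delta)$-DP guarantee forces $\Pr(v_i\in O_1)\ge 1/6$ for every $i$, while $\sum_i\Pr(v_i\in O_1)\le|O_1|\le\binom{nk}{k}\le(en)^k$; hence $2^d\lesssim(en)^k$ and $k\ge\Omega(d/\log n)$. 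The $n\eps^2$ regime is then obtained by a reduction (via Kashin's representation) from dimension $d$ down to $d'=\Theta(n\eps^2)$ preserving the message count, and the unbiased (non-summation) case replaces the subset-sum reconstruction with a conditional-expectation reconstruction using dummy inputs. If you want to salvage your write-up, you would need to replace the reverse-amplification step with a reconstruction argument of this kind; as written, the proof does not go through.
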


\paragraph{Single message per user (\Cref{sec:single-msg}).} 
We also study the single-message setting where each user is allowed to send only a single message. 
We show that there exists a private protocol that can achieve mean squared error $\O{dn^{d/(d+2)}\eps^{-4/(d+2)}}$.
\begin{restatable}{theorem}{thmonemsgub}
\thmlab{thm:one:msg:ub}
For any $\eps\in(0,1)$, $\delta\in(0,1)$, and $d,n\in\mathbb{N}$, there exists an $(\eps,\delta)$-DP protocol in the one-message shuffle model with mean squared error $\mathcal{O}_{\delta}\left(dn^{d/(d+2)}\eps^{-4/(d+2)}\right)$.
\end{restatable}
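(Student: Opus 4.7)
The plan is to design a one-message shuffle protocol based on a fine discretization of the unit ball combined with an $N$-ary randomized response, where the mesh size $r$ is chosen to balance quantization error against privacy noise.

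First, I would take a cubic lattice $\calL_r$ of step size $r$ in $\mathbb{R}^d$ and keep those lattice points within distance $1 + r\sqrt{d}/2$ of the origin, obtaining a net $\calP$ of size $N = \Theta((C/r)^d)$ such that every point of the unit ball lies within $r\sqrt{d}/2$ of some point of $\calP$. Each user $i$ then performs coordinate-wise randomized rounding of $v^{(i)}$ to a point $\hat{v}^{(i)} \in \calP$, which ensures $\Ex{\hat{v}^{(i)} \mid v^{(i)}} = v^{(i)}$ and $\Ex{\norm{\hat{v}^{(i)} - v^{(i)}}^2} \leq dr^2/4$.

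Second, each user sends a single message obtained by applying $N$-ary randomized response on $\calP$: output $\hat{v}^{(i)}$ with probability $1-\gamma$ and a uniformly random point of $\calP$ with probability $\gamma$. This local randomizer is $\eps_0$-LDP with $e^{\eps_0} = 1 + (1-\gamma)N/\gamma = \Theta(N/\gamma)$. Invoking a tight shuffle amplification theorem (e.g.\ Feldman--McMillan--Talwar), this yields $(\eps,\delta)$-DP after shuffling provided $\gamma = \Theta(N\log(1/\delta)/(n\eps^2))$, which is feasible whenever $n\eps^2 \gtrsim \log(1/\delta)$ (the complementary regime admits a trivial solution).

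The analyzer outputs the debiased estimator $\tilde{v} = \tfrac{1}{1-\gamma}\sum_{i=1}^n (X_i - \gamma \bar{p})$, where $\bar{p}$ is the mean of $\calP$ and $X_i$ is the message of user $i$. By construction $\tilde{v}$ is unbiased for $\sum_i v^{(i)}$, and a direct computation shows $\Var{X_i} = O(\gamma)$ (the $(1-\gamma)$-mass is within $O(\gamma)$ of $\E[X_i]$, while the $\gamma$-mass contributes $O(1)$ weighted by $\gamma$). Combining the variance from randomized response ($\O{n\gamma} = \O{N\log(1/\delta)/\eps^2}$) with the independent contribution from randomized rounding ($\O{ndr^2}$), and substituting $N = \Theta((1/r)^d)$, yields
\[
\Ex{\textstyle\norm{\tilde{v} - \sum_i v^{(i)}}^2} = \O{\frac{\log(1/\delta)}{r^d \eps^2} + nd r^2}.
\]
Balancing these terms by setting $r^{d+2} = \Theta(\log(1/\delta)/(n\eps^2))$ gives MSE $= \mathcal{O}_\delta(d n^{d/(d+2)} \eps^{-4/(d+2)})$, as claimed.

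The main obstacle I anticipate is applying shuffle privacy amplification tightly in the large-$\eps_0$ regime: here $\eps_0 \asymp \log(n\eps^2/\log(1/\delta))$ is logarithmic rather than $O(1)$, so one must use the $e^{\eps_0}$-scaling form of the amplification bound instead of its oft-quoted small-$\eps_0$ linearization. A secondary technicality is keeping the bias-variance decomposition clean across the two independent sources of randomness (rounding and response): deterministic nearest-point rounding would contribute an $n^2 d r^2$ bias-squared term that dominates the target rate, so the unbiasedness of coordinate-wise randomized rounding is essential to the analysis.
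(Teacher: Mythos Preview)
Your proposal is correct and takes essentially the same approach as the paper: discretize the domain into $N=\Theta(r^{-d})$ buckets, perform unbiased coordinate-wise randomized rounding followed by $N$-ary randomized response with replacement probability $\gamma=\Theta(N\log(1/\delta)/(n\eps^2))$, output the debiased sum, and then balance the rounding variance $\O{ndr^2}$ against the randomized-response variance $\O{N\log(1/\delta)/\eps^2}$ by optimizing~$r$. The only noteworthy difference is the privacy argument: where you plan to apply general shuffle amplification to the $\eps_0$-LDP randomizer and flag the large-$\eps_0$ regime as an obstacle, the paper sidesteps this entirely by invoking the direct privacy analysis of $k$-bucket randomized response in the shuffle model (Theorem~3.1 of Balle--Bell--Gasc\'on--Nissim), which already yields the required $\gamma$ without passing through an amplification bound.
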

Though the mean squared error of \thmref{thm:one:msg:ub} seems somewhat arbitrary, we show that it is tight for a single message per user shuffle.
\begin{restatable}{theorem}{thmonemsglb}
Let $\calP$ be an $(\eps,\delta)$-DP protocol for vector aggregation on the unit ball $\ball^{d-1}_2$ in the one-message shuffle model with $\delta<\frac{1}{2}$. 
Then the mean squared error of $\calP$ satisfies $\MSE(\calP)=\Omega\left(dn^{d/(d+2)}\right)$.
\end{restatable}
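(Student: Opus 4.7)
The argument is a packing-plus-Fano reduction specialized to the single-message shuffle model. The novelty will be in bounding the mutual information between the hypothesis index and the shuffled multiset of randomizer outputs using the shuffle DP constraint (as opposed to a naive LDP reduction, which gives the wrong exponent).

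I would first construct a $\tau$-separated packing $\{v_1,\ldots,v_N\} \subset \ball^{d-1}_2$, where $\tau = c \cdot n^{-1/(d+2)}$ for a small constant $c$; standard volume estimates give $N = \Theta(\tau^{-d}) = \Theta(n^{d/(d+2)})$. For each $j \in [N]$ consider the homogeneous input $V_j = (v_j,\ldots,v_j)$, whose true aggregate is $n v_j$. A standard reduction (Markov's inequality plus nearest-codeword decoding) shows that any protocol with $\MSE(\calP) < (n\tau/8)^2 = \Omega(n^{d/(d+2)})$ must identify the correct $j$ from $\calP(V_j)$ with probability at least $3/4$ when $J \sim \Unif[N]$. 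Fano's inequality then forces $I(J;\calP(V_J)) \geq \tfrac{1}{2}\log N = \Omega(d \log n)$.

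The heart of the proof is the complementary upper bound on $I(J;\calP(V_J))$. Since $\calP$ is single-message, $\calP(V_j)$ is a function of the multiset $M_j$ of $n$ i.i.d.\ samples from the local randomizer $R(v_j)$, so by data-processing it suffices to bound $I(J;M_J)$ by the average pairwise KL divergence between the multiset distributions. The key ingredient is a privacy-blanket style decomposition of $R$: applying the $(\eps,\delta)$-DP of the shuffled output to single-user swaps, combined with the concentration of the other $n-1$ i.i.d.\ messages, forces every $R(v)$ to share a common dominant component, yielding $R(v) = (1-\gamma)\mu + \gamma p_v$ for a blanket $\mu$ independent of $v$ and small ``informative'' mass $\gamma = \gamma(\eps,\delta,n)$. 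This controls the pairwise Hellinger squared $H^2(R(v_j),R(v_{j'}))$, which after tensorization and Hellinger-to-KL conversion produces the desired nontrivial upper bound on $I(J;M_J)$.

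Matching the two bounds pins the critical packing radius at $\tau \asymp n^{-1/(d+2)}$, so $\MSE(\calP) = o(n^{d/(d+2)})$ would violate Fano; this yields the claimed $\Omega(d n^{d/(d+2)})$ lower bound after tracking the $d$-factor through the packing volume. The main obstacle is precisely the DP-to-Hellinger reduction in the middle step: naive group privacy applied to $V_j$ vs.\ $V_{j'}$ (which differ in all $n$ coordinates) blows up exponentially and yields only a vacuous bound, while a pure LDP argument on the randomizer recovers only the LDP rate and loses the crucial $d/(d+2)$ exponent. One must instead extract a Hellinger bound on the single-sample randomizer \emph{from} the single-user shuffle DP constraint, exploiting that the analyzer sees only the \emph{multiset} of outputs; it is this symmetry that effectively reduces identifiability by a $\sqrt{n}$ factor and converts the packing exponent from the LDP $n^{-1/2}$ to the shuffle $n^{-1/(d+2)}$.
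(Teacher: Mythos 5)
Your plan has a genuine gap, and it shows up already at the quantitative level. With homogeneous inputs $V_j=(v_j,\dots,v_j)$ and a packing radius $\tau = c\,n^{-1/(d+2)}$, the decoding threshold is $(n\tau/8)^2 = \Theta\bigl(n^{(2d+2)/(d+2)}\bigr)$, not $\Theta\bigl(n^{d/(d+2)}\bigr)$ (for $d=1$ this is $n^{4/3}$ versus the true rate $n^{1/3}$). So if your Fano step succeeded, it would prove $\MSE = \Omega\bigl(n^{(2d+2)/(d+2)}\bigr)$, which contradicts the achievable upper bound of \thmref{thm:one:msg:ub}. The conclusion is that the mutual-information upper bound you are counting on must be false: for an accurate single-message protocol the multiset distributions $M_j$ and $M_{j'}$ are in fact highly distinguishable, because the datasets $V_j$ and $V_{j'}$ differ in all $n$ coordinates and shuffle DP gives no protection against that change. (Separately, $\log N = \tfrac{d}{d+2}\log n + O(d)$, which is $O(\log n + d)$, not $\Omega(d\log n)$.) The structural claim at the heart of your plan --- that single-user shuffle DP forces $R(v)=(1-\gamma)\mu+\gamma p_v$ with a $v$-independent \emph{dominant} blanket $\mu$ --- is also not true: the optimal randomizer (\algref{alg:local:single}) is $(1-\gamma)\delta_{[v]}+\gamma\,\mathrm{Unif}$ with $\gamma = \tilde{O}(r^d/n)$, i.e., almost entirely informative, and its single-sample Hellinger separation is $\Theta(1)$ for inputs in different cells.

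What single-user swaps in the shuffle model actually force is much weaker and is exactly Lemma 4.5 of \cite{BalleBGN19} (\lemref{lem:prob:cases} in the paper): after discretizing the ball into $r^d$ cells, either the randomizer leaves its own cell with probability $\Omega(e^{-\eps})$, or it places mass at least $\frac{1}{n}(\frac12-\delta)$ on \emph{every} other cell. The paper's proof is then a direct accuracy-accounting argument rather than an indistinguishability argument: it first reduces to summation analyzers with bounded messages (\lemref{lem:lb:reduction}) and shows $\MSE \ge n\,\Ex{\|\calR(V)-V\|_2^2}$ (\lemref{lem:mse:scales}); then the mandatory $\Omega(1/n)$ blanket mass on each of the $r^d$ cells, each at squared distance $\Omega(d)$ from the input (\lemref{lem:atob}), costs $\Omega(d r^d)$ in total, while the alternative (leaving one's own cell) costs $\Omega(nd/r^2)$; balancing at $r = \Theta(n^{1/(d+2)})$ gives $\Omega(dn^{d/(d+2)})$. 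If you want to pursue a Fano route you would need a local (Assouad-type) construction with a much smaller separation $\tau \asymp \sqrt{d}\,n^{d/(2(d+2))-1}$ and a per-coordinate information bound derived from \lemref{lem:prob:cases}-type structure, not from a global blanket decomposition; as written, your middle step cannot be repaired.
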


\paragraph{Robustness to malicious users (\Cref{sec:rob}).}
We subsequently study the robustness of shuffle DP protocols to malicious users, who may distribute adversarial messages in an effort to induce the maximal possible mean squared error by a protocol. 

We first show that for additive protocols in the multi-message shuffle model, each malicious user can induce additive mean squared error up to $\Omega\left(\frac{d}{\log^2(nd)}\right)$, for a total of $\Omega\left(\frac{kd}{\log^2(nd)}\right)$ additive mean squared error across $k$ malicious users. 
More generally, we show the following result for the case of $s$ shufflers.
\begin{restatable}{theorem}{thmbaduserserr}
\thmlab{thm:bad:users:err}
Let $\eps=\O{1}$ and $\delta<\frac{1}{nd}$. 
Then any $(\eps,\delta)$-DP mechanism for vector summation in which $s$ shufflers take messages corresponding to a disjoint subset of the coordinates and returns the sum of the messages across $n$ players with $k$ malicious users has additive error mean squared error $\Omega\left(\frac{kd}{s\log^2(nd)}\right)$. 
\end{restatable}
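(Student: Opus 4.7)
The plan is to decompose the MSE contribution across the $s$ shufflers and then lower-bound the error each malicious user can inject at a single shuffler via a DP-based anti-concentration argument.

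Since the $s$ shufflers operate on disjoint subsets of coordinates, the output vector decomposes as the concatenation of $s$ partial sums, and the MSE is additive over shufflers. It therefore suffices to show that at any fixed shuffler handling $d'\ge d/s$ coordinates, each malicious user can induce squared $\ell_2$ error $\Omega(d/(s\log^2(nd)))$ in that shuffler's output; summing over $k$ malicious users who shift in orthogonal (or independent) directions then gives the claimed total additive MSE.

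Fix one shuffler and restrict attention to its $d'$-dimensional summation sub-problem. The key step is the following anti-concentration lemma: in any $(\eps,\delta)$-DP summation protocol on $\ball^{d'-1}_2$ with $\eps=O(1)$ and $\delta<1/(nd)$, each honest user's message-sum has support of $\ell_2$ radius $\Omega(\sqrt{d'}/\log(nd))$. The intuition is that the aggregate output noise (sum of messages minus sum of inputs) must be $(\eps,\delta)$-smooth under unit-norm shifts in every direction of $\reals^{d'}$, and the small-$\delta$ regime $\delta<1/(nd)$ forces this noise to have tails of width at least $\Omega(\log(nd))$ in each coordinate simultaneously. Distributing this width across the $n$ per-user contributions (via a per-direction variance argument applied in all $d'$ coordinate directions) then yields a per-user message-sum spread of $\Omega(\sqrt{d'}/\log(nd))$ in $\ell_2$. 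A malicious user then emits messages whose sum lies at the boundary of this support in a chosen axis direction, shifting the shuffler's output by the corresponding magnitude and producing squared error $\Omega(d'/\log^2(nd)) = \Omega(d/(s\log^2(nd)))$.

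The main obstacle is establishing the anti-concentration lemma with the correct $\sqrt{d'}/\log(nd)$ scaling. A naive variance-budget argument over $n$ users yields only a $\sqrt{1/n}$ per-user factor, which is too weak; the $1/\log(nd)$ factor arises only because the very small $\delta<1/(nd)$ hypothesis forces Gaussian/Laplace-type tails of width $\Omega(\log(nd))$ in any direction, and the $\sqrt{d'}$ factor requires arguing that this spread holds simultaneously in $d'$ orthogonal directions (rather than only one). Carefully combining a coordinate-wise 1D DP lower bound on the aggregate noise with the observation that the protocol is simultaneously DP along every projection, while accounting for the multi-message structure, is what supplies both factors. Once this lemma is in hand, the claim follows by the reduction to a single shuffler and summing the $k$ malicious users' additive contributions to the squared $\ell_2$ error.
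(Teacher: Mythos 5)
Your decomposition across the $s$ shufflers matches the paper (which applies its single-shuffler attack to each block of $d_i$ coordinates and then uses a power-mean inequality over $d_1+\dots+d_s=d$), but the per-shuffler argument has a genuine gap. The central ``anti-concentration lemma'' --- that $(\eps,\delta)$-DP forces each \emph{honest user's} message-sum to have spread $\Omega(\sqrt{d'}/\log(nd))$ --- is not established by your sketch and is false under the natural reading. DP constrains the \emph{aggregate} noise over all $n$ users; since in the shuffle model the users run i.i.d.\ randomizers, the per-user share of that noise carries an unavoidable $1/\sqrt{n}$ factor. Indeed, the paper's own optimal protocol (\Cref{thm:mult:msg:err:bound}) has per-user error of order $\sqrt{d\log(1/\delta)}/(\eps\sqrt{n})$ in $\ell_2$, which is far below $\sqrt{d}/\log(nd)$ for large $n$. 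The small-$\delta$ hypothesis gives aggregate tails of width $\sqrt{\log(1/\delta)}$ per direction, not $\log(nd)$ per user, so the fix you gesture at does not remove the $n$-dependence. Consequently an attack that merely places the malicious user's contribution ``at the boundary of the honest support'' cannot reach the claimed $\Omega(d/(s\log^2(nd)))$ per user.

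The paper's route is different and is worth internalizing: it is a dichotomy driven by \emph{accuracy}, with DP playing only a secondary role. Either (i) with probability at least $1/(dn^2)$ some \emph{individual} message output by $\calR(v)$ has norm at least $\frac{1}{\sqrt{d}\,\alpha}$ with $\alpha=\Theta(\sqrt{\log nd})$ --- in which case a malicious user resamples $\calR$ about $dn^2$ times to find such a message and \emph{replays it $d$ times}, amplifying its effect to $\ell_2$ norm $\sqrt{d}/\alpha$ and hence squared error $\Omega(d/\log(nd))$ per user; or (ii) all messages are small in norm and, by DP, nearly uncorrelated with the input (otherwise one could distinguish $(v,v,\dots,v)$ from $(u,v,\dots,v)$), whence $\langle\sum_i m_i, v\rangle\le 1/2$ and the protocol already incurs error $\Omega(n)$ on the all-$v$ dataset. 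The replay amplification in case (i) --- exploiting the multi-message budget to send one rare, large, but \emph{legitimate} message $d$ times --- is the ingredient your attack lacks, and it is exactly what makes the bound independent of $n$. To repair your proof you would need to replace the anti-concentration lemma with this accuracy-versus-large-message dichotomy (or an equivalent), and have the adversary exploit message replay rather than the honest support radius.
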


On the other hand, we show that our protocol is robust to malicious users when multiple shufflers exist: in this case, $k$ malicious users can only induce error $\O{k}$, rather than $\Omega(kd)$. 
Since the input of each user is a vector with at most unit length, then our result essentially says that a malicious user can at most hide its input vector by generating the protocol for a different vector. 
By comparison, each malicious user in the context of \thmref{thm:bad:users:err} can be responsible for error $\Omega\left(\frac{d}{\log^2(nd)}\right)$, which can be significantly larger than the unit length of each input vector. 

%\kt{Discussion on robustness and single vs. multiple shufflers to go at the appropriate place. To be cleaned up.}

Thus our results show that a large class of accurate protocols in the multi-message shuffle model are inherently non-robust. While the multi-shuffler model can allow for better robustness, it comes at a significant additional cost. We remark that an trusted {\em aggregator} such as one built on top of PRIO~\cite{Corrigan-GibbsB17} can ensure high robustness as well as low overhead (c.f.~\cite{RothblumOCT23}). While it is more complex to implement a trusted aggregator (compared to a shuffler), our results point to an important reason why a shuffler may not be sufficient when robustness is a concern.

\subsection{Related Work}
\label{sec:related}
Mean estimation is a fundamental problem for data analytics and is the building block for many algorithms in stochastic optimization such as stochastic gradient descent. 
As a result, privacy-preserving frequency estimation has been extensively studied in applications of federated learning. 

\paragraph{Real summation in the shuffle model.}
There has also been a line of work studying real summation, i.e., vector summation with $d=1$, in the shuffle model. 
In the single-message shuffle model, \cite{BalleBGN19} showed that the optimal additive error is $\tilde{\Theta}_{\eps}(n^{1/6})$, whereas in the multi-message shuffle model, there exist protocols that achieve additive error $\O{\frac{1}{\eps}}$~\cite{BalleBGN20,GhaziMPV20,GhaziKMPS21}. 
In particular, \cite{BalleBGN20,GhaziMPV20} use the split-and-mix protocol of \cite{IshaiKOS06} to achieve additive error $\O{\frac{1}{\eps}}$, though at the cost of using at least $3$ messages, each of length at least $\frac{\log\frac{1}{\delta}}{\log n}$. 
Subsequently, the protocol of \cite{GhaziKMPS21} achieves near-optimal error, while using only $1+o(1)$ messages per user in expectation.

\paragraph{Lower bounds for the multi-message shuffle model.} For the problem of mean estimation, existing work does not have any lower bounds in the multi-shuffle model. However, for other problems such as private selection or parity learning, several recent papers have demonstrated new lower bounds for the multi-message model~\cite{cheu2021limits,chen2020distributed,beimel2020round}. More precisely, \cite{cheu2021limits} proved new lower bounds of $\Omega(\sqrt{D})$ on the sample complexity of selection from $D$ candidates (and other learning problems) under the pan-privacy model, which implies lower bounds for the shuffle model. However, their results do not extend to our setting as high-dimensional mean estimation is not difficult in the pan-private model and thus do not translate to strong lower bounds for privacy in the shuffle model. Moreover, \cite{chen2020distributed} proved lower bounds of $D/k$ for private selection for the multi-message model with $k$ messages. Finally, \cite{beimel2020round} consider the the common element problem (which aims to identify an element that is common to all users) and prove non-trivial lower bounds for the multi-message model when $k$ is small. However, these lower bounds are different from ours in two distinct ways: first, none of them hold for the problem of high-dimensional mean estimation, and secondly, they do not exhibit the same phase transition behavior that our lower bounds show, where an optimal rate is achieved only when $k \ge d$. 

% \paragraph{Sparse vector aggregation in the LDP model.}
% Initial works first considered the LDP setting where $d$ is small, so that permitting $d$ communication per user would still be relatively feasible. 
% RAPPOR~\cite{ErlingssonPK14} uses one-hot encoding to perform coordinate-wise transformation and followed by randomized response~\cite{warner1965randomized} on each user's vector input, so that each user incurs at least communication cost $d$. 
% A number of subsequent works~\cite{FantiPE16,YeB18,AcharyaS19,Zhou0CFS22} considered improving the communication cost in the setting where the input vector to each user still has dimension $d$, but each vector may only have $k$ nonzero coordinates. 
% In particular, \cite{BassilyS15,WangBLJ17,AcharyaSZ19,BunNS19,BassilyNST20,ChenKO20,FeldmanTa21,ChenKO23} achieved optimal additive error for $1$-sparse vectors in the LDP model, using communication cost logarithmic in the dimension $d$.  \ha{not sure if this paragraph is really needed?}

\paragraph{Mean estimation in the LDP model.}
\cite{DuchiR19,DuchiWJ16} studied the vector mean estimation problem in the LDP model, showing how to achieve optimal error without accounting for any communication constraints. 
\cite{BhowmickDFKR18} developed a new algorithm, PrivUnit, and proved it is optimal up to constants, and~\cite{AsiFeTa22} show that PrivUnit with optimized parameters is the optimal mechanism. More recently, there have been several works that study LDP aggregation with low communication cost such as~\cite{ChenKO20,FeldmanTa21,AsiFeNeNgTa23}. Another line of work considers improving the communication cost in the setting where the input vector of each user is $k$-sparse~\cite{BassilyS15,FantiPE16,YeB18,AcharyaS19,Zhou0CFS22}.

\subsection{Preliminaries and problem setting}
\label{sec:pre}
\paragraph{Notation.}
We let $\sphere^{d-1} = \{ v \in \reals^d : \ltwo{v} = 1 \}$ denote the $d$-dimensional sphere. For a set $S \subseteq \reals^d$ and a vector $v \in \reals^d$, define $\dist(v,S) = \inf_{u \in S} \ltwo{v-u}^2$. 
Let $\ball^{d-1} = \{ v \in \reals^d: \ltwo{v} \le 1\}$ be the unit ball in $d$ dimensions. 

We recall the standard definition of differential privacy.
\begin{definition}[Differential privacy]
\deflab{def:dp}
\cite{DworkMNS06}
Given a privacy parameter $\eps>0$ and a failure parameter $\delta\in(0,1)$, a randomized algorithm $\calA:\calD\to\calR$ is $(\eps,\delta)$-differentially private if, for every neighboring datasets $D,D'\in\calD$ and for all $U\subseteq\calR$,
\[\PPr{\calA(D)\in U}\le e^{\eps}\cdot\PPr{\calA(D')\in U}+\delta.\]
\end{definition}
We require the standard advanced composition of differential privacy. 
\begin{theorem}[Advanced composition of differential privacy~{\cite{DworkR14}}]
\thmlab{thm:dp:adv:comp}
Let $\eps,\delta\ge0$ and $\delta'>0$. 
The  composition  of $k$ algorithms that are each $(\eps,\delta)$-differentially private is itself $(\tilde{\eps},\tilde{\delta})$-differentially private, where
\[\tilde{\eps}=\eps\sqrt{2k\ln(1/\delta')}+k\eps\left(\frac{e^\eps-1}{e^{\eps}+1}\right),\qquad \tilde{\delta}=k\delta+\delta'.\]
\end{theorem}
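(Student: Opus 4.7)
The plan is to prove this via the standard privacy-loss random variable argument of Dwork--Rothblum--Vadhan. For two distributions $P, Q$ on the output space of a mechanism, define the privacy loss random variable $L(o) = \ln\bigl(P(o)/Q(o)\bigr)$ when $o \sim P$. The first step is to reduce the statement that $\calA$ is $(\eps,\delta)$-DP to a concentration statement about $L$: up to an event of probability at most $\delta$, one has $\abs{L} \le \eps$ pointwise, and moreover the expected value satisfies $\E[L] \le \eps \cdot \frac{e^\eps - 1}{e^\eps + 1}$. The first fact is immediate by stripping off the $\delta$ failure set; the second is a short calculation using that for any event $S$, $P(S) \le e^\eps Q(S)$ and $Q(S) \le e^\eps P(S)$, so the likelihood ratio is bounded in $[e^{-\eps}, e^\eps]$, and then optimizing $\E[\ln(P/Q)]$ over distributions supported on this interval reduces to a two-point distribution whose KL gives exactly $\eps \cdot \tanh(\eps/2) = \eps\cdot\frac{e^\eps-1}{e^\eps+1}$.

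Next, I would argue that under adaptive composition of $k$ mechanisms, the privacy losses $L_1, \ldots, L_k$ (defined with respect to a coupling between the two neighboring executions) form a supermartingale difference sequence with bounded increments. Conditioning on the previous outputs, each $L_i$ satisfies $\E[L_i \mid L_{<i}] \le \eps \cdot \frac{e^\eps-1}{e^\eps+1}$ and $\abs{L_i} \le \eps$ outside of a $\delta$-probability event.

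The main technical step is then to apply Azuma--Hoeffding to $\sum_{i=1}^k \bigl(L_i - \E[L_i \mid L_{<i}]\bigr)$ to conclude that, with probability at least $1-\delta'$,
\[
\sum_{i=1}^k L_i \;\le\; k\eps\cdot\frac{e^\eps-1}{e^\eps+1} + \eps\sqrt{2k \ln(1/\delta')} \;=\; \tilde\eps.
\]
Taking a union bound with the $k$ bad events of probability $\delta$ (one per mechanism, where the pointwise $\eps$ bound might fail) gives total failure probability at most $k\delta + \delta' = \tilde\delta$. On the complementary good event the composed likelihood ratio is bounded by $e^{\tilde\eps}$, which yields the desired $(\tilde\eps,\tilde\delta)$-DP guarantee by the converse of the first step.

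The main obstacle is really just the clean bookkeeping: writing down the right coupling so that the $L_i$ are honestly a martingale-difference sequence adapted to the transcript filtration (this is where adaptivity of the composition matters), and carefully handling the $\delta$-probability events so that the union bound still leaves a clean Azuma application on the remaining good set. Everything else is arithmetic. Since the statement is a direct restatement of Theorem 3.20 of Dwork--Roth, I would in practice just cite that reference rather than redo the calculation.
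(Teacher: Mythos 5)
Your proposal is the standard Dwork--Rothblum--Vadhan privacy-loss-martingale proof, which is exactly the argument behind the cited Theorem 3.20 of Dwork--Roth; the paper itself gives no proof and simply cites \cite{DworkR14}, so your plan to defer to that reference matches the paper's treatment. The one step you gloss over --- that $(\eps,\delta)$-DP yields a pointwise bound $\abs{L}\le\eps$ off a $\delta$-probability event --- is not literally ``immediate'' from the definition but is supplied by the standard characterization lemma in that same reference, so the sketch is sound.
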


\paragraph{Shuffle differential privacy.} 
In the shuffle DP model, we have $n$ users, each holding a vector $v_i \in \sphere^{d-1}$. A protocol in this model is a pair of procedures $(\calA,\calR)$ where $\calR: \sphere^{d-1} \to \calZ^k$ is a local randomizer that each user applies to produce $k$ messages in $\calZ$. 
Then, a shuffler $\Pi$ is applied to all messages output by the users, before applying an aggregation $\calA: \calZ^\star \to \reals^d$ over the shuffled messages to return an output 
\begin{equation*}
\hat v = \calA(\Pi(\calR(v_1),\dots,\calR(v_n)).
\end{equation*}
We say that a protocol $(\calA,\calR)$ is \emph{$(\eps,\delta)$-Shuffle DP} if the algorithm that outputs $\Pi(\calR(v_1),\dots,\calR(v_n))$ is $(\eps,\delta)$-DP. 
Moreover, we define the mean squared error $\Err(\calA,\calR)$ of the protocol to be
\[\Err(\calA,\calR) \defeq \sup_{v_1,\dots,v_n \in \sphere^{d-1}} \E \left[\norm{\calA(\Pi(\calR(v_1),\dots,\calR(v_n))) - \sum_{i=1}^n v_i}_2^2 \right].\]
%\begin{equation*}
%    \Err(\calA,\calR) \defeq \sup_{v_1,\dots,v_n \in \sphere^{d-1}} \E \left[\norm{\calA(\Pi(\calR(v_1),\calR(v_2),\dots,\calR(v_n))) - \sum_{i=1}^n v_i}_2^2 \right]
%\end{equation*}

Throughout the paper, we use the notion of \emph{unbiased} and \emph{summation} protocols. More specifically, we say that a protocol $(\calA,\calR)$ is unbiased if for all $v_1,\dots,v_n \in \sphere^{d-1}$
\begin{equation*}
\E[\calA(\Pi(\calR(v_1),\dots,\calR(v_n))] = \sum_{i=1}^n v_i.
\end{equation*}
We also let $\calA^+$ denote the summation aggregation, that is, 
\begin{equation*}
    \calA^+(\Pi(\calR(v_1),\dots,\calR(v_n))) = \sum_{m \in \Pi(\calR(v_1),\dots,\calR(v_n)) } m.
\end{equation*}
These notions will be useful for our lower and upper bounds.

\subsubsection{Kashin representation}
\newcommand{\Ck}{\mathsf{C}_\mathsf{K}}
We use Kashin's representation in our multi-message algorithms, which has the following property.
\begin{lemma}[Kashin's representation]
\cite{LyubarskiiV10}
Let $d \ge 1$.
There exists a transformation $U_{\mathsf{K}} \in \reals^{2d \times d}$ and a constant $\Ck$ such that 
\begin{enumerate}
\item 
$U_{\mathsf{K}}^T U_{\mathsf{K}}= I_d$
\item
For all $x \in \sphere^{d-1}$, $\|U_{\mathsf{K}} x\|_\infty\le\frac{\Ck}{\sqrt{d}}$. 
\end{enumerate}
Here we use the subscript $\mathsf{K}$ simply to denote Kashin's representation. 
We call the matrix $U_{\mathsf{K}}$ the Kashin transformation. 
\end{lemma}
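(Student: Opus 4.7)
The plan is to establish existence of $U_{\mathsf{K}}$ via the probabilistic method. Sample $U_{\mathsf{K}} \in \reals^{2d \times d}$ by taking the first $d$ columns of a Haar-random orthogonal matrix in $O(2d)$; this makes $U_{\mathsf{K}}^T U_{\mathsf{K}} = I_d$ automatic, so the remaining work is to verify the coordinate-flatness condition with positive probability and then conclude the existence of a deterministic choice.

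First I would establish a pointwise tail. For any fixed $x \in \sphere^{d-1}$, rotational invariance of the Haar measure implies that $y \defeq U_{\mathsf{K}} x$ is uniformly distributed on $\sphere^{2d-1}$. Viewing $y$ as $Z/\ltwo{Z}$ for $Z \sim N(0, I_{2d})$, standard sphere concentration gives $\Pr[|y_i| \geq t] \leq 2 e^{-c d t^2}$ for an absolute constant $c > 0$, and a union bound over the $2d$ coordinates yields
\begin{equation*}
\Pr\bigl[\linf{U_{\mathsf{K}} x} \geq t\bigr] \leq 4d\, e^{-c d t^2}.
\end{equation*}

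To lift this pointwise control to a uniform bound over all $x \in \sphere^{d-1}$, I would combine an $\epsilon$-net argument with a Lipschitz extension: fix an $\epsilon$-net $\mathcal{N} \subseteq \sphere^{d-1}$ of cardinality at most $(3/\epsilon)^d$, apply the pointwise tail at threshold $t = \Ck/\sqrt{d}$ and union-bound over $\mathcal{N}$, and extend from the net to the whole sphere by noting that $x \mapsto \linf{U_{\mathsf{K}} x}$ is $1$-Lipschitz in $\ltwo{\cdot}$; iterating this extension absorbs the net approximation error into the final constant $\Ck$.

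The main obstacle is the tight interplay between the net cardinality $(3/\epsilon)^d$ and the per-point tail $e^{-c d t^2}$: a naive union bound only permits $t$ to be a dimension-free constant rather than $\Theta(1/\sqrt{d})$, and closing this gap is the technical heart of Kashin's original argument as well as of the Lyubarskii--Vershynin refinement. One route is to replace the crude union bound over $\mathcal{N}$ by a sharper chaining or Gaussian comparison estimate (for instance Gordon's escape-through-a-mesh theorem) that directly controls $\E\,\sup_{x \in \sphere^{d-1}} \linf{U_{\mathsf{K}} x}$. An alternative, more constructive route---the one actually pursued by Lyubarskii and Vershynin---is to build $U_{\mathsf{K}}$ deterministically from products of random $\pm 1$ diagonal matrices and Walsh--Hadamard or Fourier matrices, deriving the $\ell_\infty$ bound via restricted-isometry estimates; this has the additional benefit of yielding a near-linear-time algorithm for computing $U_{\mathsf{K}} x$, which is important for the downstream shuffle-DP protocol.
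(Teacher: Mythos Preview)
The paper does not prove this lemma; it is stated with a citation to \cite{LyubarskiiV10} and used as a black box, so there is no in-paper argument to compare your sketch against.

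On your sketch itself: you correctly flag that the naive $\epsilon$-net union bound only yields $\sup_{x\in\sphere^{d-1}}\|U_{\mathsf K}x\|_\infty \le C$ for a dimension-free $C$, not $C/\sqrt{d}$. But this is not a weakness of the method---it is the truth. For \emph{any} $U_{\mathsf K}\in\reals^{2d\times d}$ with $U_{\mathsf K}^T U_{\mathsf K}=I_d$, writing $r_1,\dots,r_{2d}\in\reals^d$ for its rows, one has $\sup_{\|x\|_2=1}\|U_{\mathsf K}x\|_\infty=\max_i\|r_i\|_2$, while $\sum_i\|r_i\|_2^2=\mathrm{tr}(U_{\mathsf K}^T U_{\mathsf K})=d$ forces $\max_i\|r_i\|_2\ge 1/\sqrt{2}$. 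Hence no linear $U_{\mathsf K}$---random, chained, or Gordon-controlled---can achieve $\|U_{\mathsf K}x\|_\infty\le \Ck/\sqrt d$ uniformly with $\Ck$ independent of $d$; the gap you identified is not a technicality to be closed.

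What Lyubarskii--Vershynin actually establish is that for a suitable tight frame (the rows of such a $U_{\mathsf K}$), every $x\in\sphere^{d-1}$ admits some coefficient vector $a\in\reals^{2d}$ with $U_{\mathsf K}^T a = x$ and $\|a\|_\infty \le \Ck/\sqrt{d}$; the map $x\mapsto a$ is necessarily nonlinear. The lemma as written conflates this Kashin representation with the linear analysis map $x\mapsto U_{\mathsf K}x$. For the downstream shuffle-DP protocol only the encode/decode pair $v\mapsto a(v)$ and $a\mapsto U_{\mathsf K}^T a$ is needed, so the application survives once the statement is read in this corrected form.
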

We remark that Kashin's representation was first used for locally-private mean estimation in~\cite{FeldmanGV21}.

% \begin{lemma}[Kashin's representation]
% \cite{LyubarskiiV10}
% There exists an efficient transformation that maps $v^{(1)},\ldots,v^{(n)}$ to $w^{(1)},\ldots,w^{(n)}$, so that: \ha{shouldnt $w$ have dimension $2d$?}
% \begin{enumerate}
% \item 
% For all $i\in[n]$, $w^{(i)}\in\mathbb{R}^m$ for $m=\O{d\log n}$.
% \item 
% There exists a matrix $\bA\in\mathbb{R}^{m\times n}$ such that $\bv^{(i)}=\bA\bw^{(i)}$ for all $i\in[n]$.
% \item
% $\|\bv^{(i)}\|_2=\|\bw^{(i)}\|_2$ for all $i\in[n]$.
% \item
% For all $i\in[n]$, $\|w^{(i)}\|_\infty\le\frac{1}{\sqrt{d}}$. 
% \end{enumerate}
% \end{lemma}
%Thus for our purposes, it suffices to assume that each user works with Kashin's representation for their input, so that $\|w^{(i)}\|_\infty\le\frac{1}{\sqrt{d}}$. 

%\section*{Setup 3: Arbitrary number of messages}
\section{Multiple Messages}
\label{sec:multi-msg}
In this section, we study algorithms for vector aggregation in the shuffle model of privacy when each user is permitted to send multiple messages. 
In particular, we study the number of messages that each user should send so that the resulting protocol can achieve the same mean squared error as the optimal mechanism in the central setting of DP. 
We first show a lower bound for the number of messages that must be sent per user to achieve the best possible error while guaranteeing DP. 
We then give an algorithm with matching number of messages per user, while achieving the best possible error for DP protocols. 
\subsection{\texorpdfstring{$\tilde \Omega({\min(n \eps^2,d)})$}{Omega(min(eps2n,d))} messages are necessary}
In this section, we prove that any unbiased shuffle DP protocol that obtains optimal error must send at least $k \ge \Omega\left( \frac{\min(n \eps^2,d)}{\log n} \right)$ messages per user. We prove this lower bound for summation protocols in~\Cref{sec:agg-summation} and for any unbiased protocol in~\Cref{sec:agg-unbiased}.

In our setting, we have $n$ users with inputs $v_1,\dots,v_n \in \reals^d$ where $\norm{v_i}_2 \le 1$. Each user applies a local randomizer $\calR(v_i)$ which sends $k$ messages, $\calR(v_i) = (m_i^1,\dots,m_i^k)$, then an aggregation protocol $\calA$ is applied over the shuffled messages, producing an output $\hat \mu = \calA(\Pi(\calR(v_1),\calR(v_2),\dots,\calR(v_n)))$, where $\Pi$ is the shuffling operation.

\subsubsection{Lower bound for summation protocols}
\label{sec:agg-summation}
We begin by proving the lower bound for summation protocols where $\calA^+(m_1,\dots,m_{nk}) = \sum_{i=1}^{nk} m_i$. Throughout this section, we assume that the aggregation protocol $\calA = \calA^+$ and that $\calR: \sphere^{d-1} \to \calZ^k$ where $\calZ = \reals^d$.

\begin{restatable}{theorem}{thmmainlb}
\label{thm:main-lb}
    Let $\eps,\delta \le 1$ and  $\calR: \sphere^{d-1} \to \calZ^k $ be an $(\eps,\delta)$-Shuffle DP randomizer. If $\Err(\calA^+,\calR) \le \O{d/\eps^2}$ then $k \ge \Omega\left( \frac{\min(n \eps^2,d)}{\log n} \right)$.    
\end{restatable}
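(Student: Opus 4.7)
The plan is to construct a hypothesis-testing attack against the shuffled multiset under a hard pair of neighboring inputs, and show that unless $k$ is large enough, the attack contradicts the $(\eps,\delta)$-DP guarantee.

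First, I would extract structural information about the per-message distributions from the MSE bound. Let $s_i := \sum_{j=1}^k m_i^j$, so that $\hat\mu = \calA^+(\Pi(\calR(v_1),\ldots,\calR(v_n))) = \sum_i s_i$. Decomposing the MSE as squared bias plus variance yields $\ltwo{\sum_i \E[s_i] - \sum_i v_i}^2 \le \O{d/\eps^2}$ and, by independence across users, $\sum_i \Var(s_i) = \Var(\hat\mu) \le \O{d/\eps^2}$. Since the MSE is a supremum over all inputs, a single perturbation $v_i \to v_i'$ cannot inflate the bias, so $\E[s_i]$ must be close to $v_i$ in a suitable averaged sense. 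Distributing the budget across the $k$ messages per user then shows that each individual message has mean close to $v_i/k$ and per-coordinate second moment of order $\O{1/(nk\eps^2)}$ on average over users and messages.

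Next I would fix two neighboring inputs, $v_1 = u$ for a unit vector $u$ versus $v_1 = 0$, with $v_j = 0$ for $j > 1$ in both. I would construct a test $T$ on the shuffled multiset whose expected value is shifted by the cumulative signal coming from user $1$'s $k$ messages, while its variance is controlled by the dispersion of the other $(n-1)k$ messages. Bounding the distinguishing advantage of $T$ by $\eps + \delta$ forces a lower bound on $k$. A naive linear projection test onto $u$ only yields $k \ge \Omega(n\eps^2/d)$; to reach the sharper $\min(n\eps^2,d)$ the test must combine (a) the amplification-by-shuffling budget (driving the $n\eps^2$ regime) with (b) the $d$-dimensional message spread (driving the $d$ regime). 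A natural candidate is a spectral statistic on the empirical second-moment matrix of the shuffled messages, whose expectation picks up a rank-one perturbation of magnitude $1/k$ in the direction $uu^\top$, with fluctuations controlled by standard random-matrix concentration. The $\log n$ factor should then enter when converting the $(\eps,\delta)$-DP guarantee into a high-probability distinguishability bound by conditioning on a typical event of the shuffled multiset.

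The main obstacle will be twofold. First, I need to pass rigorously from the aggregate MSE to per-message moment bounds, ruling out pathological heavy-tailed or highly correlated message distributions that could concentrate all the variance in a handful of outlier messages (a purely second-moment argument may not suffice, and may need to be combined with a truncation step). Second, I need a test that attains the tight $\min(n\eps^2,d)$ rather than the crude $n\eps^2/d$ of any first-order linear statistic; this likely requires either averaging over many adversarial directions $u$ (for instance $u$ uniform on $\sphere^{d-1}$ or drawn from the coordinate basis) or aggregating several weaker distinguishing statistics via a data-processing-type argument. Cleanly absorbing the $\delta$-component of approximate DP without losing the correct logarithmic factor is also delicate.
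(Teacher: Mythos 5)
There is a genuine gap here, and it is the central idea of the argument. Your plan is built around a \emph{binary} hypothesis test between two neighboring datasets ($v_1=u$ versus $v_1=0$), with the DP guarantee bounding the distinguishing advantage of a single scalar statistic. A test against one alternative can only ever certify that user $1$'s input cannot be pinned down to one of \emph{two} candidates; the $e^{\eps}$ ratio plus $\delta$ gives a constant-factor indistinguishability, which cannot by itself produce a lower bound that grows linearly in $d$. The paper's proof instead runs a \emph{reconstruction-versus-packing} argument: an adversary enumerates all $\binom{nk}{k}\le (en)^k$ subsets of $k$ messages from the shuffled multiset and sums each one; accuracy (after a symmetrization step reducing to the per-user error $\E\|\calR^+(v_1)-v_1\|_2^2\le d/(n\eps^2)$) guarantees this candidate set contains a point within $1/10$ of $v_1$, while privacy forces the candidate set for dataset $(v_1,v_1,\dots)$ to also ``cover'' every $v_i$ in a $2^{\Omega(d)}$-packing of the sphere with constant probability. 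Comparing the cardinality $(en)^k$ against $2^{\Omega(d)}$ is exactly what yields $k\gtrsim d/\log n$; the $\log n$ comes from $\log\binom{nk}{k}$, not from a concentration/conditioning step as you suggest. No single test statistic---linear, spectral, or otherwise---can replace this counting over exponentially many alternatives, so the route you propose for the $d$-regime is structurally a dead end, and you yourself flag its two hardest steps as unresolved obstacles.

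Two further points. First, your step deriving per-message moment bounds (``each individual message has mean close to $v_i/k$'') does not follow from the MSE bound and is false for natural protocols (e.g.\ split-and-mix, where each message is marginally near-uniform and only the $k$-wise sum carries signal); the paper avoids needing any per-message structure precisely because its attack only ever forms sums of $k$-subsets. Second, the regime $d\ge n\eps^2$, which supplies the $n\eps^2/\log n$ branch of the bound, is handled in the paper by an explicit dimension-reduction (embedding a $d'=\Theta(n\eps^2)$-dimensional problem via a Kashin representation and coordinate symmetrization, Proposition~\ref{prop:reduc-small-to-large-dim}) followed by the same packing argument; your proposal leaves this regime to an undeveloped ``amplification-by-shuffling'' heuristic.
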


Towards proving this result, we first prove the following symmetry property that is satisfied by an optimal summation protocol. For a randomizer $\calR$, let $\calR^+(v_i)$ denote the summation of all messages in $\calR(v_i)$, that is, $\calR^+(v_i) = \sum_{j=1}^k \calR(v_i)_j$. We defer the proof to~\Cref{apdx:proof-struc-opt}.
\begin{restatable}{lemma}{lemmastrucopt}
\label{lemma:struc-opt}
    Let $\eps \le 1$, $\calR: \sphere^{d-1} \to \calZ^k $ be  $(\eps,\delta)$-Shuffle DP. There exists an $(\eps,\delta)$-Shuffle DP randomizer $\hat \calR: \sphere^{d-1} \to \calZ^k $ such that 
    \begin{enumerate}
        %\item $\hat  \calR$ is $(\eps,\delta)$-Shuffle DP
        \item $\Err(\calA^+,\hat \calR) \le \Err(\calA^+,\calR)$
        %\item $\hat \calR(v)$ has the same distribution as $-\hat \calR(-v)$
        \item (Symmetry) For all $u, v \in \sphere^{d-1}$, 
        \begin{equation*}
            \Ex{\norm{\hat \calR^+(v) - v}_2^2} = \Ex{\norm{\hat \calR^+(u) - u}_2^2}
        \end{equation*}   
    \end{enumerate}
\end{restatable}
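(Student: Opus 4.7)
The plan is to construct $\hat\calR$ by rotational symmetrization of $\calR$: sample a uniformly random rotation $R \in O(d)$ (shared across all users as public randomness, as is standard in the shuffle DP model) and set
\[
\hat\calR(v) \defeq \bigl(R^T m_1, \ldots, R^T m_k\bigr), \qquad (m_1, \ldots, m_k) \defeq \calR(Rv),
\]
so that $\hat\calR^+(v) = R^T \calR^+(Rv)$. The differential privacy of $\hat\calR$ would then follow from a post-processing argument: conditional on any fixed $R$, the shuffled output $\Pi(\hat\calR(v_1), \ldots, \hat\calR(v_n))$ is obtained by applying $R^T$ coordinate-wise to $\Pi(\calR(Rv_1), \ldots, \calR(Rv_n))$, and since neighboring inputs $v_i, v_i' \in \sphere^{d-1}$ are mapped to neighboring $Rv_i, Rv_i' \in \sphere^{d-1}$, the $(\eps,\delta)$-Shuffle DP guarantee of $\calR$ transfers to $\hat\calR$; marginalizing over $R$ (independent of inputs) preserves the guarantee.

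For the symmetry condition (2), I would compute, using $R^T R v = v$ and that $R^T$ is an $\ell_2$-isometry,
\[
\Ex{\norm{\hat\calR^+(v) - v}_2^2} \;=\; \EEx{R,\calR}{\norm{R^T\bigl(\calR^+(Rv) - Rv\bigr)}_2^2} \;=\; \EEx{R,\calR}{\norm{\calR^+(Rv) - Rv}_2^2} \;=\; \EEx{w \sim \Unif(\sphere^{d-1}),\,\calR}{\norm{\calR^+(w) - w}_2^2},
\]
where the last equality uses that $Rv$ is uniform on $\sphere^{d-1}$ whenever $v \in \sphere^{d-1}$; the right-hand side is manifestly independent of $v$. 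For the error bound (1), the key observation is that $R^T$ factors out of the whole aggregate:
\[
\sum_i \hat\calR^+(v_i) - \sum_i v_i \;=\; R^T\left(\sum_i \calR^+(Rv_i) - \sum_i Rv_i\right),
\]
so orthogonal invariance of $\norm{\cdot}_2$ gives
\[
\Ex{\norm{\textstyle\sum_i \hat\calR^+(v_i) - \sum_i v_i}_2^2} \;=\; \EEx{R,\calR}{\norm{\textstyle\sum_i \calR^+(Rv_i) - \sum_i Rv_i}_2^2} \;\le\; \Err(\calA^+, \calR),
\]
because for every realization of $R$ the vectors $Rv_1, \ldots, Rv_n$ all lie in $\sphere^{d-1}$, making the inner expectation at most the worst-case error of $\calR$ at some unit-norm input tuple. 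Taking the supremum over $v_1, \ldots, v_n \in \sphere^{d-1}$ yields (1).

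The main subtle point in this plan is the use of a single shared rotation $R$ across users; this is justified by standard public randomness in the shuffle DP model, since DP holds conditional on each realization of the public coin and is preserved by marginalization. A variant using per-user independent rotations $R_i$ would still give (2) directly, but establishing (1) would require analyzing cross-terms involving the bias $\EEx{R}{R^T b(Rv_i)}$ (where $b(u) = \Ex{\calR^+(u) - u}$), which do not cancel cleanly in general; the shared-rotation construction is therefore the cleanest route to obtaining both properties simultaneously.
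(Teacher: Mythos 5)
Your proposal is correct and matches the paper's proof essentially exactly: the paper also symmetrizes via a single shared Haar-random rotation $U$ (public randomness), sets $\hat\calR(v) = U^T\calR(Uv)$, argues privacy by noting that conjugation by a known $U$ preserves neighboring datasets, derives symmetry from orthogonal invariance of $\|\cdot\|_2$ together with the uniformity of $Uv$ on the sphere, and bounds the error by factoring $U^T$ out of the aggregate sum. Your closing remark about why a shared rotation (rather than per-user rotations) is needed for property (1) is a sound observation that the paper leaves implicit.
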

\begin{proof}(sketch)
    The new randomizer $\hat \calR$ works as follows: first, it samples a rotation matrix $ U \in \reals^{d \times d}$ (known public randomness) such that $U^T U = I$, then sets 
    \begin{equation*}
        \hat \calR(v) = U^T \calR(Uv),
    \end{equation*}
    where $ U^T \calR(Uv)$ denotes multiplying each message in $\calR(Uv)$ by $U^T$. The lemma then follows using standard algebraic manipulations (see~\Cref{apdx:proof-struc-opt} for full proof).
\end{proof}

The proof of the lower bound builds on the following reconstruction attack against summation protocols. The attack essentially iterates over all subsets of messages of size $k$ and adds their sum to the output set. We argue that if the protocol has small error (less than $n$), then the input vector will be close to a vector in the output set.
\begin{algorithm}[!htb]
\caption{Reconstruction attack against summation protocols}
\label{alg:atck-sum}
\begin{algorithmic}[1]
\REQUIRE{Shuffled set of messages $W = \{m_i\}_{i \in [nk]} \in \reals^d$}
\ENSURE{A set $S \subseteq \reals^d$ }
\STATE{Initialize $S  = \emptyset $}
\FOR{$t=1$ to $\binom{nk}{k}$}
\STATE{Pick a (new) set of $k$ messages from $W$; denote it by $W_t$}
\STATE{$S \gets S \cup \{\sum_{m \in W_t} m \}$}
\ENDFOR
\STATE{Return $S$}
\end{algorithmic}
\end{algorithm}

The following proposition states the guarantees of this reconstruction attack.
\begin{restatable}{proposition}{proprecattacksum}
\label{prop:rec-attck-sum}
	Let $v_1,\dots,v_n \in \sphere^{d-1}$, $\calR: \sphere^{d-1} \to \calZ^k$ be randomizer that satisfies the symmetry condition of~\Cref{lemma:struc-opt}. %such that $\Err(\calA^+,\calR) \le d/\eps^2$. 
    For an input set $W = \Pi(R(v_1),\dots,R(v_n))$, \Cref{alg:atck-sum} outputs a set $S \subset \reals^d$ of size $\binom{nk}{k}$ such that  
	\begin{equation*}
		\Ex{\dist(v_1,S)} = \Ex{\min_{u \in S} \ltwo{v_1 - u}^2} \le \frac{\Err(\calA^+,\calR)}{ n },
	\end{equation*}
	where the expectation is over the randomness of the algorithm.
\end{restatable}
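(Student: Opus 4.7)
The plan is to exploit the fact that the summation aggregator $\calA^+$ is invariant under shuffling: viewed as a multiset, the collection $\calR(v_1)$ of user~$1$'s $k$ original messages is itself one of the $\binom{nk}{k}$ size-$k$ submultisets that \algref{atck-sum} enumerates. Consequently $\calR^+(v_1) = \sum_{m \in \calR(v_1)} m$ is always one of the vectors placed in the output set~$S$, yielding the deterministic inequality $\min_{u \in S} \ltwo{v_1 - u}^2 \le \ltwo{v_1 - \calR^+(v_1)}^2$. After taking expectations this reduces the proposition to bounding $\Ex{\ltwo{v_1 - \calR^+(v_1)}^2}$.

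Next I would invoke the symmetry condition of \Cref{lemma:struc-opt}, which says precisely that $c \defeq \Ex{\ltwo{v - \calR^+(v)}^2}$ is the same constant for every $v \in \sphere^{d-1}$. In particular $\Ex{\ltwo{v_1 - \calR^+(v_1)}^2} = c$, so the whole proof boils down to the single inequality $c \le \Err(\calA^+,\calR)/n$.

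To obtain this I would lower-bound $\Err(\calA^+,\calR)$ by its value on the degenerate ``all-equal'' input where $v_i = v_1$ for every $i \in [n]$. Under this input the aggregator output is $\sum_{i=1}^n \calR^+(v_1)$ (message ordering is immaterial), and because the per-user randomizers are independent, the vectors $X_i \defeq \calR^+(v_1) - v_1$ are i.i.d.\ with $\Ex{\ltwo{X_i}^2} = c$ and some common mean $\mu \defeq \Ex{X_i}$. Expanding $\Ex{\ltwo{\sum_i X_i}^2}$ into diagonal and off-diagonal terms gives $nc + n(n-1)\ltwo{\mu}^2 \ge nc$, and combining this with $\Err(\calA^+,\calR) \ge \Ex{\ltwo{\sum_i X_i}^2}$ produces $c \le \Err(\calA^+,\calR)/n$, finishing the argument.

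The only subtlety to check is the combinatorial claim in the first step---that user~$1$'s message tuple really is one of the enumerated size-$k$ subsets of $W$. This is pure bookkeeping: we simply tag each message with a (hidden) user-identifier so that the $k$ messages generated by user~$1$ constitute a specific, well-defined element of the multiset $W$ which the exhaustive enumeration must visit. Everything else in the proof is a direct consequence of symmetry and independence, so no heavy computation is required.
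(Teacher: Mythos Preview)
Your proposal is correct and follows essentially the same approach as the paper: both arguments observe that $\calR^+(v_1)\in S$ because user~$1$'s $k$ messages form one of the enumerated subsets, then bound $\Ex{\ltwo{\calR^+(v_1)-v_1}^2}$ by evaluating the protocol's error on the all-equal input $(v_1,\dots,v_1)$, expanding into diagonal plus nonnegative off-diagonal (bias-squared) terms, and invoking the symmetry condition. The only cosmetic difference is ordering---the paper first does the variance decomposition and then notes $\calR^+(v_1)\in S$, whereas you reverse these steps.
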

\iftoggle{arxiv}{
    \begin{proof}
    Let $\Delta(v) = E \left[\calR^+(v) -  v \right]$ be the bias of $\calR^+$ over $v$. Note that the error of the protocol over dataset $(u_1,\dots,u_n)$ is  
    \begin{align*}
    & E \left[\norm{\calA^+(\Pi(\calR(u_1),\calR(u_2),\dots,\calR(u_n))) - \sum_{i=1}^n u_i}_2^2 \right] \\
        & =  E \left[\norm{\sum_{i=1}^n \calR^+(u_i) -  u_i}_2^2 \right] \\
        & = \sum_{i=1}^n  E \left[\norm{\calR^+(u_i) -  u_i}_2^2 \right] + \sum_{i \neq j \in [n]}  E \left[\calR^+(u_i) -  u_i \right]^T E \left[\calR^+(u_j) -  u_j \right] \\
        & = \sum_{i=1}^n  E \left[\norm{\calR^+(u_i) -  u_i}_2^2 \right] + \sum_{i \neq j \in [n]} \Delta(u_i)^T \Delta(u_j).
    \end{align*}
    For input dataset $X =  (u,u,\dots,u)$, this implies
    \begin{align*}
    & E \left[\norm{\calA^+(\Pi(\calR(u),\calR(u),\dots,\calR(u))) - n u}_2^2 \right] \\
        & = n E \left[\norm{\calR^+(u) - u}_2^2  \right] + \binom{n}{2} \norm{\Delta(u)}_2^2 \\
        & \ge  n E \left[\norm{\calR^+(u) - u}_2^2  \right].
    \end{align*}
    As $\calR$ satisfies the symmetry assumption that $E[\norm{\calR^+(v) - v}_2^2] = E[\norm{\calR^+(u) - u}_2^2] $ for all $u,v \in \sphere^{d-1}$, and since the error is bounded by $d/\eps^2$, we have that 
    \begin{equation*}
    E \left[\norm{\calR^+( v_1) -   v_1}_2^2\right] \le \frac{d}{n \eps^2}.
    \end{equation*}
    Finally, note that $\calR^+(v_1) \in S$ as the attack of~\Cref{alg:atck-sum} iterates over all possible subsets of size $k$ and adds their sum to $S$. Hence, there exists $t$ such that $W_t = \calR(v_1)$, in which case the algorithm will add $\calR^+(v_1)$ to $S$.
\end{proof}

}

We can now provide the main idea for proving~\Cref{thm:main-lb}. We defer the full proof to~\Cref{apdx:proof-main-lb-sum}.
\begin{proof}(sketch)
    Consider $d \le n \eps^2/100 $ and let $P = \{v_1,v_2,\dots,v_M \}$ be a $\rho$-packing of $\sphere^{d-1}$ where $\rho = 1/10$. 
    We will prove the lower bounds by analyzing the algorithm over the following $M$ datasets:
    \begin{equation*}
        X_i = (v_i,v_1,\dots,v_1).
    \end{equation*}
    The main idea is to show that if an algorithm is accurate, then our reconstruction attack~\Cref{alg:atck-sum} will return a set $S$ of size $\binom{nk}{k} \approx 2^{k \log(n)}$ that contains $v_i$. If $k\ll d$, then the size of the reconstructed set $S$ is much smaller than the size of the packing $P$ which contradicts privacy.

    More formally, 
    let $S_i$ be the output of the reconstruction attack (\Cref{alg:atck-sum}) over the input $\Pi(\calR(v_i),\calR(v_1),\dots,\calR(v_1))$, and let $O_i$ be the projection of $S_i$ to the packing $P$; that is, $O_i = \{ \mathsf{Proj}_P(v) : v\in S_i \} $.

    \Cref{prop:rec-attck-sum} states that $\E[\dist(v_i,S_i)] \le \frac{d}{n \eps^2} \le 1/100$, hence we get that 
    \begin{equation*}
    \Pr(v_i \in O_i ) \ge \Pr(\dist(v_i,S_i)<\rho) \ge 9/10,
    \end{equation*}
    where the first inequality follows as $P$ is $\rho$-packing, and the second inequality follows from Markov's inequality.

    On the other hand, note that
    \begin{align*}
    \sum_{i =1}^M \Pr(v_i \in O_1)
        & = \sum_{i =1}^M E[1\{v_i \in O_1\}]  \\
        & = E\left[\sum_{i =1}^M 1\{v_i \in O_1\} \right]  \\
        & \le E[|O_1|] \le \binom{nk}{k}.
    \end{align*}
    Hence there exists an $1 \le i \le M$ such that 
    \begin{equation*}
        \Pr(v_i \in O_1) \le \frac{\binom{nk}{k}}{M}.
    \end{equation*}
    As the protocol is $(\eps,\delta)$-DP, we also have 
    \begin{align*}
    \Pr(v_i \in O_1) 
        & \ge \Pr(v_i \in O_i) e^{-\eps} - \delta\\
        & \ge \frac{9}{10 e} \ge 1/6.
    \end{align*}
    Combining these together, and given that $M \ge 2^d$ for $\rho = 1/10$, we have that 
    \begin{equation*}
        2^d \le 6 \binom{nk}{k} \le 6 (en)^k.
    \end{equation*}
    This implies that $k \ge \Omega(d/\log(n))$ whenever $d \le n\eps^2/100$. 

Now consider $d \ge n \eps^2/100$. The proof builds on a reduction (\Cref{prop:reduc-small-to-large-dim}) which converts an optimal protocol for $d$-dimensional inputs into an optimal protocol for $d'$-dimensional inputs where $ d' = n \eps^2/200$ with the same number of messages. The lower bound then follows immediately from the lower bound for small $d$.

We provide the full details of the proof and the missing proof for $d \ge n\eps^2/100$ in \appref{app:apdx-agg-summ}.
\end{proof}

\subsubsection{Lower bound for unbiased protocols}
\label{sec:agg-unbiased}
In this section, we prove the same lower bound for any aggregation strategy as long as it is unbiased.
We assume that the aggregation protocol $\calA$ is unbiased; that is, for all $v_1,\dots,v_n \in \sphere^{d-1}$, 
\begin{equation*}
\Ex{\calA(\Pi(\calR(v_1),\calR(v_2),\dots,\calR(v_n)))} = \sum_{i=1}^n v_i.
\end{equation*}

The lower bound builds on the following reconstruction attack against unbiased protocols (\Cref{alg:atck-unb}). The attack follows the same recipe as the attack against summation protocols (\Cref{alg:atck-sum}) to iterate over all subsets of messages of size $k$. However, given $k$ messages, now we apply a different reconstruction scheme that uses the aggregation $\calA$ with zero-mean dummy inputs, and finally taking expectations.

\begin{algorithm}[!htb]
\caption{Reconstruction attack against unbiased protocols}
\label{alg:atck-unb}
\begin{algorithmic}[1]
\REQUIRE{Shuffled set of messages $W = \{m_i\}_{i \in [nk]}$}
\ENSURE{A set $S \subseteq \reals^d$ }
\STATE{Initialize $S  = \emptyset $}
\FOR{$t=1$ to $ \binom{nk}{k}$}
\STATE{Pick a (new) set of $k$ messages from $W$; denote it by $W_t$}
\STATE{Calculate $u_t$ to be $\E_{\tilde v_2,\dots, \tilde v_{n} \sim \mathsf{Unif}(\sphere^{d-1})} \left[ \calA\left(\Pi(W_t, \calR(\tilde  v_2),\dots,\calR(\tilde  v_{n})) \right) \right] $}
%\begin{equation}
%\label{eq:recon:attck}
%u_t = \E_{\tilde v_2,\dots, \tilde v_{n} \sim \mathsf{Unif}(\sphere^{d-1})} \left[ \calA\left(\Pi(W_t, \calR(\tilde  v_2),\dots,\calR(\tilde  v_{n})) \right) \right] 
%\end{equation}
\STATE{$S \gets S \cup \{u_t \}$}
\ENDFOR
\STATE{Return $S$}
\end{algorithmic}
\end{algorithm}

The following proposition states the guarantees of this reconstruction attack against unbiased protocols. We defer the proof to~\Cref{apdx:proof-rec-attck-unb}.
\begin{restatable}{proposition}{proprecattackunb}
\label{prop:rec-attck-unb}
	Let $v_1,\dots,v_n \in \sphere^{d-1}$ where  $v_1 \sim \mathsf{Unif}(\sphere^{d-1})$, $\calR: \sphere^{d-1} \to \calZ^k$ and $\calA$ be an unbiased protocol. % such that $\Err(\calA,\calR) \le d/\eps^2$. 
    For an input set $W = \Pi(R(v_1),\dots,R(v_n))$, \Cref{alg:atck-unb} outputs a set $S \subset \reals^d$ of size $\binom{nk}{k}$ such that  
	\begin{equation*}
		\Ex{\dist(v_1,S)} = \Ex{\min_{u \in S} \ltwo{v_1 - u}^2} \le \frac{\Err(\calA,\calR)}{ n },
	\end{equation*}
	where the expectation is over the randomness of  $v_1$ and the algorithm.
\end{restatable}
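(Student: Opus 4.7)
The plan is to isolate the element of $S$ corresponding to the ``honest'' choice of subset, namely $W_{t^\star}=\calR(v_1)$, and then bound its squared distance to $v_1$ via a symmetrization argument followed by the vector-valued Efron--Stein / ANOVA inequality, which supplies the crucial $1/n$ factor. To begin, since $\calR(v_1)$ is a size-$k$ subset of the shuffled multiset $W$, \Cref{alg:atck-unb} enumerates it as some $W_{t^\star}$ and adds the corresponding point $u_{t^\star}$ to $S$. Setting $N_1:=\calR(v_1)$, drawing $\tilde v_i\sim\Unif(\sphere^{d-1})$ for $i\ge 2$, letting $N_i:=\calR(\tilde v_i)$ with independent randomizer randomness, and writing $\tilde Z:=\calA(\Pi(N_1,\dots,N_n))$, the defining formula for $u_t$ identifies $u_{t^\star}=\Ex{\tilde Z\mid N_1}$, so since $u_{t^\star}\in S$ it suffices to prove $\Ex{\ltwo{u_{t^\star}-v_1}^2}\le\Err(\calA,\calR)/n$.

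Symmetrization and a priori bounds come next. Because $v_1$ is itself uniform on $\sphere^{d-1}$ by hypothesis, $(w_1,\dots,w_n):=(v_1,\tilde v_2,\dots,\tilde v_n)$ is i.i.d.\ uniform; together with independent randomizers this makes $N_1,\dots,N_n$ mutually independent. By unbiasedness $\Ex{\tilde Z\mid w_1,\dots,w_n}=\sum_i w_i$, so the error $\xi:=\tilde Z-\sum_i w_i$ satisfies $\Ex{\xi\mid w_1,\dots,w_n}=0$ and $\Ex{\ltwo{\xi}^2}\le\Err(\calA,\calR)$. Expanding $\ltwo{\tilde Z}^2$ using $\Ex{w_i}=0$, $\Ex{\ltwo{w_i}^2}=1$, the pairwise orthogonality $\Ex{w_i^\top w_j}=0$ for $i\ne j$, and $\Ex{w_i^\top\xi}=\Ex{w_i^\top\Ex{\xi\mid w_1,\dots,w_n}}=0$, yields $\mu:=\Ex{\tilde Z}=0$ and $\Ex{\ltwo{\tilde Z}^2}\le n+\Err(\calA,\calR)$.

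The $1/n$ factor arises from the Efron--Stein / ANOVA decomposition. Defining the main-effect vectors $Z_i:=\Ex{\tilde Z\mid N_i}-\mu=\Ex{\tilde Z\mid N_i}$, independence of the $N_i$ and $\Ex{Z_i}=0$ make the $Z_i$ pairwise uncorrelated, and the vector-valued Efron--Stein / Hoeffding--ANOVA inequality (applied coordinate-wise and summed) yields $\sum_{i=1}^n\Ex{\ltwo{Z_i}^2}\le\Ex{\ltwo{\tilde Z-\mu}^2}=\Ex{\ltwo{\tilde Z}^2}\le n+\Err(\calA,\calR)$. Because the shuffler $\Pi$ makes $\tilde Z$ a symmetric function of $(N_1,\dots,N_n)$ and these are i.i.d., $\Ex{\ltwo{Z_i}^2}$ does not depend on $i$, so $\Ex{\ltwo{Z_1}^2}\le 1+\Err(\calA,\calR)/n$.

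To wrap up, since $\mu=0$ we have $u_{t^\star}=Z_1$, and unbiasedness together with $\Ex{\tilde v_i}=0$ give $\Ex{\tilde Z\mid v_1}=v_1$, so by the tower property $\Ex{Z_1\mid v_1}=\Ex{\tilde Z\mid v_1}=v_1$ and therefore $\Ex{Z_1^\top v_1}=\Ex{\ltwo{v_1}^2}=1$. Expanding the square gives
\[
\Ex{\ltwo{u_{t^\star}-v_1}^2}=\Ex{\ltwo{Z_1}^2}-2\Ex{Z_1^\top v_1}+\Ex{\ltwo{v_1}^2}\le \bigl(1+\Err(\calA,\calR)/n\bigr)-2+1=\Err(\calA,\calR)/n.
\]
The main delicacy I anticipate is the bookkeeping of expectations: cleanly separating user 1's randomness from the fresh randomness resampled inside the attack, absorbing the internal coins of $\calA$ and $\Pi$ so that $N_1,\dots,N_n$ are genuinely independent, and verifying that the various unbiasedness identities survive each conditioning. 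Once this setup is in place, the $1/n$ improvement is simply tensorization-of-variance for exchangeable users.
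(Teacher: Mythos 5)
Your proof is correct and follows essentially the same route as the paper's: both extract the $1/n$ factor by orthogonally decomposing the aggregator's output across the $n$ independent users and invoking exchangeability under the shuffler --- the paper packages this as a Doob/martingale decomposition of the error $\calA(\cdot)-\sum_i v_i$ followed by Jensen's inequality, while you package it as the Hoeffding--ANOVA/Bessel inequality for the main effects $\E[\tilde Z\mid N_i]$ together with an explicit computation of $\E[\|\tilde Z\|^2]$ and of the cross term with $v_1$. The one step to tighten is the appeal to "the tower property" for $\E[Z_1\mid v_1]=\E[\tilde Z\mid v_1]$, which actually rests on the conditional independence of $\tilde Z$ and $v_1$ given $N_1$ (true here because all randomness other than $N_1=\calR(v_1)$ is independent of $v_1$), exactly the bookkeeping issue you flag at the end.
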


We can now prove our main lower bound for unbiased protocols. The proof is similar to the proof of~\Cref{thm:main-lb} using the new construction attack. We defer it to~\Cref{apdx:main-lb-unb}.
\begin{restatable}{theorem}{thmmainlbunb}
\label{thm:main-lb-unb}
    Let $\eps \le 1$, $\calR: \sphere^{d-1} \to \calZ^k $ be $(\eps,\delta)$-shuffle DP, and $\calA$ be an unbiased protocol. If $\Err(\calA,\calR) \le \O{d/\eps^2}$ then $k \ge \Omega\left( \frac{\min(n \eps^2,d)}{\log n} \right)$. 
\end{restatable}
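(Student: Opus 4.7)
The plan is to mirror the proof of \Cref{thm:main-lb}, substituting the summation attack with the unbiased attack (\Cref{alg:atck-unb}) and its reconstruction guarantee (\Cref{prop:rec-attck-unb}), and replacing the symmetric packing argument based on \Cref{lemma:struc-opt} with a coupled DP/volume argument. As in \Cref{thm:main-lb}, I would split into the regimes $d \le n\eps^2/100$ and $d > n\eps^2/100$, reducing the latter to the former via \Cref{prop:reduc-small-to-large-dim}.

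For the small-$d$ regime, I would draw $v_1 \sim \Unif(\sphere^{d-1})$ and fix arbitrary $v_2, \ldots, v_n \in \sphere^{d-1}$. Applying \Cref{alg:atck-unb} to the transcript on $(v_1, v_2, \ldots, v_n)$ produces a set $S$ of size $\binom{nk}{k}$ with $\E[\dist(v_1, S)] \le \Err(\calA, \calR)/n \le \O{d/(n\eps^2)} \le 1/100$, by \Cref{prop:rec-attck-unb} combined with the hypothesis $\Err(\calA,\calR) \le \O{d/\eps^2}$. Markov's inequality then yields $\Pr(\dist(v_1, S) \le r^2) \ge 9/10$ with $r = 1/\sqrt{10}$. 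To convert this into a lower bound on $k$, I would introduce the surrogate dataset $X_0 = (z, v_2, \ldots, v_n)$ for a fixed $z \in \sphere^{d-1}$, and let $S_0$ denote the attack's output on $X_0$. Since the overall procedure is $(\eps,\delta)$-DP by post-processing and $X = (v_1', v_2, \ldots, v_n)$ and $X_0$ are neighboring, for every fixed $v_1' \in \sphere^{d-1}$,
\[
\Pr(\dist(v_1', S) \le r^2) \le e^\eps \Pr(\dist(v_1', S_0) \le r^2) + \delta.
\]
Averaging over $v_1' \sim \Unif(\sphere^{d-1})$ and applying Fubini on the right (using that $S_0$ is independent of $v_1'$), together with a spherical cap estimate $\Pr_{v \sim \Unif(\sphere^{d-1})}(\|v - u\|_2 \le r) \le c^d$ for an absolute constant $c \in (0,1)$ uniformly in $u \in \reals^d$, yields $9/10 \le e^\eps \binom{nk}{k} c^d + \delta$. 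Rearranging gives $\binom{nk}{k} = \Omega((1/c)^d)$ and hence $k = \Omega(d/\log n)$.

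The main obstacle is wiring Markov, DP, and Fubini together correctly: the unbiased attack only succeeds in expectation over a uniformly random $v_1$, so the reconstructed set depends on $v_1$ in a way that rules out a direct volume bound; the surrogate $X_0$ together with the DP inequality decouples $S_0$ from $v_1'$, at which point the spherical-cap bound applies. A secondary check is that the cap estimate holds uniformly in $u \in \reals^d$ (not just $u \in \sphere^{d-1}$), which follows from standard concentration of measure on the sphere since $\sphere^{d-1} \cap \{v : \|v-u\|_2 \le r\}$ is either empty or a spherical cap whose angular radius is bounded away from $\pi/2$ for $r < 1$. Once this bookkeeping is in place, the bound $k \ge \Omega(\min(n\eps^2, d)/\log n)$ follows immediately.
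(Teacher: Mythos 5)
Your proposal is correct, and in the small-dimension regime it takes a genuinely different route from the paper. The paper handles $d \le n\eps^2/100$ by first using \Cref{prop:rec-attck-unb} and Markov to extract a constant-measure ``good'' subset $A \subseteq \sphere^{d-1}$, placing a $\rho$-packing $\{v_1,\dots,v_M\}$ inside $A$ with $M \approx 2^d$, projecting the reconstructed set $S_i$ onto the packing to get $O_i$, and then running a pigeonhole count $\sum_i \Pr(v_i \in O_1) \le \E[|O_1|] \le \binom{nk}{k}$ against the group-privacy bound $\Pr(v_i\in O_1) \ge e^{-\eps}\Pr(v_i\in O_i)-\delta \ge 1/6$. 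You instead keep $v_1$ uniform, introduce the surrogate neighboring dataset $X_0=(z,v_2,\dots,v_n)$, apply the DP inequality to the $v_1'$-dependent event $\{\dist(v_1',\cdot)\le r^2\}$, and then use Fubini plus a union bound over the $\binom{nk}{k}$ points of $S_0$ and a spherical-cap estimate $\Pr_{v\sim\Unif(\sphere^{d-1})}(\ltwo{v-u}\le r)\le c^d$. Both yield $\binom{nk}{k}\gtrsim (1/c)^d$ and hence $k=\Omega(d/\log n)$. Your decoupling argument is arguably cleaner: it sidesteps the paper's need to find a large packing inside the measure-$1/2$ set $A$ (a step the paper treats somewhat informally), and it handles the fact that \Cref{prop:rec-attck-unb} only guarantees success on average over $v_1\sim\Unif(\sphere^{d-1})$ in a more transparent way. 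The trade is that you need the cap estimate to hold uniformly over centers $u\in\reals^d$, which you correctly flag and which does hold (the worst case over $\|u\|$ gives a cap $\{\langle v,w\rangle \ge \sqrt{1-r^2}\}$). One correction for the large-dimension regime: you cite \Cref{prop:reduc-small-to-large-dim}, which is the reduction for \emph{summation} protocols; for an arbitrary unbiased aggregation you need its unbiased analogue (\Cref{prop:reduc-small-to-large-dim-unbiased}), which additionally constructs a new aggregation $\calA'$ (inverse Kashin transform applied to the truncated output of $\calA$) and verifies that unbiasedness is preserved so that the small-$d$ lower bound can be re-applied. This adaptation is routine but is a genuine extra step, not a direct invocation of the summation version.
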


\subsection{Optimal multi-message protocol}
\label{sec:opt:mult}
In this section, we briefly overview a private protocol that achieves the optimal mean squared error for vector aggregation. 
The protocol requires that each user sends $\O{d}$ messages in expectation. %, where $d$ is the dimension of the vector held by each user and $\delta$ is the failure probability for approximate DP. 

% Our algorithm generalizes a protocol by \cite{GhaziKMPS21} for private scalar aggregation, where each user $i\in[n]$ receives a real number $r_i$ normalized to $r_i\in[0,1]$. 
% In the protocol, each user $i$ first truncates their input $r_i$ to a fixed granularity and sends the truncated number as a message.  
% Additionally, each user samples additional parameters from a fixed distribution such that the sum of these parameters follows a discretized Laplacian distribution, which guarantees privacy by using a similar argument as the standard Laplace mechanism. 
% These noisy parameters are also sent as messages. 
% Finally, each user samples and sends a set of numbers that sum to zero.
\newcommand{\Rg}{\calR_{\mathsf{GKMPS}}} % randomizer of Ghazi e tal.
We adapt the $1$-dimensional mechanism of \cite{GhaziKMPS21} to vector aggregation by requiring that each user separately performs the scalar aggregation on each coordinate and padding the resulting messages to vectors in the natural manner, before sending the messages. 
Due to standard composition theorems, the privacy parameter for each coordinate must have a smaller privacy budget, so that the overall privacy loss across the $d$ coordinates is still $\eps$. 
We describe the local randomizer in~\Cref{alg:rand-large-d} and the aggregation in~\Cref{alg:agg-large-d}. Our algorithms use the optimal $1$-dimensional algorithm of~\cite{GhaziKMPS21}: we let $\Rg^{(\eps,\delta)}$ denote their local randomizer with parameters $(\eps,\delta)$

\begin{algorithm}[!htb]
\caption{Local randomizer for vector aggregation}
\label{alg:rand-large-d}
\begin{algorithmic}[1]
\REQUIRE{$v^{(i)}\in \sphere^{d-1}$, privacy parameters $(\eps,\delta)$}
\ENSURE{$S^{(i)}\subset \reals^d$}
\STATE{Let $S^{(i)} = \emptyset$ and $U_{\mathsf{K}} \in \reals^{2d \times d}$ be the Kashin transformation with constant $\Ck$}
%\STATE{Let $U_K \in \reals^{2d \times d}$ be Kashin transformation}
\STATE{Set $u^{(i)} = \frac{\sqrt{d}}{\Ck}  U_{\mathsf{K}} v^{(i)}$}
%\STATE{Set $s = \min(2d, n\eps^2)$}
%\STATE{Choose $j_1,\dots,j_s$ uniformly at random from $[2d]$ (without repetition)}
\FOR{$j=1$ to $2d$}
\STATE{Let $S_j = \Rg^{(\eps_0,\delta_0)}( u^{(i)}_j)$ where $\eps_0 = \frac{\eps}{2\sqrt{2d \log(2/\delta)}}$ and $\delta_0 = \frac{\delta}{2d}$}
\STATE{Update $S^{(i)} = S^{(i)} \cup \{m \cdot e_j : m \in S_j \}$}
\ENDFOR
\STATE{Output $S^{(i)}$}
\end{algorithmic}
\end{algorithm}

\begin{algorithm}[!htb]
\caption{Aggregation for vector aggregation}
\label{alg:agg-large-d}
\begin{algorithmic}[1]
\REQUIRE{Shuffled messages $M \subset \reals^{2d}$}
\ENSURE{$\hat v \in \reals^d$}
\STATE{Let $U_{\mathsf{K}} \in \reals^{2d \times d}$ be the Kashin transformation with constant $\Ck$}
%\STATE{Set $s = \min(2d, n\eps^2)$}
\STATE{Calculate $\hat u =  \sum_{m \in M} m$}
\STATE{Output $\hat v = \frac{\Ck}{\sqrt{d}}  U^T_{\mathsf{K}} \hat u$}
\end{algorithmic}
\end{algorithm}

%\thmref{thm:many:msg:ub} then follows from casework on whether $d$ exceeds $n\eps^2$, i.e., whether sampling is effective. 
We have the following result for our protocol. 
The proof is standard and we defer it to~\Cref{sec:proofs-ub-multi}.
\begin{restatable}{theorem}{thmmultmsgerrbound}
\label{thm:mult:msg:err:bound}
Let $\calR: \sphere^{d-1} \to \reals^{2d}$ be the local randomizer in~\Cref{alg:rand-large-d} and $\calA: (\reals^{2d})^\star \to \reals^d$ be the aggregation in~\Cref{alg:agg-large-d}. Then, $\calR$ is $(\eps,\delta)$-Shuffle DP randomizer, each users sends $d \cdot \left( 1 + \widetilde{\mathcal{O}}_{\eps}\left(\frac{\log(1/\delta)}{\sqrt{n}}\right) \right)$ messages in expectation, and the protocol has error
\begin{equation*}
\Err(\calA,\calR) \le \O{ \frac{d \log(1/\delta)}{\eps^2}}.
\end{equation*}
\end{restatable}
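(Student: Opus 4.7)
The protocol is a coordinate-wise composition of the one-dimensional GKMPS mechanism after a Kashin change of basis, so the proof splits cleanly into three pieces: (i) privacy by advanced composition over the $2d$ Kashin coordinates, (ii) an error analysis that reduces to the guarantees of the scalar protocol and then un-does the Kashin rotation, and (iii) a message-count bookkeeping that sums the per-coordinate message counts of GKMPS.

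\textbf{Privacy.} Fix a user $i$ and let their input change from $v^{(i)}$ to $\tilde v^{(i)}$. Both lie in $\sphere^{d-1}$, so the corresponding Kashin vectors $u^{(i)},\tilde u^{(i)} \in [-1,1]^{2d}$ are neighboring scalar inputs on each of the $2d$ coordinates. The GKMPS local randomizer $\Rg^{(\eps_0,\delta_0)}$ is $(\eps_0,\delta_0)$-Shuffle DP for each coordinate separately, and the coordinate randomizers are run with independent randomness, so the overall shuffled transcript is the product of $2d$ independent $(\eps_0,\delta_0)$-Shuffle DP mechanisms on neighboring inputs. I would then invoke \thmref{dp:adv:comp} with $\delta' = \delta/2$: since $\eps_0 = \eps/(2\sqrt{2d\log(2/\delta)})$ and $\delta_0 = \delta/(2d)$, the leading term of advanced composition is $\eps_0 \sqrt{4d\ln(2/\delta)} \le \eps/\sqrt{2}$ and the quadratic correction $2d\eps_0 \cdot \tfrac{e^{\eps_0}-1}{e^{\eps_0}+1} = O(d\eps_0^2) = O(\eps^2/\log(1/\delta))$ is lower order, so the total $\tilde\eps \le \eps$ and $\tilde\delta = 2d\delta_0 + \delta/2 = \delta$ as required.

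\textbf{Error.} Here the key algebraic observation is that because $U_{\mathsf{K}}^T U_{\mathsf{K}} = I_d$, we have $v^{(i)} = U_{\mathsf{K}}^T (U_{\mathsf{K}} v^{(i)}) = \tfrac{\Ck}{\sqrt{d}} U_{\mathsf{K}}^T u^{(i)}$, so
\[
\hat v - \sum_{i=1}^n v^{(i)} \;=\; \tfrac{\Ck}{\sqrt{d}}\, U_{\mathsf{K}}^T\Bigl(\hat u - \sum_{i=1}^n u^{(i)}\Bigr).
\]
Since $U_{\mathsf{K}}$ has orthonormal columns, $U_{\mathsf{K}} U_{\mathsf{K}}^T$ is an orthogonal projection and in particular $\|U_{\mathsf{K}}^T x\|_2 \le \|x\|_2$. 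Thus
\[
\Err(\calA,\calR) \;\le\; \tfrac{\Ck^2}{d}\, \E\Bigl[\|\hat u - \sum_i u^{(i)}\|_2^2\Bigr] \;=\; \tfrac{\Ck^2}{d}\sum_{j=1}^{2d} \E\bigl[(\hat u_j - \sum_i u^{(i)}_j)^2\bigr].
\]
Each coordinate-wise error is exactly the scalar error of the GKMPS protocol on inputs in $[-1,1]$ at privacy $(\eps_0,\delta_0)$, which is $O(1/\eps_0^2)$. Summing over $2d$ coordinates gives $2d \cdot O(1/\eps_0^2)$, and the $\tfrac{\Ck^2}{d}$ prefactor cancels the $d$ to yield $O(1/\eps_0^2) = O(d\log(1/\delta)/\eps^2)$.

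\textbf{Message complexity and main obstacle.} GKMPS sends $1 + \tilde O_{\eps_0}(\log(1/\delta_0)/\sqrt{n})$ messages per user per coordinate in expectation; summing over the $2d$ Kashin coordinates and substituting $\eps_0,\delta_0$ gives $d(1 + \widetilde{\mathcal{O}}_\eps(\log(1/\delta)/\sqrt n))$ up to the absolute constant being absorbed. The only non-routine step is the privacy composition: one must verify that the shuffled messages across distinct coordinates really do compose as $2d$ independent releases (this uses that $\Rg^{(\eps_0,\delta_0)}$ is $(\eps_0,\delta_0)$-Shuffle DP on each coordinate and that the shufflers on different coordinates are independent, which holds because each message carries its coordinate identifier via the basis vector $e_j$); once that is set up, the rest is direct application of \thmref{dp:adv:comp} and Kashin's lemma.
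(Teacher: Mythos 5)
Your proposal is correct and follows essentially the same route as the paper: privacy via advanced composition of the $2d$ coordinate-wise GKMPS randomizers, error via the Kashin contraction $\|U_{\mathsf{K}}^T x\|_2 \le \|x\|_2$ combined with the $O(1/\eps_0^2)$ per-coordinate scalar error, and message count by summing over coordinates (you are in fact slightly more careful than the paper, which asserts the Kashin step as an equality). The only nit is the $\delta$ bookkeeping: with $\delta_0=\delta/(2d)$ and $\delta'=\delta/2$ you get $\tilde\delta=3\delta/2$ rather than $\delta$, a constant-factor slip already latent in the paper's parameter choices and fixable by taking $\delta_0=\delta/(4d)$.
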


Finally, we note that it is possible to achieve this rate with $\O{n\eps^2}$ messages using the protocols in~\cite{chen2023privacy}: their protocols work in the shuffle model and send $\O{n\eps^2}$ bits per users in $T$ rounds. However, their approach can also work in a single round if the coordinates are independent (which is the case if the Kashin representation is applied). Overall, we conclude that there is a protocol for the shuffle model that requires $\O{\min(n\eps^2,d)}$ messages.

\section{Single Message per User}
\label{sec:single-msg}
In this section, we study private vector summation when each user is only allowed to send a single message. 
We first give an algorithm for this setting in~\Cref{sec:ub-single} and then show that the algorithm is near-optimal in~\Cref{sec:lb-single}.

\subsection{A Single-Message Protocol}
\label{sec:ub-single}
\seclab{sec:single:alg}
In this section, we describe a simple protocol for private vector summation in the shuffle model that achieves near-optimal error when each user can only send a single message and $\varepsilon$ is constant.
Indeed, both the protocol and the corresponding analysis can be viewed as a generalization of \cite{BalleBGN19} from the aggregation of real numbers to real-valued vectors. 

The protocol first picks a granularity $r$ so that all messages will only correspond to vectors whose coordinates are multiples of $r$. 
Each user $i$ then randomly rounds each coordinate of their input $v^{(i)}$ to one of the two neighboring multiples of $r$ to form a vector $\widetilde{v^{(i)}}$. 
Each user then performs randomized response to determine whether the message $w^{(i)}$ they send is their randomly rounded input $\widetilde{v^{(i)}}$ or a message selected uniform at random from the set $[r]^d$ of all possible rounded messages. 
The local randomizer appears in \algref{alg:local:single}. 

The analyzer takes the set $\{w^{(i)}\}_{i\in[n]}$ of messages and computes their vector sum $z=\sum_{i=1}^n w^{(i)}$. 
It then adjusts each coordinate $j\in[d]$ of $z$ to account for the expected noise from randomized response, so that the expectation of the corrected $z_j$ is precisely the sum of the inputs $\sum_{i=1}^n v^{(i)}_j$. 
We provide the full details in \algref{alg:bucket:rr}. 

\begin{algorithm}[!htb]
\caption{Local randomizer for single message per user}
\alglab{alg:local:single}
\begin{algorithmic}[1]
\REQUIRE{$v^{(i)}\in\sphere^{d-1}$, parameters $r, c, d, n$}
\ENSURE{$w^{(i)}\in\{0,1,\ldots,r\}^d$}
\STATE{$\gamma\gets\frac{c(r+1)}{n}$}
\FOR{$j=1$ to $j=d$}
\STATE{$\widetilde{v^{(i)}}_j\sim\left\lfloor rv^{(i)}_j\right\rfloor+\Ber\left(rv^{(i)}_j-\left\lfloor rv^{(i)}_j\right\rfloor\right)$}
\ENDFOR
\STATE{Sample $b\sim\Ber(\gamma)$}
\IF{$b=0$}
\STATE{$w^{(i)}\gets\widetilde{v^{(i)}}$}
\ELSE
\STATE{$w^{(i)}\sim\Unif([r]^d)$}
\ENDIF
\end{algorithmic}
\end{algorithm}

\begin{algorithm}[!htb]
\caption{Aggregation for bucket-based randomized response}
\alglab{alg:bucket:rr}
\begin{algorithmic}[1]
\REQUIRE{$w^{(i)}\in\{0,1,\ldots,r\}^d$ for $i\in[n]$ and $c$ from \algref{alg:local:single}}
\ENSURE{$\widetilde{v}\in[0,n]^d$}
\STATE{$z\gets\frac{1}{r}\sum_{i=1}^n w^{(i)}$}
\FOR{$j=1$ to $j=d$}
\STATE{$\widetilde{v}_j\gets\frac{2z_j-c(r+1)}{2-2\gamma}=\left(z_j-\frac{c(r+1)}{2}\right)/(1-\gamma)$}
\ENDFOR
\STATE{Return $\widetilde{v}$}
\end{algorithmic}
\end{algorithm}
We first note that since each vector in $[r]^d$ can be encoded as a integer in $[r^d]$, then the privacy guarantees of \cite{BalleBGN19} for the local randomizer holds as follows: 
\begin{lemma}[Theorem 3.1 in~\cite{BalleBGN19}]
\lemlab{lem:sdp:amp:bucket}
The mechanism in \algref{alg:bucket:rr} is $(\eps,\delta)$-DP for the number of buckets $k=(r+1)^d$ and $\gamma\ge\min\left(1,\max\left(\frac{14k}{(n-1)\eps^2}\log\frac{2}{\delta},\frac{27k}{(n-1)\eps}\right)\right)$. 
\end{lemma}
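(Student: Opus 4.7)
The plan is to reduce the proof directly to the scalar bucket-based randomized response of \cite{BalleBGN19}, invoking their Theorem 3.1 as a black box.

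First I would fix a bijection $\phi:\{0,1,\ldots,r\}^d \to [k]$ with $k=(r+1)^d$, so that each vector-valued message $w^{(i)}\in\{0,\ldots,r\}^d$ may be interpreted as a single element of $[k]$. Under this encoding, and conditioned on the randomly rounded vectors $\widetilde{v^{(1)}},\ldots,\widetilde{v^{(n)}}$, the second (randomized response) stage of \algref{alg:local:single} is exactly $k$-ary randomized response: user $i$ reports $\phi(\widetilde{v^{(i)}})$ with probability $1-\gamma$ and a uniformly random element of $[k]$ with probability $\gamma$. Consequently the shuffled multiset of encoded messages is identically distributed to the shuffled output of the scalar protocol of \cite{BalleBGN19} on inputs $\phi(\widetilde{v^{(1)}}),\ldots,\phi(\widetilde{v^{(n)}})$.

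Next I would apply Theorem 3.1 of \cite{BalleBGN19} verbatim to this scalar protocol, which yields $(\eps,\delta)$-DP of the shuffled output with respect to any single user changing their input in $[k]$, under the stated lower bound on $\gamma$. This gives, for every pair $\widetilde{v},\widetilde{v'}\in\{0,\ldots,r\}^d$ and every event $U$, an $(\eps,\delta)$-indistinguishability bound between conditioning the shuffled output on $\widetilde{v^{(i)}}=\widetilde{v}$ versus $\widetilde{v^{(i)}}=\widetilde{v'}$.

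Finally I would propagate this bound through the randomized rounding pre-processing via a standard convexity argument. The point is that the DP bound obtained above holds for \emph{every} pair $\widetilde{v},\widetilde{v'}$ in the message domain, not just for those produced by neighboring inputs. Writing $P_{\widetilde{v}}(U)$ for the conditional output probability and choosing $\widetilde{v}^*\in\argmin_{\widetilde{v}}P_{\widetilde{v}}(U)$, one has for any $v,v'\in\sphere^{d-1}$:
\[
\sum_{\widetilde{v}}\Pr[\widetilde{v^{(i)}}{=}\widetilde{v}\mid v^{(i)}{=}v]\,P_{\widetilde{v}}(U) \;\leq\; e^{\eps}P_{\widetilde{v}^*}(U)+\delta \;\leq\; e^{\eps}\sum_{\widetilde{v'}}\Pr[\widetilde{v^{(i)}}{=}\widetilde{v'}\mid v^{(i)}{=}v']\,P_{\widetilde{v'}}(U)+\delta,
\]
which is precisely the $(\eps,\delta)$-DP condition with respect to the original user input. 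The main obstacle is minor and essentially a bookkeeping exercise, since the hard combinatorial content of amplification-by-shuffling is supplied entirely by the invoked result of \cite{BalleBGN19}; one only needs to verify that the bijective encoding and the local randomized pre-processing do not weaken its guarantee.
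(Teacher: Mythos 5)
Your proposal is correct and takes essentially the same approach as the paper: the paper justifies this lemma with exactly the observation that each vector message in $\{0,\ldots,r\}^d$ can be encoded as a single element of a $k=(r+1)^d$-ary alphabet, reducing the claim to Theorem 3.1 of \cite{BalleBGN19}. Your final convexity argument propagating the guarantee through the randomized-rounding preprocessing is a valid filling-in of a step the paper leaves implicit (it works precisely because the shuffled $k$-ary randomized response is $(\eps,\delta)$-indistinguishable for \emph{arbitrary} pairs of intermediate inputs, not just neighboring ones).
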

We now upper bound the mean squared error of \algref{alg:bucket:rr}. 
\thmonemsgub*
\begin{proof}
Consider \algref{alg:bucket:rr}. 
The mechanism is $(\eps,\delta)$-private by the choice of 
\[\gamma\ge\min\left(1,\max\left(\frac{14k}{(n-1)\eps^2}\log\frac{2}{\delta},\frac{27k}{(n-1)\eps}\right)\right)\]
and \lemref{lem:sdp:amp:bucket}. 

The mean squared error is at most
\begin{align*}
\sup_{\{v^{(i)}\}}\Ex{\sum_{j=1}^d(\widetilde{v}_j-v_j)^2}&=\sup_{\{v^{(i)}\}}\Ex{\sum_{j=1}^d\left(\widetilde{v}_j-\sum_{i=1}^n v^{(i)}_j\right)^2}.
\end{align*}
For a real number $x$, let $F(x)=\frac{x-c(r+1)/2}{1-c(r+1)/n}$, so that $F$ is the debiasing function applied coordinate-wise to $z$.  
%We thus abuse notation and let $F(v)$ for a vector $v$ to denote the coordinate-wise application of $F$.
\begin{align*}
\sup_{\{v^{(i)}\}}\Ex{\sum_{j=1}^d(\widetilde{v}_j-v_j)^2}&=\sup_{\{v^{(i)}\}}\Ex{\sum_{j=1}^d\left(F(z_j)-\sum_{i=1}^n v^{(i)}_j\right)^2}\\
&=\sup_{\{v^{(i)}\}}\Ex{\sum_{j=1}^d\left(F\left(\frac{1}{r}\sum_{i=1}^nw^{(i)}_j\right)-\sum_{i=1}^n v^{(i)}_j\right)^2}.
\end{align*}
Note that by construction $\Ex{F(z_j)}=\sum_{i=1}^n v^{(i)}_j$, for all $j\in[d]$. 
Thus the cross terms cancel, so that we further have
\begin{align*}
\sup_{\{v^{(i)}\}}\Ex{\sum_{j=1}^d(\widetilde{v}_j-v_j)^2}&=\sup_{\{v^{(i)}\}}\Ex{\sum_{j=1}^d\sum_{i=1}^n\left(F\left(\frac{w^{(i)}_j}{r}\right)-\sum_{i=1}^n v^{(i)}_j\right)^2}\\
&=\sup_{\{v^{(i)}\}}\sum_{j=1}^d\sum_{i=1}^n\Var{F\left(\frac{w^{(i)}_j}{r}\right)},
\end{align*}
where we use $\mathbb{V}$ to denote the variance. 
Note that after debiasing, the $\gamma$ fraction of the coordinates that were randomly generated from the uniform distribution, due to $b\sim\Ber(\gamma)$, do not contribute variance. 
Hence the mean-squared error is at most
\begin{align*}
\sup_{\{v^{(i)}\}}\Ex{\sum_{j=1}^d(\widetilde{v}_j-v_j)^2}&=\frac{nd}{(1-\gamma)^2}\sup_{x^{(1)}_1}\Var{\frac{w^{(1)}_1}{r}}\\
&\le\frac{nd}{(1-\gamma)^2}\left(\frac{1-\gamma}{4r^2}+\frac{\gamma}{2}\right).
\end{align*}
Recall that we set $\gamma=\frac{c(k+1)}{n}$ for a parameter $c$, which we require to guarantee 
\[\gamma\ge\min\left(1,\max\left(\frac{14k}{(n-1)\eps^2}\log\frac{2}{\delta},\frac{27k}{(n-1)\eps}\right)\right),\]
to satisfy privacy. 
Then we have
\begin{align*}
\sup_{\{v^{(i)}\}}\Ex{\sum_{j=1}^d(\widetilde{v}_j-v_j)^2}&\le\frac{nd}{(1-\gamma)^2}\left(\frac{1}{4r^2}+\frac{c(k+1)}{2n}\right)\\
&\le\frac{nd}{(1-\gamma)^2}\left(\frac{1}{4r^2}+\frac{c((r+1)^d+1)}{2n}\right).
\end{align*}
By setting $c(r+1)^{d+2}=\O{n}$ for $c=\O{\frac{1}{\eps^2}\log\frac{1}{\delta}}$, we have that the above quantity is minimized at $r=\left(\mathcal{O}_{\delta}\left(\frac{n}{c}\right)\right)^{1/(d+2)}$. 
Thus since $c=\O{\frac{1}{\eps^2}\log\frac{1}{\delta}}$, then the mean squared error is at most $\mathcal{O}_{\delta}\left(dn^{d/(d+2)}\eps^{-4/(d+2)}\right)$.
\end{proof}

\subsection{Lower Bound}
\label{sec:lb-single}
In this section, we show that our protocol in \secref{sec:single:alg} is near-optimal by proving that for any $\eps=\O{1}$, the mean squared error of any protocol that gives $\eps$-DP in the shuffle model in which each user sends a single message is $\Omega(dn^{d/(d+2)})$. 
The main intuition is that we can partition the space into blocks of size length $\frac{1}{r}$, so that there are $r^d$ hypercubes in total. 
Although $r$ is a parameter that can be chosen at the protocol's discretion, there are two sources of error for any private protocol that result in two opposing tensions on the value of $r$. 

The first source of error is that due to the privacy guarantees, the output distribution for an input $v^{(i)}$ to a player $i$ may overlap with the output distribution for any input in $[r]^d$. 
In this case, the message may be decoded to some other vector with large distance from $v^{(i)}$, resulting in large mean squared error. 
In particular, larger values of $r$ force the output of the local randomizer to have less signal about the true block containing the input $v^{(i)}$, since the output distribution must intersect with that of more possible inputs. 
This is formalized in \lemref{lem:atob}. 

The second source of error is that any vector inside a block may incur error from the message representing the block, due to the partition of the space. 
In particular, the message may be decoded correctly for the block, but the set of all vectors within the block has large diameter, and so the resulting mean squared error is large. 
Specifically, smaller values of $r$ result in blocks with larger diameter, which again force the output of the local randomizer to have less signal about the true input $v^{(i)}$ within each block. 
This is formalized in \lemref{lem:err:inblock}. 
The resulting lower bound then follows from optimizing $r$ with respect to the two possible sources of error, resulting in the following theorem.
\thmonemsglb*

The proof of this result is technical and we defer it to \appref{app:single:msg}.

\section{Robustness to Malicious Users}
\label{sec:rob}
We first observe that our multi-message protocol is not robust against malicious users in the single-shuffle setting, in the sense that a single malicious user can additively incur much larger than constant mean squared error, even though their input vector has at most unit length. 
In fact, each user can incur up to $\Omega\left(\frac{k}{\log^2(nd)}\right)$ additive mean squared error. 
\begin{restatable}{theorem}{thmattackoneshuffler}
\thmlab{thm:attack:one:shuffler}
Let $\eps=\O{1}$ and $\delta<\frac{1}{nd}$. 
Then any $(\eps,\delta)$-shuffle DP mechanism for vector summation that takes the sum of the messages across $n$ players with $k$ malicious users has additive error $\Omega\left(\frac{kd}{\log^2(nd)}\right)$. 
\end{restatable}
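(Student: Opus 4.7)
My plan is to construct an explicit attack in which each of the $k$ malicious users biases the output by $\Omega(\sqrt{d}/\log(nd))$ in a chosen direction. Since the protocol is summation-based at a single shuffler, the adversary's aggregate contribution to the final output is exactly the sum of the messages they submit. The only soft constraint is that each submitted message must lie in the support of the local randomizer $\calR(v)$ for some honest input $v \in \sphere^{d-1}$, so that the analyzer cannot filter out the adversary by inspecting individual messages; any attack that respects this restriction gives a valid lower bound.

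The first and main step is a structural lemma: for any $(\eps,\delta)$-shuffle DP summation protocol with mean squared error $O(d/\eps^2)$, and for every direction $u \in \sphere^{d-1}$, there exist $v \in \sphere^{d-1}$ and an $m$-tuple $(w_1,\ldots,w_m) \in \mathrm{supp}(\calR(v))$ whose sum satisfies $\langle \sum_i w_i,\, u\rangle \ge \Omega(\sqrt{d}/\log(nd))$. To prove this, I would combine accuracy with privacy. Accuracy forces $\E[\sum_i \calR(v)_i] = v$ (up to small bias) for every $v$, and since the total noise variance across $n$ honest users must be $O(d/\eps^2)$, a typical honest user's aggregate has expected squared norm $\Omega(d/n)$ in each direction. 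An anti-concentration argument then upgrades this second-moment bound to a \emph{pointwise} statement about the support of $\calR$: since the $m$ messages are shuffled and the joint distribution is DP, there must be some realization whose projection onto $u$ attains $\Omega(\sqrt{d}/\log(nd))$, where the $\log(nd)$ factor appears as the cost of converting the averaged bound into a supremum bound via Markov's inequality and a union bound over $n$ users and $d$ coordinates.

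Given the structural lemma, the attack is simple. Each of the $k$ malicious users fixes a common target direction $u$, looks up the $m$-tuple guaranteed by the lemma, and submits those $m$ messages. Each individual message lies in the support of $\calR$, so the analyzer cannot distinguish it from an honest contribution and cannot drop it. The aggregate bias contributed by the $k$ adversaries is $\Omega(k\sqrt{d}/\log(nd))$ in direction $u$, yielding squared additive error $\Omega(k^2 d/\log^2(nd)) \ge \Omega(kd/\log^2(nd))$, as required. An uncoordinated variant, in which each user picks an independent random direction, gives $\Omega(kd/\log^2(nd))$ in expectation by orthogonality.

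The main obstacle is proving the structural lemma with the tight $\log(nd)$ dependence. The subtlety is that DP holds for the \emph{shuffled joint} distribution rather than for any individual user's randomizer, so one cannot directly reason about per-user message supports. To translate the global guarantee into a per-user one, I would run a hybrid across the $n$ users, replacing each honest user's messages with those of a canonical ``null'' input one at a time; the $(\eps,\delta)$-DP guarantee bounds the loss at each step, and the hypothesis $\delta < 1/(nd)$ ensures that union bounding over $n$ hybrid steps and $d$ coordinates costs only polylogarithmic slack. Pairing this hybrid with the variance-to-tail conversion described above yields the claimed $\log(nd)$ loss, and closing this gap tightly is where most of the technical effort will go.
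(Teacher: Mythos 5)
There is a genuine gap, and it sits exactly at the structural lemma your whole attack rests on. You ask for a \emph{complete} $m$-tuple $(w_1,\ldots,w_m)\in\mathrm{supp}(\calR(v))$ whose \emph{sum} has projection $\Omega(\sqrt{d}/\log(nd))$ onto a chosen direction. No such tuple need exist: in split-and-mix--style protocols (the standard way to get optimal error in the multi-message shuffle model), every realization of a user's tuple sums exactly to that user's input plus $O(1)$-bounded noise, so every point in the support of $\calR^+(v)$ has norm $O(1)$, far below $\sqrt{d}/\log(nd)$. More generally, accuracy only gives you an \emph{upper} bound $\E\|\calR^+(v)-v\|_2^2 \le O(d/(n\eps^2))$ on the per-user aggregate, and one cannot convert an upper bound on a second moment into a lower bound on how far the support extends (Markov runs in the wrong direction). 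An adversary who is constrained to submit a well-formed tuple from $\mathrm{supp}(\calR(\cdot))$ can therefore shift the sum by only $O(1)$; this is precisely the content of the paper's companion lemma showing $O(k)$ robustness in the multi-shuffler model, so your route proves the opposite regime and cannot yield the theorem.

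The paper's proof uses a different adversarial capability that your proposal never invokes: a malicious user in the single-shuffler multi-message model need only respect the per-user message \emph{count}, so they may take one legitimately sampled message and replay it $d$ times. The structural fact needed is then about \emph{individual} message norms, not tuple sums, and it is established by a trichotomy: either (i) with probability at least $1/(dn^2)$ some message of $\calR(v)$ has norm at least $1/(\sqrt{d}\alpha)$, in which case resampling $O(dn^2)$ times finds such a message and replaying it $d$ times contributes norm $\sqrt{d}/\alpha$, i.e., squared error $d/\alpha^2$ per corrupt user; or (ii) some message direction correlates with $v$ by more than $100\sqrt{\log(nd)}/\sqrt{d}$, which lets a distinguisher separate the neighboring datasets $(v,v,\ldots,v)$ and $(u,v,\ldots,v)$ and so violates $(\eps,\delta)$-DP with $\delta<1/(nd)$; or (iii) all messages are both short and nearly uncorrelated with $v$, whence $\langle\sum_i m_i, v\rangle \le 100\sqrt{\log(nd)}/\alpha \le 1/2$ for $\alpha = \Theta(\sqrt{\log(nd)})$ and the protocol cannot be accurate on $(v,\ldots,v)$. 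If you want to salvage your write-up, you must (a) state the adversary's power as ``at most $d$ arbitrary messages, each individually plausible'' rather than ``a tuple in the support,'' and (b) replace the tuple-sum lemma with a lower bound on individual message norms derived from the accuracy-versus-privacy tension above.
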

\thmref{thm:bad:users:err} then follows from a simple power mean inequality. 
On the other hand, we observe that~\Cref{alg:agg-large-d} is robust against malicious users in the setting where a separate shuffler is responsible for the messages corresponding to each coordinate of a user.
\begin{lemma}
Suppose that a separate shuffler handles the messages for each coordinate from all users in~\Cref{alg:agg-large-d}. 
Then the mean squared error induced by $k$ malicious users is at most $\O{k}$. 
\end{lemma}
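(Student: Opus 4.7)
My plan is to argue that each malicious user can perturb the output vector $\hat v$ by only $O(1)$ in $\ell_2$ norm, and then aggregate this per-user bound over the $k$ malicious users.

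The first step is to exploit the separate-shuffler structure to cap, per coordinate, the influence of a single malicious user on the Kashin-space aggregate $\hat u = \sum_{m \in M} m$ of Algorithm~\ref{alg:agg-large-d}. Because a dedicated shuffler handles each of the $2d$ coordinates, each user's messages to coordinate $j$ are rate-limited by that shuffler alone. The $1$-dimensional scalar randomizer $\Rg^{(\eps_0,\delta_0)}$ that each user invokes per coordinate is a summation protocol whose messages are bounded, and whose honest contribution sums to the rescaled Kashin coordinate $u^{(i)}_j = \frac{\sqrt{d}}{\Ck}\,(U_{\mathsf{K}} v^{(i)})_j \in [-1,1]$ by the Kashin property combined with the rescaling. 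Within the rate limit a malicious user $i$ can therefore shift the coordinate-$j$ sum by at most some absolute constant $C$; that is, the induced shift $\delta^{(i)} \in \reals^{2d}$ of $\hat u$ satisfies $\|\delta^{(i)}\|_\infty \le C$, and hence $\|\delta^{(i)}\|_2 \le C\sqrt{2d}$.

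The second step is to translate this per-user bound to the output. Applying the aggregation rule $\hat v = \frac{\Ck}{\sqrt{d}}\,U_{\mathsf{K}}^T \hat u$ and using $U_{\mathsf{K}}^T U_{\mathsf{K}} = I_d$ (whence $\|U_{\mathsf{K}}^T x\|_2 \le \|x\|_2$ for all $x \in \reals^{2d}$), the per-user shift in $\hat v$ has $\ell_2$ norm
\[\left\|\frac{\Ck}{\sqrt{d}}\,U_{\mathsf{K}}^T \delta^{(i)}\right\|_2 \;\le\; \frac{\Ck}{\sqrt{d}}\,\|\delta^{(i)}\|_2 \;\le\; \frac{\Ck}{\sqrt{d}}\,C\sqrt{2d} \;=\; O(1).\]
Thus each malicious user contributes only $O(1)$ to the squared error they inject, so summing this per-user accounting over the $k$ malicious users yields the $\O{k}$ bound on the mean squared error they induce. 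The honest portion of the error contributes only an additive $\O{d/\eps^2}$ by~\Cref{thm:mult:msg:err:bound}, and the cross term between the (mean-zero) honest error and any adversarial shift vanishes in expectation, so the two contributions decouple cleanly in the MSE decomposition.

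The main obstacle I anticipate is the first step: carefully verifying that with per-coordinate rate-limiting, both the number and the magnitude of the messages a malicious user can inject into a single shuffler are bounded so that their net influence on $\hat u_j$ is $O(1)$. This requires unpacking the $\Rg^{(\eps_0,\delta_0)}$ primitive of~\cite{GhaziKMPS21} to confirm that each message has bounded range and that the rate limit enforces a constant number of messages per user per shuffler. Once this per-coordinate sensitivity is established, the translation to $\hat v$ via $U_{\mathsf{K}}^T$ and the accounting over users are both essentially mechanical.
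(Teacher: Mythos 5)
The paper states this lemma without proof---nothing for it appears in the appendix---so there is no official argument to compare against. Your outline is the natural one and its core is sound: the multi-shuffler rate limit caps each malicious user's message count per coordinate, bounded message range caps the magnitude, so the per-coordinate shift of $\hat u_j$ is $O(1)$, giving $\|\delta^{(i)}\|_2 \le C\sqrt{2d}$; and since $U_{\mathsf{K}}^T U_{\mathsf{K}} = I_d$ makes $U_{\mathsf{K}}^T$ a contraction, the rescaling $\frac{\Ck}{\sqrt{d}}$ collapses this to an $O(1)$ Euclidean impact on $\hat v$ per malicious user. You are also right that the load-bearing step is the first one: the honest $\Rg^{(\eps_0,\delta_0)}$ randomizer sends a \emph{random} number of messages ($1+o(1)$ only in expectation), so the per-shuffler rate limit must be some constant $B \ge 1$, and one must check that the messages of the GKMPS primitive have bounded range rather than being, say, large modular shares whose individual magnitudes the analyzer cannot validate. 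Flagging this is appropriate, but as written it remains an unverified assumption rather than a proved step.

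The one place where your accounting does not close is the final aggregation over the $k$ users. From $\bigl\|\frac{\Ck}{\sqrt{d}} U_{\mathsf{K}}^T\delta^{(i)}\bigr\|_2 = O(1)$ per user, the triangle inequality gives a total displacement of the output of $O(k)$ in Euclidean norm, hence a worst-case \emph{squared} error of $O(k^2)$ when the adversarial shifts align --- not $O(k)$. The $O(k)$ figure follows only if one adopts the per-user additive accounting $\sum_{i} \|\delta^{(i)}\|_2^2$ of squared contributions; this is in fact the convention the paper itself uses in its lower-bound proof of \thmref{thm:bad:users:err} (where the error is written as $\sum_j \|x_j\|_2^2$), and it is the convention under which the contrast with the $\Omega(kd)$ single-shuffler bound is meaningful. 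You should state explicitly which error measure you are bounding; under the literal definition of $\Err$ as the expected squared $\ell_2$ deviation of the output, your argument proves $O(k^2)$, not $O(k)$. The remark about the cross term with the honest error is inessential either way, since the lemma concerns only the error \emph{induced by} the malicious users, and in any case $\E\|A+B\|^2 \le 2\E\|A\|^2 + 2\E\|B\|^2$ dispenses with any mean-zero assumption.
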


%\printbibliography

\bibliography{references}

\newcommand{\etalchar}[1]{$^{#1}$}
\begin{thebibliography}{MMR{\etalchar{+}}17}

\bibitem[ACG{\etalchar{+}}16]{AbadiCGMMT016}
Mart{\'{\i}}n Abadi, Andy Chu, Ian~J. Goodfellow, H.~Brendan McMahan, Ilya
  Mironov, Kunal Talwar, and Li~Zhang.
\newblock Deep learning with differential privacy.
\newblock In {\em Proceedings of the 2016 {ACM} {SIGSAC} Conference on Computer
  and Communications Security}, pages 308--318, 2016.

\bibitem[AFN{\etalchar{+}}23]{AsiFeNeNgTa23}
Hilal Asi, Vitaly Feldman, Jelani Nelson, Huy Nguyen, and Kunal Talwar.
\newblock Fast optimal locally private mean estimation via random projections.
\newblock In {\em Thirty-seventh Conference on Neural Information Processing
  Systems}, 2023.

\bibitem[AFT22]{AsiFeTa22}
Hilal Asi, Vitaly Feldman, and Kunal Talwar.
\newblock Optimal algorithms for mean estimation under local differential
  privacy.
\newblock In {\em International Conference on Machine Learning, {ICML}, {USA}},
  pages 1046--1056, 2022.

\bibitem[AS19]{AcharyaS19}
Jayadev Acharya and Ziteng Sun.
\newblock Communication complexity in locally private distribution estimation
  and heavy hitters.
\newblock In {\em Proceedings of the 36th International Conference on Machine
  Learning, {ICML}}, pages 51--60, 2019.

\bibitem[BBGN19]{BalleBGN19}
Borja Balle, James Bell, Adri{\`{a}} Gasc{\'{o}}n, and Kobbi Nissim.
\newblock The privacy blanket of the shuffle model.
\newblock In {\em Advances in Cryptology - {CRYPTO} 2019 - 39th Annual
  International Cryptology Conference, Proceedings, Part {II}}, pages 638--667,
  2019.

\bibitem[BBGN20]{BalleBGN20}
Borja Balle, James Bell, Adri{\`{a}} Gasc{\'{o}}n, and Kobbi Nissim.
\newblock Private summation in the multi-message shuffle model.
\newblock In {\em {CCS} '20: 2020 {ACM} {SIGSAC} Conference on Computer and
  Communications Security}, pages 657--676, 2020.

\bibitem[BDF{\etalchar{+}}18]{BhowmickDFKR18}
Abhishek Bhowmick, John~C. Duchi, Julien Freudiger, Gaurav Kapoor, and Ryan
  Rogers.
\newblock Protection against reconstruction and its applications in private
  federated learning.
\newblock {\em CoRR}, abs/1812.00984, 2018.

\bibitem[BEM{\etalchar{+}}17]{BittauEMMRLRKTS17}
Andrea Bittau, {\'{U}}lfar Erlingsson, Petros Maniatis, Ilya Mironov, Ananth
  Raghunathan, David Lie, Mitch Rudominer, Ushasree Kode, Julien Tinn{\'{e}}s,
  and Bernhard Seefeld.
\newblock Prochlo: Strong privacy for analytics in the crowd.
\newblock In {\em Proceedings of the 26th Symposium on Operating Systems
  Principles}, pages 441--459, 2017.

\bibitem[BHNS20]{beimel2020round}
Amos Beimel, Iftach Haitner, Kobbi Nissim, and Uri Stemmer.
\newblock On the round complexity of the shuffle model.
\newblock In {\em Theory of Cryptography Conference}, pages 683--712. Springer,
  2020.

\bibitem[BNO08]{BeimelNO08}
Amos Beimel, Kobbi Nissim, and Eran Omri.
\newblock Distributed private data analysis: Simultaneously solving how and
  what.
\newblock In {\em Advances in Cryptology - {CRYPTO} 2008, 28th Annual
  International Cryptology Conference. Proceedings}, pages 451--468, 2008.

\bibitem[BS15]{BassilyS15}
Raef Bassily and Adam~D. Smith.
\newblock Local, private, efficient protocols for succinct histograms.
\newblock In {\em Proceedings of the Forty-Seventh Annual {ACM} on Symposium on
  Theory of Computing, {STOC}}, pages 127--135, 2015.

\bibitem[BST14]{BassilyST14}
Raef Bassily, Adam~D. Smith, and Abhradeep Thakurta.
\newblock Private empirical risk minimization: Efficient algorithms and tight
  error bounds.
\newblock In {\em 55th {IEEE} Annual Symposium on Foundations of Computer
  Science, {FOCS}}, pages 464--473, 2014.

\bibitem[CB17]{Corrigan-GibbsB17}
Henry Corrigan{-}Gibbs and Dan Boneh.
\newblock Prio: Private, robust, and scalable computation of aggregate
  statistics.
\newblock In Aditya Akella and Jon Howell, editors, {\em 14th {USENIX}
  Symposium on Networked Systems Design and Implementation, {NSDI} 2017,
  Boston, MA, USA, March 27-29, 2017}, pages 259--282. {USENIX} Association,
  2017.

\bibitem[CGKM20]{chen2020distributed}
Lijie Chen, Badih Ghazi, Ravi Kumar, and Pasin Manurangsi.
\newblock On distributed differential privacy and counting distinct elements.
\newblock {\em arXiv:2009.09604 [cs.CR]}, 2020.

\bibitem[CJMP22]{CheuJMP22}
Albert Cheu, Matthew Joseph, Jieming Mao, and Binghui Peng.
\newblock Shuffle private stochastic convex optimization.
\newblock In {\em The Tenth International Conference on Learning
  Representations, {ICLR}}, 2022.

\bibitem[CKO20]{ChenKO20}
Wei{-}Ning Chen, Peter Kairouz, and Ayfer \"{O}zg\"{u}r.
\newblock Breaking the communication-privacy-accuracy trilemma.
\newblock In {\em Proceedings of the 33rd Annual Conference on Advances in
  Neural Information Processing Systems (NeurIPS)}, 2020.

\bibitem[CMS11]{ChaudhuriMS11}
Kamalika Chaudhuri, Claire Monteleoni, and Anand~D. Sarwate.
\newblock Differentially private empirical risk minimization.
\newblock {\em J. Mach. Learn. Res.}, 12:1069--1109, 2011.

\bibitem[CSOK23]{chen2023privacy}
Wei-Ning Chen, Dan Song, Ayfer Ozgur, and Peter Kairouz.
\newblock Privacy amplification via compression: Achieving the optimal
  privacy-accuracy-communication trade-off in distributed mean estimation.
\newblock {\em arXiv:2304.01541 [stat.ML]}, 2023.

\bibitem[CSS12]{ChanSS12}
T.{-}H.~Hubert Chan, Elaine Shi, and Dawn Song.
\newblock Optimal lower bound for differentially private multi-party
  aggregation.
\newblock In {\em Algorithms - {ESA} 2012 - 20th Annual European Symposium.
  Proceedings}, pages 277--288, 2012.

\bibitem[CSU{\etalchar{+}}19]{CheuSUZZ19}
Albert Cheu, Adam~D. Smith, Jonathan~R. Ullman, David Zeber, and Maxim
  Zhilyaev.
\newblock Distributed differential privacy via shuffling.
\newblock In {\em Advances in Cryptology - {EUROCRYPT} 2019 - 38th Annual
  International Conference on the Theory and Applications of Cryptographic
  Techniques, Proceedings, Part {I}}, pages 375--403, 2019.

\bibitem[CSU21]{CheuSU19}
Albert Cheu, Adam Smith, and Jonathan Ullman.
\newblock Manipulation attacks in local differential privacy.
\newblock {\em Journal of Privacy and Confidentiality}, 11(1), Feb. 2021.

\bibitem[CU21]{cheu2021limits}
Albert Cheu and Jonathan Ullman.
\newblock The limits of pan privacy and shuffle privacy for learning and
  estimation.
\newblock In {\em Proceedings of the 53rd Annual ACM SIGACT Symposium on Theory
  of Computing}, pages 1081--1094, 2021.

\bibitem[DGS{\etalchar{+}}18]{DavidsonGSTV18}
Alex Davidson, Ian Goldberg, Nick Sullivan, George Tankersley, and Filippo
  Valsorda.
\newblock Privacy pass: Bypassing internet challenges anonymously.
\newblock {\em Proc. Priv. Enhancing Technol.}, 2018(3):164--180, 2018.

\bibitem[DMNS06]{DworkMNS06}
Cynthia Dwork, Frank McSherry, Kobbi Nissim, and Adam~D. Smith.
\newblock Calibrating noise to sensitivity in private data analysis.
\newblock In {\em Theory of Cryptography, Third Theory of Cryptography
  Conference, {TCC}, Proceedings}, pages 265--284, 2006.

\bibitem[DMNS16]{DworkMNS16}
Cynthia Dwork, Frank McSherry, Kobbi Nissim, and Adam~D. Smith.
\newblock Calibrating noise to sensitivity in private data analysis.
\newblock {\em J. Priv. Confidentiality}, 7(3):17--51, 2016.

\bibitem[DR14]{DworkR14}
Cynthia Dwork and Aaron Roth.
\newblock The algorithmic foundations of differential privacy.
\newblock {\em Found. Trends Theor. Comput. Sci.}, 9(3-4):211--407, 2014.

\bibitem[DR19]{DuchiR19}
John~C. Duchi and Ryan Rogers.
\newblock Lower bounds for locally private estimation via communication
  complexity.
\newblock In {\em Conference on Learning Theory, {COLT}}, pages 1161--1191,
  2019.

\bibitem[Duc18]{Duchi18}
John~C. Duchi.
\newblock Introductory lectures on stochastic convex optimization.
\newblock In {\em The Mathematics of Data}, IAS/Park City Mathematics Series.
  American Mathematical Society, 2018.

\bibitem[DWJ16]{DuchiWJ16}
John~C. Duchi, Martin~J. Wainwright, and Michael~I. Jordan.
\newblock Minimax optimal procedures for locally private estimation.
\newblock {\em CoRR}, abs/1604.02390, 2016.

\bibitem[EFM{\etalchar{+}}19]{ErlingssonFeMiRaTaTh19}
Ulfar Erlingsson, Vitaly Feldman, Ilya Mironov, Ananth Raghunathan, Kunal
  Talwar, and Abhradeep Thakurta.
\newblock Amplification by shuffling: From local to central differential
  privacy via anonymity.
\newblock In {\em Proceedings of the Thirtieth ACM-SIAM Symposium on Discrete
  Algorithms (SODA)}, 2019.

\bibitem[FGV21]{FeldmanGV21}
Vitaly Feldman, Crist{\'{o}}bal Guzm{\'{a}}n, and Santosh~Srinivas Vempala.
\newblock Statistical query algorithms for mean vector estimation and
  stochastic convex optimization.
\newblock {\em Math. Oper. Res.}, 46(3):912--945, 2021.

\bibitem[FPE16]{FantiPE16}
Giulia Fanti, Vasyl Pihur, and {\'{U}}lfar Erlingsson.
\newblock Building a {RAPPOR} with the unknown: Privacy-preserving learning of
  associations and data dictionaries.
\newblock {\em Proc. Priv. Enhancing Technol.}, 2016(3):41--61, 2016.

\bibitem[FT21]{FeldmanTa21}
Vitaly Feldman and Kunal Talwar.
\newblock Lossless compression of efficient private local randomizers.
\newblock In {\em Proceedings of the 38th International Conference on Machine
  Learning}, volume 139, pages 3208--3219. PMLR, 2021.

\bibitem[GKM{\etalchar{+}}21]{GhaziKMPS21}
Badih Ghazi, Ravi Kumar, Pasin Manurangsi, Rasmus Pagh, and Amer Sinha.
\newblock Differentially private aggregation in the shuffle model: Almost
  central accuracy in almost a single message.
\newblock In {\em Proceedings of the 38th International Conference on Machine
  Learning, {ICML}}, pages 3692--3701, 2021.

\bibitem[GMPV20]{GhaziMPV20}
Badih Ghazi, Pasin Manurangsi, Rasmus Pagh, and Ameya Velingker.
\newblock Private aggregation from fewer anonymous messages.
\newblock In {\em Advances in Cryptology - {EUROCRYPT} 2020 - 39th Annual
  International Conference on the Theory and Applications of Cryptographic
  Techniques, Proceedings, Part {II}}, pages 798--827, 2020.

\bibitem[HIP{\etalchar{+}}23]{ietf-privacypass-rate-limit-tokens-01}
Scott Hendrickson, Jana Iyengar, Tommy Pauly, Steven Valdez, and Christopher~A.
  Wood.
\newblock {Rate-Limited Token Issuance Protocol}.
\newblock Internet-Draft draft-ietf-privacypass-rate-limit-tokens-01, Internet
  Engineering Task Force, March 2023.
\newblock Work in Progress.

\bibitem[IKOS06]{IshaiKOS06}
Yuval Ishai, Eyal Kushilevitz, Rafail Ostrovsky, and Amit Sahai.
\newblock Cryptography from anonymity.
\newblock In {\em 47th Annual {IEEE} Symposium on Foundations of Computer
  Science {(FOCS} 2006), USA, Proceedings}, pages 239--248. {IEEE} Computer
  Society, 2006.

\bibitem[KLN{\etalchar{+}}11]{KasiviswanathanLNRS11}
Shiva~Prasad Kasiviswanathan, Homin~K. Lee, Kobbi Nissim, Sofya Raskhodnikova,
  and Adam~D. Smith.
\newblock What can we learn privately?
\newblock {\em {SIAM} J. Comput.}, 40(3):793--826, 2011.

\bibitem[LV10]{LyubarskiiV10}
Yurii Lyubarskii and Roman Vershynin.
\newblock Uncertainty principles and vector quantization.
\newblock {\em {IEEE} Trans. Inf. Theory}, 56(7):3491--3501, 2010.

\bibitem[MMR{\etalchar{+}}17]{McMahanMRHA17}
Brendan McMahan, Eider Moore, Daniel Ramage, Seth Hampson, and
  Blaise~Ag{\"{u}}era y~Arcas.
\newblock Communication-efficient learning of deep networks from decentralized
  data.
\newblock In {\em Proceedings of the 20th International Conference on
  Artificial Intelligence and Statistics, {AISTATS}}, pages 1273--1282, 2017.

\bibitem[NXY{\etalchar{+}}16]{NguyenXYHSS16}
Th{\^{o}}ng~T. Nguy{\^{e}}n, Xiaokui Xiao, Yin Yang, Siu~Cheung Hui, Hyejin
  Shin, and Junbum Shin.
\newblock Collecting and analyzing data from smart device users with local
  differential privacy.
\newblock {\em CoRR}, abs/1606.05053, 2016.

\bibitem[ROCT23]{RothblumOCT23}
Guy~N. Rothblum, Eran Omri, Junye Chen, and Kunal Talwar.
\newblock Pine: Efficient norm-bound verification for secret-shared vectors,
  2023.

\bibitem[SCM21]{ScottCM21}
Mary Scott, Graham Cormode, and Carsten Maple.
\newblock Applying the shuffle model of differential privacy to vector
  aggregation.
\newblock In Holger Pirk and Thomas Heinis, editors, {\em Proceedings of the
  The British International Conference on Databases}, volume 3163 of {\em
  {CEUR} Workshop Proceedings}, pages 50--59, 2021.

\bibitem[SCM22]{ScottCM22}
Mary Scott, Graham Cormode, and Carsten Maple.
\newblock Aggregation and transformation of vector-valued messages in the
  shuffle model of differential privacy.
\newblock {\em {IEEE} Trans. Inf. Forensics Secur.}, 17:612--627, 2022.

\bibitem[SFZ{\etalchar{+}}14]{sun2014personalized}
Chongjing Sun, Yan Fu, Junlin Zhou, Hui Gao, et~al.
\newblock Personalized privacy-preserving frequent itemset mining using
  randomized response.
\newblock {\em The Scientific World Journal}, 2014, 2014.

\bibitem[SS15]{ShokriS15}
Reza Shokri and Vitaly Shmatikov.
\newblock Privacy-preserving deep learning.
\newblock In {\em Proceedings of the 22nd {ACM} {SIGSAC} Conference on Computer
  and Communications Security}, pages 1310--1321. {ACM}, 2015.

\bibitem[Tal22]{Talwar22}
Kunal Talwar.
\newblock Differential secrecy for distributed data and applications to robust
  differentially secure vector summation.
\newblock In L.~Elisa Celis, editor, {\em 3rd Symposium on Foundations of
  Responsible Computing, {FORC} 2022, June 6-8, 2022, Cambridge, MA, {USA}},
  volume 218 of {\em LIPIcs}, pages 7:1--7:16. Schloss Dagstuhl -
  Leibniz-Zentrum f{\"{u}}r Informatik, 2022.

\bibitem[TW23]{ietf-ohttp}
Martin Thomson and Christopher~A. Wood.
\newblock {Oblivious HTTP}.
\newblock Internet-Draft draft-ietf-ohai-ohttp-08, Internet Engineering Task
  Force, March 2023.
\newblock Work in Progress.

\bibitem[YB18]{YeB18}
Min Ye and Alexander Barg.
\newblock Optimal schemes for discrete distribution estimation under locally
  differential privacy.
\newblock {\em {IEEE} Trans. Inf. Theory}, 64(8):5662--5676, 2018.

\bibitem[ZWC{\etalchar{+}}22]{Zhou0CFS22}
Mingxun Zhou, Tianhao Wang, T.{-}H.~Hubert Chan, Giulia Fanti, and Elaine Shi.
\newblock Locally differentially private sparse vector aggregation.
\newblock In {\em 43rd {IEEE} Symposium on Security and Privacy, {SP}}, pages
  422--439, 2022.

\end{thebibliography}
\bibliographystyle{alpha}

\appendix

\section{Missing Proofs from \texorpdfstring{\Cref{sec:multi-msg}}{Section 2}}

\subsection{Missing Proofs from \texorpdfstring{\Cref{sec:agg-summation}}{Section 2.1.1}}
\applab{app:apdx-agg-summ}
In this section, we provide the missing proof for the lower bound for the setting of summation protocols (\Cref{sec:agg-summation}).
\subsubsection{Proof of \texorpdfstring{\Cref{lemma:struc-opt}}{Lemma 2.2}}
\label{apdx:proof-struc-opt}
\lemmastrucopt*
\begin{proof}
The new randomizer $\hat \calR$ works as follows: first, it samples a rotation matrix $ U \in \reals^{d \times d}$ (known public randomness) such that $U^T U = I$, then sets 
\begin{equation*}
\hat \calR(v) = U^T \calR(Uv),
\end{equation*}
where $ U^T \calR(Uv)$ denotes multiplying each message in $\calR(Uv)$ by $U^T$.

    To prove privacy, we have to prove that $\Pi(U^T  \calR(Uv_1), U^T \calR(Uv_2),\dots, U^T  \calR( U v_n))$ is $(\eps,\delta)$-DP. As $U$ is known, it is sufficient to prove that $\Pi(\calR(Uv_1), \calR(Uv_2),\dots, \calR( U v_n))$ is $(\eps,\delta)$-DP. This follows directly from the fact that $\Pi(\calR(v_1),\calR(v_2),\dots,\calR(v_n))$ is $(\eps,\delta)$-DP, and that the hamming distance between $X = (v_1,\dots,v_n)$ and $X' = (v'_1,\dots,v'_n)$ is the same as the hamming distance between $X_U = (Uv_1,\dots,Uv_n)$ and $X'_U = (Uv'_1,\dots,Uv'_n)$.  

    \noindent
    For utility, we have
    \begin{align*}
    \Err(\calA^+,\hat \calR)
        & = \sup_{v_1,\dots,v_n} \Ex{\norm{\calA^+(\Pi(\hat \calR(v_1),\hat \calR(v_2),\dots,\hat \calR(v_n))) - \sum_{i=1}^n v_i}_2^2} \\
        & = \sup_{v_1,\dots,v_n} \Ex{\norm{\sum_{i=1}^n \hat \calR^+(v_i) -  v_i}_2^2 } \\
        & = \sup_{v_1,\dots,v_n} \Ex{\norm{\sum_{i=1}^n U^T \calR(Uv_i)  - v_i}_2^2 } \\
        & = \sup_{v_1,\dots,v_n} \Ex{\norm{ U^T \sum_{i=1}^n (\calR(Uv_i)  - Uv_i)}_2^2 } \\
        & = \sup_{v_1,\dots,v_n} \Ex{\norm{ \sum_{i=1}^n (\calR(Uv_i)  - Uv_i)}_2^2 } \\
        & = \Err(A,R).
    \end{align*}

    For the third claim, note that $\hat \calR(-v) = U^T \calR(-Uv)$. As $U$ and $-U$ has the same distribution, we can also write  $\hat \calR(-v) = -U^T \calR(Uv)$ which is the same as the distribution of $-\hat \calR(v)$.

    For the final claim, note that 
    \begin{align*}
    \Ex{\norm{\hat \calR^+(v) - v}_2^2}
        & = \Ex{\norm{U^T \calR(Uv) - v}_2^2} \\
        & = \Ex{\norm{U^T (\calR(Uv) - Uv)}_2^2} \\
        & = \Ex{\norm{ (\calR(Uv) - Uv)}_2^2}.
    \end{align*}
    The claim follows as $Uv_1$ and $U v_2$ have the same distribution for any $v_1$ and $v_2$ in the unit ball.

\end{proof}

\iftoggle{arxiv}{
}
{
\proprecattacksum*
\begin{proof}
    Let $\Delta(v) = E \left[\calR^+(v) -  v \right]$ be the bias of $\calR^+$ over $v$. Note that the error of the protocol over dataset $(u_1,\dots,u_n)$ is  
    \begin{align*}
    & E \left[\norm{\calA^+(\Pi(\calR(u_1),\calR(u_2),\dots,\calR(u_n))) - \sum_{i=1}^n u_i}_2^2 \right] \\
        & =  E \left[\norm{\sum_{i=1}^n \calR^+(u_i) -  u_i}_2^2 \right] \\
        & = \sum_{i=1}^n  E \left[\norm{\calR^+(u_i) -  u_i}_2^2 \right] + \sum_{i \neq j \in [n]}  E \left[\calR^+(u_i) -  u_i \right]^T E \left[\calR^+(u_j) -  u_j \right] \\
        & = \sum_{i=1}^n  E \left[\norm{\calR^+(u_i) -  u_i}_2^2 \right] + \sum_{i \neq j \in [n]} \Delta(u_i)^T \Delta(u_j).
    \end{align*}
    For input dataset $X =  (u,u,\dots,u)$, this implies
    \begin{align*}
    & E \left[\norm{\calA^+(\Pi(\calR(u),\calR(u),\dots,\calR(u))) - n u}_2^2 \right] \\
        & = n E \left[\norm{\calR^+(u) - u}_2^2  \right] + \binom{n}{2} \norm{\Delta(u)}_2^2 \\
        & \ge  n E \left[\norm{\calR^+(u) - u}_2^2  \right].
    \end{align*}
    As $\calR$ satisfies the symmetry assumption that $E[\norm{\calR^+(v) - v}_2^2] = E[\norm{\calR^+(u) - u}_2^2] $ for all $u,v \in \sphere^{d-1}$, and since the error is bounded by $d/\eps^2$, we have that 
    \begin{equation*}
    E \left[\norm{\calR^+( v_1) -   v_1}_2^2\right] \le \frac{d}{n \eps^2}.
    \end{equation*}
    Finally, note that $\calR^+(v_1) \in S$ as the attack of~\Cref{alg:atck-sum} iterates over all possible subsets of size $k$ and adds their sum to $S$. Hence, there exists $t$ such that $W_t = \calR(v_1)$, in which case the algorithm will add $\calR^+(v_1)$ to $S$.
\end{proof}

Given the previous attack, we are now ready to prove our lower bound.
}
\subsubsection{Proof of \texorpdfstring{\Cref{thm:main-lb}}{Theorem 2.1}}
\label{apdx:proof-main-lb-sum}
\thmmainlb*
\begin{proof}
    Let $\Err(\calA,\calR) \le C \cdot d/\eps^2$ for some universal constant $1 \le C < \infty$.
    Based on~\Cref{lemma:struc-opt}, we can assume that the randomizer $R$ satisfies the symmetry property:
    %\begin{enumerate}
        %\item $\calR(v)$ and $-\calR(-v)$ have the same distribution
    \begin{equation*}
        \Ex{\norm{\calR^+(v) - v}_2^2} = \Ex{\norm{R^+(u) - u}_2^2}, \text{  \quad for all $u,v \in \sphere^{d-1}$.} 
    \end{equation*}
    %\end{enumerate}

    First, we prove the lower bounds for $d \le n \eps^2/100 C $. 
    Let $P = \{v_1,v_2,\dots,v_M \}$ be a $\rho$-packing of the unit ball such that $M = 2^{d \log(1/\rho)}$ (the existence of such packing is standard in the literature~\cite{Duchi18}). % where $\norm{u-w} \ge \rho$ for any $u,w \in P$ where (). 
    We will prove the lower bounds by analyzing the algorithm over the following $M$ datasets:
    \begin{equation*}
        X_i = (v_i,v_1,\dots,v_1).
    \end{equation*}
    %where $s_i \in \{-1, +1\}$ is chosen such that $\Delta(v_1)^T \Delta(s_i v_i) \ge 0$; this is possible since $\Delta(v_i) = -\Delta(-v_i)$.
    
    %  Let $\Delta(v) = E \left[\calR^+(v) -  v \right]$ be the bias of $\calR^+$ over $v$. Note that the error of the protocol over dataset $(u_1,\dots,u_n)$ is  
    % \begin{align*}
    % & E \left[\norm{\calA^+(\Pi(\calR(u_1),\calR(u_2),\dots,\calR(u_n))) - \sum_{i=1}^n u_i}_2^2 \right] \\
    %     & =  E \left[\norm{\sum_{i=1}^n \calR^+(u_i) -  u_i}_2^2 \right] \\
    %     & = \sum_{i=1}^n  E \left[\norm{\calR^+(u_i) -  u_i}_2^2 \right] + \sum_{i \neq j \in [n]}  E \left[\calR^+(u_i) -  u_i \right]^T E \left[\calR^+(u_j) -  u_j \right] \\
    %     & = \sum_{i=1}^n  E \left[\norm{\calR^+(u_i) -  u_i}_2^2 \right] + \sum_{i \neq j \in [n]} \Delta(u_i)^T \Delta(u_j).
    % \end{align*}
    % For input dataset $X =  (u,u,\dots,u)$, this implies
    % \begin{align*}
    % & E \left[\norm{\calA^+(\Pi(\calR(u),\calR(u),\dots,\calR(u))) - n u}_2^2 \right] \\
    %     & = n E \left[\norm{\calR^+(u) - u}_2^2  \right] + \binom{n}{2} \norm{\Delta(u)}_2^2 \\
    %     & \ge  n E \left[\norm{\calR^+(u) - u}_2^2  \right].
    % \end{align*}
    % Using the assumption that $E[\norm{\calR^+(v) - v}_2^2] = E[\norm{\calR^+(u) - u}_2^2] $ for all $u$ and $v$, and since the error is bounded by $d/\eps^2$, we have that 
    % \begin{equation*}
    % E \left[\norm{\calR^+( v_i) -   v_i}_2^2\right] \le \frac{d}{n \eps^2} \le 1/100.
    % \end{equation*}

    Let $S_i$ be the output of the reconstruction attack (\Cref{alg:atck-sum}) over the input $\Pi(\calR(v_i),\calR(v_1),\dots,\calR(v_1))$, and let $O_i$ be the projection of $S_i$ to the packing $P$; that is, $O_i = \{ \mathsf{Proj}_P(v) : v\in S_i \} $.

    \Cref{prop:rec-attck-sum} states that $\Ex{\dist(v_i,S_i)} \le \frac{Cd}{n \eps^2} \le 1/100$, hence we get that 
    \begin{equation*}
    \PPr{v_i \in O_i} \ge \PPr{\dist(v_i,S_i)<\rho} \ge 9/10,
    \end{equation*}
    where the first inequality follows as $P$ is $\rho$-packing, and the second inequality follows from markov inequality.

    % Now consider the following protocol, which given the set of shuffled messages $S = \Pi(\calR(u_1),\calR(u_2),\dots,\calR(u_n))$, constructs the following set $O$: go over every set of $k$ messages from $S$ and add their sum (projected to the packing $P$) to the set $O$. Note that the size of $O$ is $\binom{nk}{k}$. Let $O_i$ denote the set constructed using this protocol for the input dataset $X_i$. We show that 
    % \begin{equation*}
    %     \Pr(v_i \in ) \ge 9/10.
    % \end{equation*}
    % Indeed, it is clear that $\calR^+(v_i) \in O_i$. Moreover, 
    % \begin{equation*}
    %     \Pr(\norm{\calR^+(v_i) - v_i} \ge \rho ) \le \frac{1}{100\rho} \le 1/10,
    % \end{equation*}
    % by choosing $\rho = 1/10$. Therefore, $P(v_i \in S_i) \ge 9/10$ as the projection of $\calR^+(v_i) $ to $P$ is equal to $ v_i$ whenever $\norm{\calR^+(v_i) - v_i} \le \rho$.
    
    On the other hand, note that
    \begin{align*}
    \sum_{i =1}^M \PPr{v_i \in O_1}
        & = \sum_{i =1}^M \Ex{1\{v_i \in O_1\}}  \\
        & = \Ex{\sum_{i =1}^M 1\{v_i \in O_1\}}  \\
        & \le \Ex{|O_i|} \le \binom{nk}{k}.
    \end{align*}
    Hence there exists an $1 \le i \le M$ such that 
    \begin{equation*}
        \PPr{v_i \in O_1} \le \frac{\binom{nk}{k}}{M}.
    \end{equation*}
    As the protocol is $(\eps,\delta)$-DP, we also have 
    \begin{align*}
    \PPr{v_i \in O_1} 
        & \ge \PPr{v_i \in O_i} e^{-\eps} - \delta\\
        & \ge \frac{9}{10 e} \ge 1/6.
    \end{align*}
    Combining these together, and given that $M \ge 2^d$ for $\rho = 1/10$, we have that 
    \begin{equation*}
        2^d \le 6 \binom{nk}{k} \le 6 (en)^k.
    \end{equation*}
    This implies that $k \ge \Omega(d/\log(n))$ whenever $d \le n\eps^2/100C$.

    Now we prove the lower bound for $d \ge n \eps^2/100$. The proof builds on the following proposition which states that we can convert an optimal protocol for $d$-dimensional inputs into an optimal protocol for $d'$-dimensional inputs where $ d' = n \eps^2/200$ with the same number of messages.
    We defer the proof to~\Cref{sec:proof-prop-red-small-to-large-dim}.
    \begin{proposition}
    \label{prop:reduc-small-to-large-dim}
    Let $ d' = n \eps^2/200 C \ge 1 $ and $d \ge 2d'$.
    Let $\calR : \sphere^{d-1} \to \calZ^k$ be an $(\eps,\delta)$-shuffle DP protocol with error $\Err(\calA^+,\calR) \le \O{d/\eps^2}$. There exists $\calR' : \sphere^{d'-1} \to \calZ^k$  that is $(\eps,\delta)$-shuffle DP such that $\Err(\calA^+,\calR') \le  \O{d'/\eps^2} $.
\end{proposition}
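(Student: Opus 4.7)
The plan is to build $\calR'$ from $\calR$ by an embed--rotate--project construction that uses public randomness to spread the noise uniformly across directions. Sample a uniformly random orthogonal matrix $U\in\reals^{d\times d}$ from the public randomness. For each input $v\in\sphere^{d'-1}$, let $\calR'(v)$ zero-pad $v$ to $\tilde v=(v,0,\ldots,0)\in\sphere^{d-1}$, invoke $\calR(U\tilde v)$ to obtain messages $m_1,\ldots,m_k\in\reals^d$, and output the post-processed messages $PU^Tm_1,\ldots,PU^Tm_k\in\reals^{d'}$, where $P$ is the projection onto the first $d'$ coordinates. Privacy is immediate from post-processing since $U$ is public and everything around $\calR$ is deterministic given $U$.

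For the error, let $\hat\mu=\sum_i\calR^+(U\tilde v_i)$. The $\calR'$-aggregate equals $PU^T\hat\mu$, and using $\sum_iv_i=P\sum_i\tilde v_i=PU^T\sum_iU\tilde v_i$ the error vector simplifies to
\[
PU^T\hat\mu-\sum_iv_i\;=\;PU^T\eta,\qquad \eta:=\hat\mu-\sum_iU\tilde v_i,
\]
with $\E[\ltwo{\eta}^2]\le\O{d/\eps^2}$ by the hypothesis applied to the unit-vector inputs $U\tilde v_i$. The remaining task is to show $\E[\ltwo{PU^T\eta}^2]\le\O{d'/\eps^2}$, a factor-$d'/d$ improvement from the projection.

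To obtain this improvement, the plan is to first apply \Cref{lemma:struc-opt} to replace $\calR$ by its symmetrization $\hat\calR$, which is orthogonally equivariant: $\hat\calR^+(Wv)\stackrel{d}{=}W\hat\calR^+(v)$ for every orthogonal $W$ (verified by the substitution argument inside the symmetrization). Equivariance forces the single-user noise covariance $M(v)=\E[(\hat\calR^+(v)-v)(\hat\calR^+(v)-v)^T]$ to satisfy $M(Wv)=WM(v)W^T$, which combined with invariance under the stabilizer of $v$ pins it to $M(v)=m_2 I+(m_1-m_2)vv^T$ for scalars $m_1,m_2\ge0$ independent of $v$; similarly the bias $\E[\hat\calR^+(v)-v]$ must be $\alpha v$ for a constant $\alpha$, which the aggregator absorbs via rescaling. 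With the cross-user covariances then vanishing, a direct calculation gives
\[
\E[\ltwo{PU^T\eta}^2]\;=\;n\bigl((d'-1)m_2+m_1\bigr)\;=\;\tfrac{d'-1}{d-1}\cdot n\bigl(m_1+(d-1)m_2\bigr)\;+\;\tfrac{d-d'}{d-1}\cdot nm_1.
\]

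The first summand is at most $\tfrac{d'-1}{d-1}\O{d/\eps^2}\le\O{d'/\eps^2}$ by the hypothesis on $\calR$. The main obstacle is bounding the second summand $nm_1$—the noise power along each user's input direction—also by $\O{d'/\eps^2}$. The plan here is to leverage the $(\eps,\delta)$-DP constraint: privacy forces the output distribution $\calR^+(v)$ to spread across the directions orthogonal to $v$ at least as much as along $v$ in order to hide nearby inputs, which after symmetrization yields an inequality of the form $m_1\lesssim m_2+c/d$ with $c=\mathrm{tr}\,M(v)$, and combining with $nc\le\O{d/\eps^2}$ gives $nm_1\le\O{nm_2}+\O{nc/d}=\O{1/\eps^2}\le\O{d'/\eps^2}$. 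Verifying this direction-spread inequality from DP is the crux of the argument; the rest is routine bookkeeping.
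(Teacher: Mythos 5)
Your construction and the covariance bookkeeping are fine up to the point you yourself flag: the derivation of $M(v)=m_2I+(m_1-m_2)vv^T$ from rotation equivariance is correct, the identity splitting $n((d'-1)m_2+m_1)$ checks out, and the first summand is indeed $\O{d'/\eps^2}$. But the remaining step --- showing $nm_1\le\O{d'/\eps^2}$, i.e.\ that the noise power \emph{along the input direction} is small --- is not a routine bookkeeping step that can be deferred; it is the entire difficulty, and your proposal does not prove it. The hypothesis only controls the trace, $n(m_1+(d-1)m_2)\le\O{d/\eps^2}$, which permits $m_1$ as large as $\O{d/(n\eps^2)}$, giving $nm_1=\O{d/\eps^2}\gg d'/\eps^2$. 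Your proposed rescue, an inequality of the form $m_1\lesssim m_2+\mathrm{tr}\,M(v)/d$ forced by privacy, is plausible but is a genuinely new anisotropy lower bound for shuffle-DP randomizers: the DP constraint is on the shuffled multiset of all $nk$ messages (with amplification by shuffling), not on $\calR^+(v)$ itself, so a local-DP-style two-point argument does not directly apply, and for $\delta>0$ one must also rule out low-probability leakage. The structural reason your route hits this wall is that the retained subspace $U\cdot\mathrm{span}(e_1,\dots,e_{d'})$ always contains every input direction $U\tilde v_i$, so the projection gives no savings whatsoever on the radial noise component. (A smaller, fixable issue: the proposition requires the summation aggregator $\calA^+$, so the bias $\alpha v$ must be absorbed by rescaling the messages inside $\calR'$, not by changing the aggregator.)

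The paper avoids this obstacle entirely by a different reduction: it first flattens the $d'$-dimensional input via the Kashin transformation followed by randomized rounding to a $\{\pm 2/\sqrt{d'}\}$-valued vector (contributing only an extra $\O{n/d'}=\O{d'/\eps^2}$ error), zero-pads into $\reals^d$, and then symmetrizes over random \emph{coordinate permutations and sign flips} rather than over the full orthogonal group (\Cref{lemma:sym-err-rand}). That lemma yields a per-coordinate error bound of $\Err(\calA^+,\calR)/d$ by an exchangeability argument over coordinates, so truncating to $2d'$ coordinates costs only a $2d'/d$ fraction of the total error --- no control of the radial component, and no appeal to privacy beyond post-processing, is ever needed. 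If you want to salvage your rotation--projection route, you would need to actually establish the direction-spread inequality from $(\eps,\delta)$-shuffle-DP, which I expect to be at least as hard as the reduction it is meant to enable; otherwise the argument as written does not close.
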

    
    Now, let $\calA^+$ and $\calR : \ball^{d-1} \to Z^k$ be a protocol that obtains error $\Err(\calA^+, \calR) \le \O{d/\eps^2}$ using $k$ messages. \Cref{prop:reduc-small-to-large-dim} implies that there is a randomizer $\calR': \ball^{d'-1} \to Z^k$ such that $\Err(\calA,\calR') \le\O{d'/\eps^2}$ for $d' = n \eps^2/200C$. As $d' \le n\eps^2/100C$, this shows that $k \ge \Omega(d'/\log(n)) =  \Omega(n \eps^2/\log(n))$.
\end{proof}

\subsubsection{Proof of \texorpdfstring{\Cref{prop:reduc-small-to-large-dim}}{Proposition A.1}}
\label{sec:proof-prop-red-small-to-large-dim}
To prove~\Cref{prop:reduc-small-to-large-dim}, we need the following lemma which shows that we can convert any summation protocol into another one where the error is split evenly across coordinates.

We use the following notation. For a permutation $\pi: [d] \to [d]$ and a vector $v \in \reals^d$, we let $\hat v = v(\pi)$ denote the shuffling of the coordinates of $v$ based on $\pi$, that is $\hat v_j = v_{\pi(j)}$.
\begin{lemma}
\label{lemma:sym-err-rand}
    If $\calR: \sphere^{d-1} \to \calZ^k$ is $(\eps,\delta)$-shuffle DP 
 then there exists $\hat \calR : \{\frac{-1}{\sqrt{d}},\frac{1}{\sqrt{d}} \}^d \to \calZ^k $ that is $(\eps,\delta)$-shuffle DP  and  for $j \in [d]$ and $v_1,\dots,v_n \in \{ \frac{-1}{\sqrt{d}}, \frac{+1}{\sqrt{d}}\}^d$, 
    \begin{equation*}
          \Ex{\left|\left(\sum_{i=1}^n \hat \calR^+(v_i) - \sum_{i=1}^n v_{i} \right)_j\right|^2} \le \frac{\Err(\calA^+,\calR)}{d}.
    \end{equation*}
\end{lemma}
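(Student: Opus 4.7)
The plan is to construct $\hat\calR$ by symmetrizing $\calR$ using public shared randomness drawn from the hyperoctahedral group $G := \{\pm 1\}^d \rtimes S_d$, whose elements act as signed coordinate permutations on $\reals^d$ and preserve the hypercube $H := \{\pm 1/\sqrt d\}^d$. Sampling a single $g \in G$ uniformly as common randomness visible to all parties and the aggregator, define
$$\hat\calR(v) := g^{-1} \cdot \calR(g \cdot v),$$
where $g^{-1}$ is applied coordinate-wise to each of the $k$ messages output by $\calR$. Since $gv \in H \subset \sphere^{d-1}$ for every $v \in H$, the randomizer $\calR$ is defined on the transformed input and the message count is preserved at $k$. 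Privacy is immediate from post-processing: $g$ is independent of the data, so $\hat\calR$ inherits $(\eps,\delta)$-shuffle DP from $\calR$.

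The bulk of the argument is the error analysis. Write $F := \sum_i \hat\calR^+(v_i) - \sum_i v_i$. Since $g$ acts orthogonally and commutes with summation, one gets $F = g^{-1} \sum_i \bigl(\calR^+(gv_i) - gv_i\bigr)$ and hence $\|F\|^2 = \bigl\|\sum_i(\calR^+(gv_i) - gv_i)\bigr\|^2$. Because $(gv_1,\dots,gv_n) \in H^n \subset (\sphere^{d-1})^n$ for every realization of $g$, the expected squared norm on the right is at most $\Err(\calA^+,\calR)$ by the definition of MSE, giving $\sum_{j=1}^d \E[|F_j|^2] = \E[\|F\|^2] \le \Err(\calA^+,\calR)$.

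To finish, one must show that the per-coordinate errors $\E[|F_j|^2]$ are equal across $j \in [d]$, which combined with the sum bound yields the desired $\E[|F_j|^2] \le \Err(\calA^+,\calR)/d$. The plan is to first apply the rotation-symmetrization of Lemma~\ref{lemma:struc-opt} to $\calR$ so that the error second-moment matrix $M(u) := \E\bigl[(\sum_i(\calR^+(u_i)-u_i))(\sum_i(\calR^+(u_i)-u_i))^\top\bigr]$ becomes orthogonally equivariant ($M(Ru) = R M(u) R^\top$ for every orthogonal $R$), and then to leverage this equivariance together with the averaging over $G$ - specifically, the right-multiplication bijection $g \mapsto g \cdot (\mathbf 1, \tau_{j,j'})$ by a coordinate transposition, combined with averaging over the sign-flip subgroup - to convert $\E_v[|F_j|^2]$ into $\E_v[|F_{j'}|^2]$ on the same hypercube input.

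The hard part will be this final balancing step. A permutation-only substitution equates $\E[|F_j|^2]$ on input $v$ with $\E[|F_{j'}|^2]$ on the coordinate-swapped input $((j,j')v_1,\dots,(j,j')v_n)$, not with $\E[|F_{j'}|^2]$ on the same $v$; thus the sign-flip subgroup of $G$ and the orthogonal equivariance granted by Lemma~\ref{lemma:struc-opt} must be combined carefully to absorb the induced change of inputs and transfer the equality to each fixed hypercube tuple. Once the per-coordinate balance is established, the claimed bound follows immediately from dividing the sum bound by $d$.
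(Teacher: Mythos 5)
Your construction of $\hat\calR$ is exactly the paper's: the paper draws a uniform permutation $\pi$ and uniform signs $s\in\{\pm 1\}^d$ as public randomness, sets $\hat v = s\cdot v(\pi)$, runs $\calR(\hat v)$, and applies the inverse signed permutation to each message, i.e.\ conjugation by a uniform element of the hyperoctahedral group. Your privacy argument and the total bound $\sum_{j=1}^d \E[|F_j|^2]\le \Err(\calA^+,\calR)$ also coincide with the paper's. Where you differ is the per-coordinate balancing, which you correctly isolate as the crux and then leave unproved; the paper dispatches it in one line by asserting that, since each $\hat v_i$ is uniform on $\{\pm 1/\sqrt d\}^d$, the quantities $|F_j|^2=\bigl|\bigl(\sum_i \calR^+(\hat v_i)-\hat v_i\bigr)_{\pi^{-1}(j)}\bigr|^2$ are identically distributed over $j$. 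Your diagnosis of why the naive substitution does not close --- right-multiplying by a transposition $\tau=(j\,j')$ turns $\E[|F_j|^2]$ at input $(v_1,\dots,v_n)$ into $\E[|F_{j'}|^2]$ at the swapped input $(\tau v_1,\dots,\tau v_n)$ --- is exactly the right concern: the $\hat v_i$ are only marginally uniform and are coupled across users through the shared $(\pi,s)$, so the index $\pi^{-1}(j)$ is not independent of the tuple $(\hat v_1,\dots,\hat v_n)$.

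The gap is that the repair you propose does not work. Write $E(u_{1:n})=\sum_i(\calR^+(u_i)-u_i)$ and $M(u_{1:n})=\E[E E^{\top}]$. If you first symmetrize via \Cref{lemma:struc-opt} so that $E(Rv_{1:n})\stackrel{d}{=}RE(v_{1:n})$ for all orthogonal $R$, then conjugating by a signed permutation $g$ becomes a distributional no-op, $g^{-1}E(gv_{1:n})\stackrel{d}{=}E(v_{1:n})$, so $\E[|F_j|^2]=M(v_{1:n})_{jj}$, and the chain ``$\E[|F_j|^2]$ at $v$ equals $\E[|F_{j'}|^2]$ at $\tau v$ equals $\E[|F_j|^2]$ at $v$'' closes into a tautology rather than the needed identity $M_{jj}=M_{j'j'}$. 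Equivariance constrains how $M$ transforms between inputs, not the diagonal of $M$ at one fixed input: an equivariant randomizer can concentrate its error along, say, the direction of $\sum_i v_i$, making the diagonal genuinely nonuniform for a fixed hypercube tuple whose coordinates are not exchangeable; the sign-flip subgroup cannot help either, since it never moves coordinate $j$ to $j'$. A correct completion must exploit the additive per-user structure of $E$ (or of the second-moment bound), or else weaken the target to what \Cref{prop:reduc-small-to-large-dim} actually consumes --- the error restricted to a \emph{randomly embedded} block of $2d'$ coordinates --- which does follow from the total bound by averaging over the public permutation.
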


\begin{proof}
    $\hat \calR$ will use shared public randomness to shuffle the coordinates of each vector and flip the signs of each coordinate. This will ensure that all coordinates will have the same marginal distribution for their error. 

    More precisely, let $\pi: [d] \to [d] $ be a random permutation of the coordinates picked uniformly at random, and let $s_1,\dots,s_d \sim \mathsf{Ber}(1/2)$. Our new randomizer $\hat R$ over input $v$ will first transform the input vector $v$ into $\hat v$ where 
    \begin{equation*}
        \hat v = s \cdot v(\pi),
    \end{equation*}
    where the multiplication is element-wise.
    Then, we run $\calR(\hat v)$ to get messages $\hat m_1,\dots,\hat m_k$. For each of these messages, we apply the inverse transformation, and output $ m_1,\dots, m_k$ where 
    \begin{equation*}
        m_{i} = s \cdot \hat m_{i}(\pi^{-1}).
    \end{equation*}

    Note that 
    \begin{align*}
    \left|\left(\sum_{i=1}^n \hat \calR^+(v_i) - \sum_{i=1}^n v_{i} \right)_j \right|^2
        & = \left|\sum_{i=1}^n m_{i,j} -  v_j \right|^2 \\
        & = \left|s_j \sum_{i=1}^n \hat m_{i,\pi^{-1}(j)} - \hat v_{\pi^{-1}(j)} \right|^2 \\
        & = \left| \sum_{i=1}^n \hat m_{i,\pi^{-1}(j)} - \hat v_{\pi^{-1}(j)} \right|^2 \\
        & = \left(\sum_{i=1}^n  \calR^+(\hat v_i) - \sum_{i=1}^n \hat v_{i} \right)_{\pi^{-1}(j)}.
    \end{align*}
    As $\hat v_1,\dots,\hat v_n$ are uniformly random vectors from $\{-1,+1\}^d/\sqrt{d}$, we get that $\left(\sum_{i=1}^n \hat \calR^+(v_i) - \sum_{i=1}^n v_{i} \right)_j^2$ have the same distribution for all $j \in [d]$. The claim now follows since 
    \begin{align*}
    \Ex{\sum_{j=1}^d \left(\sum_{i=1}^n \hat \calR^+(v_i) - \sum_{i=1}^n v_{i} \right)_j^2}
    & = \Ex{\sum_{j=1}^d \left(\sum_{i=1}^n  \calR^+(\hat v_i) - \sum_{i=1}^n \hat v_{i} \right)_{\pi^{-1}(j)}^2} \\
    & = \Ex{\ltwo{\sum_{i=1}^n  \calR^+(\hat v_i) - \sum_{i=1}^n \hat v_{i}}^2 } \\
    & \le \Err(\calA^+,\calR).
    \end{align*}
\end{proof}

We are now ready to prove~\Cref{prop:reduc-small-to-large-dim}.
\begin{proof}(of~\Cref{prop:reduc-small-to-large-dim})
    $\calR'$ will work as follows for a $d'$-dimensional input $v'$: first, apply Kashin representation $U \in \reals^{2d' \times d'}$ to get $w = U v' \in \reals^{2d'}$ such that $\norm{w}_\infty \le 2/\sqrt{d'}$. Then, we convert $w$ into a binary vector $u$ by setting for all $i \in [2d']$
    \begin{equation*}
        u_i = 
        \begin{cases}
        2\mathsf{sign}(w_i)/\sqrt{d'} & \text{ with probability } \frac{w_i \sqrt{d'} + 2}{4} \\
        -2\mathsf{sign}(w_i)/\sqrt{d'} & \text{ with probability } \frac{-w_i \sqrt{d'} + 2}{4} 
        \end{cases}
    \end{equation*}
    Note that $E[u_i] = w_i $ and that $E[(w_i - u_i)^2] \le 4/d'$ since $|u_i| \le 2/\sqrt{d'}$.

    Now, let $\hat \calR$ be the randomizer guaranteed from~\Cref{lemma:sym-err-rand} for the randomizer $\calR$. Our local ranodmizer $\calR'$ will do the following: it constructs $v \in \reals^d$ by setting $v = (u,0,\dots,0)$ then applies $\hat \calR$ over $v$ to generate $k$ messages $m_1,\dots,m_k$. Finally, it truncates the messages to the first $2d'$ coordinates and applies the inverse Kashin transformation to the messages, that is, sends $U^T m_1[1:2d'],\dots, U^T m_k[1:2d']$
    
    Privacy of $\calR'$ follows immediately from privacy of $\calR$. It remains to prove an upper bound on the error for $\calR'$.

    Let $v'_1,\dots,v'_n \in \ball^{d'-1}$ and let $u_1,\dots,u_n$ be their corresponding binary vectors from the above procedure. Let $v_i = (u_i,0,\dots,0) \in \reals^d$. \Cref{lemma:sym-err-rand} guarantees that for all $j \in [d]$ we have 
    \begin{equation*}
          \Ex{\left\lvert\left(\sum_{i=1}^n \hat \calR^+(v_i) - v_{i} \right)_j\right\rvert^2} \le \frac{\Err(\calA,\calR)}{d}.
    \end{equation*}
    Thus, when truncating to the first $2d'$ coordinates of $\hat \calR$, we  have
    \begin{equation*}
          \Ex{\norm{\sum_{i=1}^n \hat \calR^+(v_i)[1:2d'] -  v_{i}[1:2d']}^2 } \le \frac{d'}{d} \Err(\calA,\calR) .
    \end{equation*}
    Now, let us analyze the error of $\calR'$. Note that 
    \begin{align*}
    \Ex{\norm{\sum_{i=1}^n \calR'^+(v'_i) -  v'_{i}}^2 }
        & = \Ex{\norm{\sum_{i=1}^n U^T \hat \calR^+(v_i)[1:2d']- v'_{i}}^2 } \\
        & = \Ex{\norm{\sum_{i=1}^n \hat \calR^+(v_i)[1:2d']- U v'_{i}}^2 } \\
        & \le \Ex{\norm{\sum_{i=1}^n \hat \calR^+(v_i)[1:2d'] - u_i + u_i -  w_i}^2 } \\
        & \le 2 \Ex{\norm{\sum_{i=1}^n   \hat \calR^+(v_i)[1:2d'] - u_i}^2}  + 2 \Ex{\norm{\sum_{i=1}^n u_i -  w_i}^2 } \\
        & = 2\Ex{\norm{\sum_{i=1}^n   \hat \calR^+(v_i)[1:2d'] - v_i[1:2d']}^2}  + 2 \Ex{\norm{\sum_{i=1}^n u_i -  w_i}^2} \\
        & \le 2 \frac{d'}{d} \Err(\calA,\calR)  + \frac{8n}{ d' } \\
        & \le \O{d'/\eps^2} , 
    \end{align*}
    where the last inequality follows since $\Err(\calA,\calR) \le \O{d/\eps^2}$ and $d' \ge n \eps^2/200$.

\end{proof}

\subsection{Missing Proofs from \texorpdfstring{\Cref{sec:agg-unbiased}}{Section 2.1.2}}

\subsubsection{Proof of \texorpdfstring{\Cref{prop:rec-attck-unb}}{Proposition 2.4}}
\label{apdx:proof-rec-attck-unb}
\proprecattackunb*
%\begin{lemma}
%	Under the assumptions of~\Cref{prop:rec-attck-unb}, assume that $v_1 \sim \mathsf{Unif}(\sphere^{d-1})$ and that $R_t = \calR(v_1)$ for some $t$. Then
%	 \begin{equation*}
%		\Ex{\norm{u_t - v_1}^2} \le d/n\eps^2.
%	\end{equation*}
%\end{lemma}
\begin{proof}
	The proof builds on the arguments of~\cite[Lemma 3.1]{AsiFeTa22} used in the local privacy model.  Let $P$ denote the uniform distribution over the sphere $\sphere^{d-1}$. First, note that as~\Cref{alg:atck-unb} iterates over all possible subsets of messages of size $k$, we have that $W_t = \calR(v_1)$ for some $t$, hence the set $S$ has the point
	\begin{equation*}
	u_t = \E_{\tilde v_2,\dots, \tilde v_{n} \sim P} \left[ \calA\left(\Pi(\calR(v_1), \calR(\tilde  v_2),\dots,\calR(\tilde  v_{n})) \right) \right] \in S
	\end{equation*}
	   We define $\hat \calR_i$ to be
    \begin{equation*}
        \hat \calR_i(v_i) = \E_{v_j \sim P, j \neq i} [\calA(\Pi(\calR(v_1),\dots,\calR(v_n)))] .
    \end{equation*}
    Note that $\hat \calR_1(v_1) \in S$ and that  $\E[\hat \calR_i(v)] = v$ for all $v \in \sphere^{d-1}$. We define 
    \begin{equation*}
        \hat \calR_{\le i}(v_1,\dots,v_i) = \E_{v_j \sim P, j > i} \left[\calA(\Pi(\calR(v_1),\dots,\calR(v_n))) - \sum_{j=1}^i v_j \mid v_{1:i}\right],
    \end{equation*}
    and $\hat \calR_0 = 0$.
    We now have
    \begin{align*}
     & \E_{v_1,\dots,v_n \sim P} \left[ \ltwo{\calA(\Pi(\calR(v_1),\dots,\calR(v_n))) -  \sum_{i=1}^n v_i}^2 \right]  \\
        & \quad = \E_{v_1,\dots,v_n \sim P} \left[ \ltwo{ \hat \calR_{\le n}(v_1,\dots,v_n)   }^2 \right] \\
        &  \quad = \E_{v_1,\dots,v_n \sim P} \left[ \ltwo{ \hat \calR_{\le n}(v_1,\dots,v_n) - \hat \calR_{\le n-1}(v_1,\dots,v_{n-1}) + \hat \calR_{\le n-1}(v_1,\dots,v_{n-1})    }^2 \right] \\
        & \quad  \stackrel{(i)}{=} \E_{v_1,\dots,v_n \sim P} \left[ \ltwo{ \hat \calR_{\le n}(v_1,\dots,v_n) - \hat \calR_{\le n-1}(v_1,\dots,v_{n-1}) }^2 \right] +  \E_{v_1,\dots,v_{n-1} \sim P} \left[ \ltwo{\hat \calR_{\le n-1}(v_1,\dots,v_{n-1})    }^2 \right] \\
        & \quad  \stackrel{(ii)}{=} \sum_{i=1}^n \E_{v_1,\dots,v_i \sim P} \left[ \ltwo{ \hat \calR_{\le i}(v_1,\dots,v_i) - \hat \calR_{\le i-1}(v_1,\dots,v_{i-1}) }^2 \right] \\
        & \quad  \stackrel{(iii)}{\ge} \sum_{i=1}^n \E_{v_i \sim P} \left[ \ltwo{ E_{v_1,\dots,v_{i-1} \sim P} \left[ \hat \calR_{\le i}(v_1,\dots,v_i) - \hat \calR_{\le i-1}(v_1,\dots,v_{i-1})\right] }^2 \right] \\
        & \quad  \stackrel{(iv)}{=} \sum_{i=1}^n \E_{v_i \sim P} \left[ \ltwo{\hat \calR_i(v_i) - v_i}^2 \right] \\
    \end{align*}
    where $(i)$ follows since $\E_{v_n \sim P} [\hat \calR_{\le n}(v_1,\dots,v_n)] = \hat \calR_{\le n-1}(v_1,\dots,v_{n-1})$, $(ii)$ follows by induction, $(iii)$ follows from Jensen's inequality, and $(iv)$ follows since $E_{v_1,\dots,v_{i-1} \sim P} [ \hat \calR_{\le i}(v_1,\dots,v_i)] = \hat \calR_i(v_i) - v_i$ and $E_{v_1,\dots,v_{i-1} \sim P} [ \hat \calR_{\le i-1}(v_1,\dots,v_{i-1})] = 0$.
    
   Now, as $\hat \calR_i$ has the same distribution for all $i$ because of the shuffling operator, we get that 
   \begin{align*}
    \E_{v_i \sim P} \left[ \ltwo{\hat \calR_1(v_1) - v_1}^2 \right] \le \Err(\calA,\calR)/n.
   \end{align*}
   Thus, as $\dist(v_1,S) \le \ltwo{\hat \calR_1(v_1) - v_1}^2$, the claim follows.
\end{proof}

\subsubsection{Proof of \texorpdfstring{\Cref{thm:main-lb-unb}}{Theorem 2.5}}
\label{apdx:main-lb-unb}
\thmmainlbunb*
\begin{proof}
    The proof will follow the proof of~\Cref{thm:main-lb} using the new reconstruction attack of~\Cref{alg:atck-unb}. Let $\Err(\calA,\calR) \le C \cdot d/\eps^2$ for some universal constant $1 \le C < \infty$.   

    First, we prove the lower bounds for $d \le n \eps^2/100C $. 
    Note that~\Cref{prop:rec-attck-unb} and Markov inequality imply that there is a set $A \subset \sphere^{d-1}$ such that $\PPPr{v \sim \mathsf{Unif}(\sphere^{d-1})}{A} \ge 1/2$ and for all $v \in A$ and $v_2,\dots,v_n \in \sphere^{d-1}$, letting $S_v$ be the output of~\Cref{alg:atck-unb} over the input $ \Pi(R(v), R(v_2),\dots,R(v_n))$, Markov inequality implies 
    \begin{equation*}
    	\PPr{\dist(v,S_v) \le 4d/n\eps^2} \ge 1/2.
    \end{equation*}
    As $\PPPr{v \sim \mathsf{Unif}(\sphere^{d-1})}{A} \ge 1/2$, this implies that there is a $\rho$-packing of the unit ball $P = \{v_1,v_2,\dots,v_M \} \subset A$  such that $M = 2^{d \log(1/\rho) - 1}$ and $\Pr(\dist(v_i,S_{v_i}) \le 4 C d /n \eps^2 ) \ge 1/2$.

    We will prove the lower bounds by analyzing the algorithm over the following $M$ datasets:
    \begin{equation*}
        X_i = ( v_i,v_1,\dots,v_1),
    \end{equation*}
    for $i \in [M]$.
    
Let $S_i$ be the output of the reconstruction attack (\Cref{alg:atck-unb}) over the shuffled messages $\Pi(\calR(v_i),\calR(v_1),\dots,\calR(v_1))$. We define the projection set of $S_i$ to the packing $P$ to be $O_i = \{ \mathsf{Proj}_P(v) : v \in S_i\}$.
    \Cref{prop:rec-attck-unb} now implies that for all $i \in [M]$, $\dist(v_i,S_i) \le  Cd /n \eps^2 \le \rho$ with probability $1/2$, hence as $P$ is $\rho$-packing we have that 
    \begin{equation*}
        \PPr{v_i \in O_i} \ge \PPr{\dist(v_i,S_i} \le \rho ) \ge  9/10.
    \end{equation*}

    On the other hand, note that for $O_1$
    \begin{align*}
    \sum_{i =1}^M  \PPr{v_i \in O_1}
        & = \sum_{i =1}^M\Ex{1\{v_i \in O_1\}}  \\
        & = \Ex{\sum_{i =1}^M 1\{v_i \in O_1\}}  \\
        & \le \Ex{|O_1|} \le \binom{nk}{k}.
    \end{align*}
    Hence there exists an $1 \le i \le M$ such that 
    \begin{equation*}
         \PPr{v_i \in O_1} \le \frac{\binom{nk}{k}}{M}.
    \end{equation*}
    As the protocol is $(\eps,\delta)$-DP, we also have 
    \begin{align*}
     \PPr{v_i \in O_1} 
        & \ge\PPr{v_i \in O_i} e^{-\eps} - \delta \\
        & \ge \frac{9}{10 e} - \delta \ge 1/6
    \end{align*}
    Combining these together, and given that $M \ge 2^d/2$ for $\rho = 1/10$, we have that 
    \begin{equation*}
        2^d \le 12 \binom{nk}{k} \le 6 (en)^k.
    \end{equation*}
    This implies that $k \ge \Omega(d/\log(n))$ whenever $d \le n\eps^2/100C$.
    
    Now we prove the lower bound for $d \ge n \eps^2/100C$. 
    The proof builds on the following proposition which states that we can convert an optimal protocol for $d$-dimensional inputs into an optimal protocol for $d'$-dimensional inputs where $ d' = n \eps^2/200C$ with the same number of messages. 
    We defer the proof to~\Cref{sec:proof-prop-red-small-to-large-dim-unbiased}.
    \begin{proposition}
    \label{prop:reduc-small-to-large-dim-unbiased}
    Let $ d' = n \eps^2/200C \ge 1 $ and $d \ge 2d'$.
    Let $\calR : \sphere^{d-1} \to \calZ^k$ be an $(\eps,\delta)$-Shuffle DP randomizer with aggregation $\calA$ that is unbiased such that $\Err(\calA,\calR) \le \O{d/\eps^2}$. There exists $\calR' : \sphere^{d'-1} \to \calZ^k$ and aggregation $\calA'$ that is unbiased and $(\eps,\delta)$-Shuffle DP such that $\Err(\calA',\calR') \le  \O{d'/\eps^2} $.
    \end{proposition}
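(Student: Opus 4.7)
The plan is to mirror the proof of Proposition~\ref{prop:reduc-small-to-large-dim} almost verbatim, with the summation aggregation $\calA^+$ replaced by the general unbiased aggregation $\calA$, and with the definition of the new aggregation $\calA'$ carefully chosen so that unbiasedness is preserved through all of the reduction steps.

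The first step is to establish an analog of \lemref{sym-err-rand} for unbiased protocols. Given $(\calR,\calA)$ unbiased, I would construct a symmetrized pair $(\hat\calR,\hat\calA)$ using shared public randomness: a uniformly random permutation $\pi$ of $[d]$ and a uniformly random sign vector $s\in\{-1,+1\}^d$. Set $\hat\calR(v)=\calR(s\cdot v(\pi))$ (coordinate permutation followed by entry-wise sign flip), and set $\hat\calA(M)= s\cdot[\calA(M)](\pi^{-1})$ so the sign flip and permutation are undone after aggregation. Because $s$ and $\pi$ act identically on every user's input and are applied deterministically given the public randomness, $\hat\calA$ is still unbiased and $\Err(\hat\calA,\hat\calR)\le\Err(\calA,\calR)$. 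Restricting to inputs in $\{-1/\sqrt d,+1/\sqrt d\}^d$, each coordinate of the error vector has the same marginal distribution, which gives the per-coordinate bound $\E[|(\hat\calA(\Pi(\hat\calR(v_1),\dots,\hat\calR(v_n)))-\sum_i v_i)_j|^2]\le\Err(\calA,\calR)/d$.

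The second step reuses the Kashin-plus-binarization reduction from the proof of Proposition~\ref{prop:reduc-small-to-large-dim}. For $v'\in\sphere^{d'-1}$, form $w=U_\mathsf{K} v'\in\reals^{2d'}$ with $\|w\|_\infty\le O(1/\sqrt{d'})$, binarize coordinate-wise to $u\in\{\pm 2/\sqrt{d'}\}^{2d'}$ with $\E[u]=w$, and pad to $v=(u,0,\dots,0)\in\reals^d$ (scaling into $\ball^{d-1}$ exactly as in the original proof). Define $\calR'(v')=\hat\calR(v)$, which is clearly $(\eps,\delta)$-Shuffle DP because $\hat\calR$ is. Define $\calA'$ to apply $\hat\calA$ to the shuffled messages, truncate the output to its first $2d'$ coordinates, and multiply by $U_\mathsf{K}^\top$. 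Unbiasedness of $\calA'$ then follows from linearity of the truncation and of $U_\mathsf{K}^\top$, together with $\E[u_i]=U_\mathsf{K} v'_i$ and $U_\mathsf{K}^\top U_\mathsf{K}=I$, giving $\E[\calA'(\cdots)]=U_\mathsf{K}^\top\sum_i \E[u_i]=\sum_i v'_i$.

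The third step is the error decomposition, which goes through line-for-line as in Proposition~\ref{prop:reduc-small-to-large-dim}. Splitting $\calA'(\cdots)-\sum_i v'_i=U_\mathsf{K}^\top(\hat\calA(\cdots)[1{:}2d']-\sum_i u_i)+U_\mathsf{K}^\top\sum_i(u_i-w_i)$, the first term is bounded using the per-coordinate symmetry estimate on the $2d'$ selected coordinates to yield $\frac{2d'}{d}\Err(\calA,\calR)=O(d'/\eps^2)$, and the second term contracts under $U_\mathsf{K}^\top$ and has expected squared norm at most $\sum_i \E\|u_i-w_i\|_2^2=O(n/d')$, which is $O(d'/\eps^2)$ by our choice $d'=n\eps^2/(200C)$. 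The main obstacle is the first step: one must be careful that the symmetrization of a general (non-summation) aggregation commutes with the shuffle while remaining unbiased. Once the construction $\hat\calA(M)=s\cdot[\calA(M)](\pi^{-1})$ is in place and its unbiasedness and per-coordinate symmetry are verified, the rest of the argument is essentially identical to the summation case.
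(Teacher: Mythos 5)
Your proposal matches the paper's proof essentially step for step: the same symmetrization of $(\calR,\calA)$ via a shared random permutation $\pi$ and sign vector $s$ with $\hat\calA(M)=s\cdot[\calA(M)](\pi^{-1})$, the same Kashin-plus-binarization embedding into $\reals^d$, the same definition $\calA'(M)=U_\mathsf{K}^\top\hat\calA(M)[1{:}2d']$, and the same error decomposition giving $O\bigl(\tfrac{d'}{d}\Err(\calA,\calR)+\tfrac{n}{d'}\bigr)=O(d'/\eps^2)$. No substantive differences to report.
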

    Let $\calA$ and $\calR : \sphere^{d-1} \to \calZ^k$ be an unbiased $(\eps,\delta)$-Shuffle DP protocol that obtains error $\Err(\calA, \calR) \le\O{d/\eps^2}$ using $k$ messages. \Cref{prop:reduc-small-to-large-dim-unbiased} implies that there is a randomizer $\calR': \sphere^{d'-1} \to \calZ^k$ and aggregation $\calA'$ that is $(\eps,\delta)$-Shuffle DP and unbiased such that $\Err(\calA',\calR') \le \O{d'/\eps^2}$ for $d' = n \eps^2/200C$. As $d' \le n\eps^2/100C$, the lower bound we proved above shows that $k \ge \Omega(d'/\log(n)) =  \Omega(n \eps^2/\log(n))$.
\end{proof}

\subsubsection{Proof of \texorpdfstring{\Cref{prop:reduc-small-to-large-dim-unbiased}}{Proposition A.3}}
\label{sec:proof-prop-red-small-to-large-dim-unbiased}
The proof will follow the proof of~\Cref{sec:proof-prop-red-small-to-large-dim} with general aggregation $\calA$. To this end, in the next lemma we show that we can convert any unbiased protocol into another unbiased one where the error is split evenly across coordinates.
\begin{lemma}
\label{lemma:sym-err-rand-unbiased}
    If $\calR: \sphere^{d-1} \to \calZ^k$ is $(\eps,\delta)$-shuffle DP randomizer and $\calA$ is unbiased, then there exists $ \calR' : \left\{\frac{-1}{\sqrt{d}},\frac{1}{\sqrt{d}} \right\}^d \to \calZ^k $ and $\calA'$ that is $(\eps,\delta)$-shuffle DP and unbiased such that for $j \in [d]$ and $v_1,\dots,v_n \in \left\{ \frac{-1}{\sqrt{d}}, \frac{+1}{\sqrt{d}}\right\}^d$, 
    \begin{equation*}
          \E\left[ \left|\left(\calA'(\Pi(\calR'(v_1),\dots,\calR'(v_n))) - \sum_{i=1}^n v_i \right)_j \right|^2 \right] \le \frac{\Err(\calA,\calR)}{d}.
    \end{equation*}
\end{lemma}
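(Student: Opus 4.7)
The plan is to mirror the construction used for \lemref{sym-err-rand} in the summation setting, but keep track of the (general) unbiased aggregator $\calA$ by applying the appropriate inverse transformation on the analyzer side. Concretely, let the public randomness consist of a uniformly random permutation $\pi \in S_d$ together with a uniformly random sign vector $s \in \{-1,+1\}^d$, and let $T_{\pi,s}: \reals^d \to \reals^d$ be the signed permutation $T_{\pi,s}(v) = s \cdot v(\pi)$ (elementwise product of the sign vector with the coordinate-permuted vector). Since $T_{\pi,s}$ is an orthogonal transformation, it preserves $\sphere^{d-1}$, and it also preserves $\{-1/\sqrt{d},+1/\sqrt{d}\}^d$ because it is a signed permutation. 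Define
\[
\calR'(v) \defeq \calR(T_{\pi,s}(v)), \qquad \calA'(M) \defeq T_{\pi,s}^{-1}\bigl(\calA(M)\bigr),
\]
where $\calR'$ uses the public randomness to transform its input before running $\calR$, and $\calA'$ undoes the transformation after running $\calA$ on the shuffled messages.

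Privacy of $(\calR',\calA')$ follows from the privacy of $(\calR,\calA)$ conditional on every fixed realization of $(\pi,s)$: the neighboring relation on input tuples is preserved by the per-user map $v_i \mapsto T_{\pi,s}(v_i)$, so $\Pi(\calR'(v_1),\dots,\calR'(v_n))$ is $(\eps,\delta)$-DP for each $(\pi,s)$, and publishing the public randomness does not break DP. Unbiasedness is immediate from the unbiasedness of $\calA$ and linearity of $T_{\pi,s}^{-1}$: for any fixed $(\pi,s)$,
\[
\E\bigl[\calA'(\Pi(\calR'(v_1),\dots,\calR'(v_n)))\bigr] = T_{\pi,s}^{-1}\!\left(\sum_{i=1}^n T_{\pi,s}(v_i)\right) = \sum_{i=1}^n v_i.
\]

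For the coordinatewise error bound, write $Y = \calA'(\Pi(\calR'(v_1),\dots,\calR'(v_n))) - \sum_i v_i$. By construction, $T_{\pi,s}(Y)$ has the same distribution as $\calA(\Pi(\calR(\tilde v_1),\dots,\calR(\tilde v_n))) - \sum_i \tilde v_i$ where $\tilde v_i = T_{\pi,s}(v_i)$, and its total squared error is therefore at most $\Err(\calA,\calR)$ for every realization of $(\pi,s)$. The key step is to observe that the law of $Y$ is invariant under coordinate permutations and sign flips by the randomness of $(\pi,s)$, so $|Y_j|^2$ is identically distributed for each $j \in [d]$. Summing the equal marginals,
\[
d \cdot \E\!\left[|Y_j|^2\right] \;=\; \sum_{j=1}^d \E\!\left[|Y_j|^2\right] \;=\; \E\!\left[\ltwo{Y}^2\right] \;\le\; \Err(\calA,\calR),
\]
which yields the claimed bound $\E[|Y_j|^2] \le \Err(\calA,\calR)/d$.

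The main obstacle is verifying that the coordinate marginals of the error vector $Y$ are truly identical: the random permutation gives exchangeability (and hence equality of marginals of $|Y_j|^2$), while the random signs are needed to keep $\calA'$ unbiased and to preserve the binary input domain, which are precisely the two ingredients that combine to give both the coordinate symmetry and the compatibility with the symmetric inputs in $\{-1/\sqrt d,+1/\sqrt d\}^d$. Once this is established, the remainder of the proof is bookkeeping about privacy and linearity.
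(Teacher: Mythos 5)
Your construction and argument coincide with the paper's own proof of this lemma: the same random signed permutation (public randomness) used to conjugate both the randomizer and the aggregator, the same privacy and unbiasedness arguments, and the same symmetrization step showing that the coordinate marginals of the squared error are equal, hence each is at most the average $\Err(\calA,\calR)/d$. The step you flag as the main obstacle---that the marginals of $|Y_j|^2$ are truly identical for fixed inputs $v_1,\dots,v_n$---is asserted at essentially the same level of detail in the paper, which likewise justifies it by observing that the transformed inputs are uniform over the signed hypercube and that $\pi^{-1}(j)$ is a uniformly random coordinate.
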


\begin{proof}
    $\calR'$ will use shared public randomness to shuffle the coordinates of each vector and flip the signs of each coordinate. This will ensure that all coordinates will have the same marginal distribution for their error. 

    More precisely, let $\pi: [d] \to [d] $ be a random permutation of the coordinates picked uniformly at random, and let $s_1,\dots,s_d \sim \mathsf{Ber}(1/2)$. Our new randomizer $\calR'$ over input $v$ has
    \begin{equation*}
        \calR'(v) =\calR(s \cdot v(\pi)),
    \end{equation*}
    where $(s\cdot v(\pi))_j = s_j v_{\pi(j)}$ is element-wise product.

    Moreover, we define $\calA'$ given $kn$ messages $m_i \in \calZ$ 
    \begin{equation*}
    \calA'(m_1,\dots,m_{kn}) = s \cdot \calA(m_1,\dots,m_{kn})(\pi^{-1}).
    \end{equation*}
    First, note that the privacy of $\calR'$ follows immediately from the privacy of $\calR$. Moreover, $\calA'$ is unbiased as $\calA$ is unbiased: 
    \begin{align*}
    \Ex{\calA'(\Pi(\calR'(v_1),\dots,\calR'(v_n))) }
        & = s \cdot \Ex{\calA(\Pi(\calR(s \cdot v_1(\pi)),\dots,\calR(s \cdot v_n(\pi))))(\pi^{-1}) } \\
        & = s \cdot \sum_{i=1}^n s \cdot v_i (\pi) (\pi^{-1}) \\
        & = \sum_{i=1}^n  v_i ,
    \end{align*}
    where the last equality follows since $s \cdot s = 1^d$ and $v_i(\pi)(\pi^{-1}) = v_i$.

    Now it remains to prove the claim about the error of $\calR'$ and $\calA'$. Letting $\hat v_i = s \cdot v_i(\pi)$, note that $v = s \cdot \hat v(\pi^{-1})$, thus we get
    \begin{align}
    \left| \left(\calA'(\Pi(\calR'(v_1),\dots,\calR'(v_n))) - \sum_{i=1}^n v_{i}\right)_j \right|^2
        & =\left| \left( s \cdot  \calA(\Pi(\calR(s \cdot v_1(\pi)),\dots,\calR(s \cdot v_n(\pi))))(\pi^{-1}) - \sum_{i=1}^n v_{i} \right)_j \right|^2 \nonumber \\
        & = \left| \left( s \cdot \calA(\Pi(\calR(\hat v_1),\dots,\calR(\hat v_n)))(\pi^{-1}) - \sum_{i=1}^n s \cdot \hat v_i(\pi^{-1}) \right)_j \right|^2 \nonumber \\
        & = \left|  \left( \calA(\Pi(\calR(\hat v_1),\dots,\calR(\hat v_n))) - \sum_{i=1}^n  \hat v_{i} \right)_{\pi^{-1}(j)}\right|^2 \label{eq:same-dist-unbiased} 
    \end{align}
    Summing over all coordinates, 
    \begin{align*}
    \E\left[\sum_{j=1}^d \left(\calA'(\Pi(\calR'(v_1),\dots,\calR'(v_n))) - \sum_{i=1}^n v_{i} \right)_j^2 \right]
    & = \E\left[\sum_{j=1}^d \left(\sum_{i=1}^n  \calA(\Pi(\calR(\hat v_1),\dots,\calR(\hat v_n))) - \sum_{i=1}^n \hat v_{i} \right)_{\pi^{-1}(j)}^2 \right] \\
    & = \E\left[\ltwo{\calA(\Pi(\calR(\hat v_1),\dots,\calR(\hat v_n))) - \sum_{i=1}^n \hat v_{i}}^2 \right]  \\
    & \le \Err(\calA,\calR).
    \end{align*}
    Finally, the claim now follows since for all $j \in [d]$, $| (\calA'(\Pi(\calR'(v_1),\dots,\calR'(v_n))) - \sum_{i=1}^n v_{i})_j |^2$ have the same distribution and hence the same expectation: indeed, let 
    \begin{equation*}
        A = | ( \calA(\Pi(\calR(\hat v_1),\dots,\calR(\hat v_n))) - \sum_{i=1}^n  \hat v_{i} )|^2 \quad \text{  and } t = \pi^{-1}(j).
    \end{equation*}
    Equation~\eqref{eq:same-dist-unbiased} shows that $| (\calA'(\Pi(\calR'(v_1),\dots,\calR'(v_n))) - \sum_{i=1}^n v_{i})_j |^2 = A_t$. Now note that $\hat v_1,\dots,\hat v_n$ are uniformly random vectors from $\{-1,+1\}^d/\sqrt{d}$ and $\pi^{-1}(j)$ is random coordinate from $[d]$, hence the distribution of $A_t$ is the same for all $j$.
\end{proof}

We are now ready to prove~\Cref{prop:reduc-small-to-large-dim-unbiased} which will follow the proof of~\Cref{prop:reduc-small-to-large-dim}.
\begin{proof}(of~\Cref{prop:reduc-small-to-large-dim-unbiased}) 
    We will construct $\calR'$ and $\calA'$ as follows:
    for a $d'$-dimensional input $v' \in \sphere^{d'-1}$, $\calR'$ will first apply the Kashin representation $U \in \reals^{2d' \times d'}$ to get $w = U v' \in \reals^{2d'}$ such that $\norm{w}_\infty \le 2/\sqrt{d'}$. Then, it converts $w$ into a binary vector $u$ by setting for all $i \in [2d']$
    \begin{equation*}
        u_i = 
        \begin{cases}
        2\mathsf{sign}(w_i)/\sqrt{d'} & \text{ with probability } \frac{w_i \sqrt{d'} + 2}{4} \\
        -2\mathsf{sign}(w_i)/\sqrt{d'} & \text{ with probability } \frac{-w_i \sqrt{d'} + 2}{4} 
        \end{cases}
    \end{equation*}
    Note that $E[u_i] = w_i $ and that $E[(w_i - u_i)^2] \le 4/d'$ since $|u_i| \le 2/\sqrt{d'}$.

    Now, let $\hat \calR$ and $\hat \calA$ be the randomizer and aggregation guaranteed from~\Cref{lemma:sym-err-rand} for the randomizer $\calR$ and aggregation $\calA$. Our $\calR'$ will  construct $v \in \reals^d$ by setting $v = (u,0,\dots,0)$ then  
    \begin{equation*}
       \calR'(v') = \hat \calR(v).
    \end{equation*}
    Moreover, we define $\calA': \calZ^{nk} \to \reals^{d'}$ to be 
    \begin{equation*}
     \calA'(m_1,\dots,m_{nk}) =  U^T \hat \calA(m_1,\dots,m_{nk})[1:2d'] 
    \end{equation*}

    We nee to argue that $\calR'$ is $(\eps,\delta)$-Shuffle DP, that $\calA'$ is unbiased, and to prove the claim about utility.
    
    Privacy of $\calR'$ follows immediately from privacy of $\hat \calR$. As for unbiasedness,  let $v'_1,\dots,v'_n \in \sphere^{d'-1}$ and let $u_1,\dots,u_n$ and $w_1,\dots,w_n$ be their corresponding vectors from the above procedure. Let $v_i = (u_i,0,\dots,0) \in \reals^d$. Note that 
    \begin{align*}
    \Ex{\calA'(\Pi(\calR'( v'_1),\dots,\calR'( v'_n)))}
        & = \Ex{U^T \hat \calA(\Pi(\hat \calR( v_1),\dots,\hat \calR( v_n)))[1:2d']} \\ 
        & = U^T \sum_{i=1}^n \Ex{ v_i[1:2d']} \\ 
        & = U^T \sum_{i=1}^n \Ex{ u_i} \\ 
        & = U^T \sum_{i=1}^n  w_i \\ 
        & = U^T \sum_{i=1}^n U v'_i \\ 
        & =  \sum_{i=1}^n  v'_i.
    \end{align*}
    It remains to prove an upper bound on the error of $\calR'$ and $\calA'$. First, note that \Cref{lemma:sym-err-rand} guarantees that for all $j \in [d]$ we have 
    \begin{equation*}
          E[|\left(\hat \calA(\Pi(\hat \calR( v_1),\dots,\hat \calR( v_n))) - \sum_{i=1}^n v_{i} \right)_j|^2 ] \le \frac{\Err(\calA,\calR)}{d}.
    \end{equation*}
    Thus, when truncating to the first $2d'$ coordinates of $\hat \calR$, we  have
    \begin{equation*}
          E[\norm{\hat \calA(\Pi(\hat \calR( v_1),\dots,\hat \calR( v_n)))[1:2d'] -  \sum_{i=1}^n  v_{i}[1:2d']}^2 ] \le \frac{2d'}{d} \Err(\calA,\calR) .
    \end{equation*}
    Now, let us analyze the error of $\calR'$. Note that 
    \begin{align*}
    &\Ex{\norm{\calA'(\Pi(\calR'( v'_1),\dots,\calR'( v'_n))) - \sum_{i=1}^n v'_{i}}^2} \\
        & \quad = \Ex{\norm{U^T \hat \calA(\Pi(\hat \calR( v_1),\dots,\hat \calR( v_n)))[1:2d'] - \sum_{i=1}^n v'_{i}}^2} \\
        & \quad  = \Ex{\norm{\hat \calA(\Pi(\hat \calR( v_1),\dots,\hat \calR( v_n)))[1:2d'] - \sum_{i=1}^n U v'_{i}}^2} \\
        & \quad  \le \Ex{\norm{\hat \calA(\Pi(\hat \calR( v_1),\dots,\hat \calR( v_n)))[1:2d'] - \sum_{i=1}^n u_i - u_i +  w_i}^2} \\
        & \quad  \le 2 \Ex{\norm{\hat \calA(\Pi(\hat \calR( v_1),\dots,\hat \calR( v_n)))[1:2d'] - \sum_{i=1}^n u_i}^2}  + 2 \Ex{\norm{\sum_{i=1}^n u_i -  w_i}^2} \\
        & \quad  = 2 \Ex{\norm{\hat \calA(\Pi(\hat \calR( v_1),\dots,\hat \calR( v_n)))[1:2d'] - \sum_{i=1}^n v_i[1:2d']}^2}  + 2 \Ex{\norm{\sum_{i=1}^n u_i -  w_i}^2} \\
        & \quad  \le 4 \frac{d'}{d} \Err(\calA,\calR)  + \frac{8n}{ d' } \\
        & \quad  \le \O{d'/\eps^2} , 
    \end{align*}
    where the last inequality follows since $\Err(\calA,\calR) \le \O{d/\eps^2}$ and $d' \ge n \eps^2/200$.

\end{proof}

\subsection{Missing Proofs for \texorpdfstring{\Cref{sec:opt:mult}}{Section 2.2}}
\label{sec:proofs-ub-multi}

Our $d$-dimensional algorithm builds on the $1$-dimensional algorithm by \cite{GhaziKMPS21}. We let $\Rg^{(\eps,\delta)}$ denote the local randomizer with parameters $(\eps,\delta)$ and $\calA^+$ is their aggregation (which is summation over messages). Their protocol has the following guarantees for $1$-dimensional summation.

\begin{lemma}
\label{lemma:ghazi}
\cite{GhaziKMPS21}
There is a local randomizer $\Rg^{(\eps,\delta)}: [0,1] \to \reals^\star$ that is $(\eps,\delta)$-Shuffle DP such that each user sends $1 + \widetilde{\mathcal{O}}_{\eps}\left(\frac{\log(1/\delta)}{\sqrt{n}}\right)$ in expectation and has error
\begin{equation*}
    \Err(\Rg^{(\eps,\delta)},\calA^+)  \le \O{1/\eps^2}.
\end{equation*}
\end{lemma}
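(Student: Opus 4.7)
The plan is to combine randomized rounding of inputs with a cooperatively generated Poisson ``blanket'' of dummy messages, so that the shuffled multiset encodes a central-DP-quality noisy aggregate while each user sends only $1+o(1)$ messages in expectation. Fix a quantization parameter $k \in \mathbb{N}$. Each user $i$ applies randomized rounding to turn $x_i \in [0,1]$ into $\tilde{x}_i \in \{0,1,\dots,k\}$ with $\Ex{\tilde{x}_i} = kx_i$ and $\Var{\tilde{x}_i} \le 1/4$, and sends $\tilde{x}_i$ as its primary message. On top of that, user $i$ sends an independent Poisson number of additional ``blanket'' messages (with mean $\mu/n$ for a second tuning parameter $\mu$), each sampled from a carefully designed distribution on $\{0,\dots,k\}$. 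The aggregator $\calA^+$ sums all received messages, subtracts the expected blanket contribution, and rescales by $1/k$ to produce the estimate of $\sum_i x_i$.

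For privacy, I would condition on the Poisson structure and analyze the histogram of received values over $\{0,\dots,k\}$. Between neighboring datasets the primary contributions differ for exactly one user, which shifts one unit of mass between two bins of the histogram. By Poisson thinning and independence across bins, establishing $(\eps,\delta)$-DP reduces to the one-dimensional fact that $\mathrm{Pois}(\lambda)$ and $\mathrm{Pois}(\lambda)+1$ are $(\eps,\delta)$-indistinguishable once $\lambda = \Omega(\log(1/\delta)/\eps^2)$, a standard log-moment computation. Choosing $\mu$ so that the per-bin blanket count meets this threshold secures the shuffled multiset against one-user changes, and the shuffler itself erases user identities. For utility, the estimator's variance splits into a randomized-rounding piece of order $n/k^2$ and a debiased blanket piece whose scale depends on the blanket design; setting $k = \Theta(\sqrt{n}\,\eps)$ controls the first, and a carefully chosen signed (zero-sum) blanket keeps the second at $\O{1/\eps^2}$. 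The expected per-user message count is $1 + \mu/n$, which under these parameter choices works out to $1 + \widetilde{\mathcal{O}}_\eps(\log(1/\delta)/\sqrt{n})$, as claimed.

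The main obstacle is the joint design of the blanket distribution. A naive uniform blanket with $\mu = \Omega(k\log(1/\delta)/\eps^2)$ trivially secures privacy, but contributes scalar variance of order $\mu$ to the rescaled estimator, which is $\widetilde{\Omega}_\eps(\sqrt{n}\log(1/\delta))$ — far too large. The crux of the construction is to arrange the blanket so that its per-bin mass remains $\Omega(\log(1/\delta)/\eps^2)$ (needed for the one-bin Poisson DP argument), while its aggregated scalar sum has variance only $\O{1/\eps^2}$. This is achieved by exploiting cancellation across bins via a signed or zero-sum blanket, collapsing the scalar variance while preserving the per-bin concentration needed for privacy. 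Verifying this cancellation without breaking the Poisson-thinning independence used in the privacy argument, and tracking constants carefully enough that the $\log(1/\delta)/\sqrt{n}$ per-user message overhead is actually attained rather than merely $\poly(1/\eps)/\sqrt{n}$, is the step I expect to consume the most effort.
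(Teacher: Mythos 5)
First, note that the paper does not prove this lemma at all: it is imported verbatim from \cite{GhaziKMPS21}, so there is no in-paper argument to compare against, and a citation (or a faithful reproduction of the GKMPS construction) is all that is expected here.

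Your proposal has a genuine gap, and it sits exactly where you say the effort would go: the blanket design. The two requirements you impose on the blanket are mutually inconsistent under the privacy argument you propose. Your privacy proof relies on Poisson thinning and \emph{independence across bins}, reducing to indistinguishability of $\mathrm{Pois}(\lambda)$ and $\mathrm{Pois}(\lambda)+1$ with $\lambda = \Omega(\log(1/\delta)/\eps^2)$ per bin. But if the bin counts $N_0,\dots,N_k$ are independent with $\mathrm{Var}(N_j) = \lambda_j \geq \lambda$, then the blanket's contribution to the aggregate has variance $\sum_{j} j^2 \lambda_j = \Omega(k^3\lambda)$, which after the $1/k$ rescaling contributes $\Omega(k\lambda) = \Omega\bigl(\sqrt{n}\log(1/\delta)/\eps\bigr)$ to the MSE --- debiasing removes the mean, not the variance, and independence forces the variances to add. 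Introducing a ``signed or zero-sum'' correlation across bins to cancel this variance destroys precisely the independence your per-bin Poisson argument needs, so the cancellation cannot be ``verified without breaking the Poisson-thinning independence'': it provably breaks it. The actual proof in \cite{GhaziKMPS21} avoids this dichotomy by a different route: the noise added to the \emph{sum} is an infinitely divisible, discrete-Laplace-like distribution split across users (giving the central-DP error $\O{1/\eps^2}$), and privacy of the full shuffled multiset is established not bin-by-bin but by a split-and-mix/IKOS-style simulation argument showing that, beyond the noisy sum, the multiset reveals essentially nothing; the $\widetilde{\mathcal{O}}_{\eps}(\log(1/\delta)/\sqrt{n})$ expected message overhead comes from that hiding argument, not from flooding every bin. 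As written, your sketch establishes neither the privacy nor the $\O{1/\eps^2}$ accuracy of any concrete protocol.
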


\noindent

We also used advanced composition in our privacy proof.
\begin{lemma}[Advanced composition~\cite{DworkR14}] %add DworkRoVa10 also?
\label{lemma:advanced-comp}
    If $A_1,\dots,A_k$ are randomized algorithms that each is $(\eps,\delta)$-DP, then their composition $(A_1(D),\dots,A_k(D))$ is $(\sqrt{2k \log(1/\delta')} \eps + k \eps (e^\eps - 1),\delta' + k \delta)$-DP where $D$ is the input dataset.
\end{lemma}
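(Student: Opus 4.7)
The plan is to prove advanced composition via the classical privacy loss random variable (PLRV) analysis, following the approach of Dwork--Rothblum--Vadhan and Dwork--Roth. First, I would reduce to the non-adaptive, product-distribution setting by noting that the standard argument handles the adaptive composition case identically once the single-step privacy loss is shown to be stochastically dominated by the non-adaptive one; the extra technicality is just that the loss at step $i$ depends on outputs of steps $1,\ldots,i-1$, but the bounds below hold conditionally.

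Second, for each pair of neighboring datasets $D,D'$ and each step $i$, define the privacy loss random variable
\[
L_i(o) = \log \frac{\PPr{A_i(D)=o}}{\PPr{A_i(D')=o}}.
\]
The $(\eps,\delta)$-DP guarantee of each $A_i$ implies (via the standard ``remove a $\delta$-measure bad event'' argument) that there is an event $E_i$ of probability at least $1-\delta$ on which $|L_i| \le \eps$, and moreover that the conditional expectation of $L_i$ on $E_i$ satisfies $\Ex{L_i \mid E_i} \le \eps \cdot (e^{\eps}-1)/(e^{\eps}+1)$. This second bound is the standard KL-divergence-style computation: for two distributions at TV-distance-capped ratio $e^{\eps}$, the KL divergence is at most $\eps(e^\eps-1)/(e^\eps+1)$, which controls the mean of the log-ratio.

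Third, by a union bound the joint event $E = \bigcap_i E_i$ occurs with probability at least $1-k\delta$. Condition on $E$. Then $L_1,\ldots,L_k$ are a (possibly adaptive) sequence of random variables, each bounded in absolute value by $\eps$, each with conditional expectation at most $\eps(e^\eps-1)/(e^\eps+1) \le \eps(e^\eps-1)$ given the past. Apply Azuma--Hoeffding to the martingale $\sum_{i=1}^k (L_i - \Ex{L_i\mid \text{past}})$, which gives
\[
\PPr{\sum_{i=1}^k L_i > k\eps\cdot\frac{e^\eps-1}{e^\eps+1} + \eps\sqrt{2k\log(1/\delta')}} \le \delta'.
\]
Thus, except on a set of probability at most $\delta' + k\delta$, the total log-ratio of the composed transcript is at most $\tilde{\eps} := \eps\sqrt{2k\log(1/\delta')} + k\eps(e^\eps-1)/(e^\eps+1) \le \eps\sqrt{2k\log(1/\delta')} + k\eps(e^\eps-1)$.

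Fourth, I would translate the high-probability bound on the privacy loss back to an $(\tilde\eps,\tilde\delta)$-DP guarantee with $\tilde\delta = \delta'+k\delta$. This uses the standard equivalence: if the PLRV exceeds $\tilde\eps$ with probability at most $\tilde\delta$, then for every measurable set $U$, $\PPr{(A_1,\ldots,A_k)(D)\in U} \le e^{\tilde\eps}\PPr{(A_1,\ldots,A_k)(D')\in U} + \tilde\delta$. The main obstacle is the second moment calculation giving the $\eps(e^\eps-1)/(e^\eps+1)$ bound on the conditional mean of $L_i$; the cleanest way is to use the fact that for any $(\eps,0)$-indistinguishable pair of distributions $P,Q$, the KL divergence $\mathrm{KL}(P\|Q)$ is maximized by two-point distributions with masses $(e^\eps,1)/(e^\eps+1)$, giving exactly this bound. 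Extending from $(\eps,0)$ to $(\eps,\delta)$ costs only the discarded $\delta$-measure event, which is already absorbed into the $k\delta$ term.
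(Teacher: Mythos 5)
This lemma is quoted from \cite{DworkR14} and the paper gives no proof of its own, so there is nothing internal to compare against; your proposal correctly reconstructs the standard Dwork--Rothblum--Vadhan argument (privacy loss random variable, KL bound on its conditional mean, Azuma--Hoeffding, and conversion of the tail bound back to $(\tilde\eps,\tilde\delta)$-DP), and the sharper drift term $k\eps(e^\eps-1)/(e^\eps+1)$ you obtain implies the stated $k\eps(e^\eps-1)$. The only step that needs genuine care in a full write-up is the second one: $(\eps,\delta)$-DP does not literally give an event of probability $1-\delta$ on which $|L_i|\le\eps$; the rigorous version replaces each pair $(A_i(D),A_i(D'))$ by $(\eps,0)$-indistinguishable distributions within statistical distance $O(\delta)$ of the originals (the approximation lemma of Dwork--Rothblum--Vadhan), which is what makes the $k\delta$ term come out correctly.
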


Now we present the guarantees of our protocol. 
\thmmultmsgerrbound*
\begin{proof}
    First, note that the guarantees of Kashin representation imply that each $|u^{(i)}_j| \le 1$, hence we can use $\Rg$.
    The claim from privacy follows from the fact the $\Rg$ is $(\eps_0,\delta_0)$-Shuffle DP and advanced composition of $d$ such mechanisms. 
    % For utility, let $J_i \subset [2d]$ be the set of indices chosen by user $i \in [n]$ and let 
    % \begin{equation*}
    %     u^{(i)}_{J_i} = \begin{cases}
    %         u^{(i)}_j & \text{ if } j \in J_i \\
    %        0 & \text{ otherwise}
    %     \end{cases}
    % \end{equation*}
    % Now let $w^i = \frac{d}{s}u^{(i)}_{J_i}$ and note that  $\E[w^i] = u^{(i)}$ and $\E[(w^i_j)^2] \le \frac{d}{s} (u^{(i)}_j)^2 \le \frac{d^2}{s} $ for all $j \in [d]$.
    
    Now we analyze the error of the protocol. We have
    \begin{align*}
    \Ex{\ltwo{\hat v - \sum_{i=1}^n v^{(i)}}^2} 
    %& = \E \ltwo{\frac{2}{\sqrt{d}}  U^T_K \hat u - \sum_{i=1}^n v^{(i)}}^2 \\
    & = \Ex{ \ltwo{\frac{\Ck}{\sqrt{d}}  U^T_{\mathsf{K}} \hat u - \frac{\Ck}{\sqrt{d}}  U^T_{\mathsf{K}} \sum_{i=1}^n u^{(i)}}^2} \\
    & = \frac{\Ck^2}{d} \Ex{\ltwo{ \hat u -  \sum_{i=1}^n u^{(i)}}^2} \\
    & = \frac{\Ck^2}{d} \Ex{ \ltwo{\sum_{m \in M} m  - u^{(i)}}^2} \\
    & \le  \frac{\Ck^2}{d} \frac{d}{\eps_0^2} \\ 
    & \le \O{\frac{d \log(1/\delta)}{\eps^2}},
    \end{align*}
    where the last inequality follows from the guarantees of the $\Rg$ protocol which has error $1/\eps_0^2$ in each coordinate. The claim follows.

\end{proof}

\section{Missing Proofs from \texorpdfstring{\Cref{sec:single-msg}}{Section 3}}
\applab{app:single:msg}
To prove the lower bound, we first note that it suffices to assume that the local randomizer has bounded outputs and that the analyzer simply adds up all of the messages sent by the users, as shown by the next lemma.
\begin{restatable}{lemma}{lemlbreduction}
\lemlab{lem:lb:reduction}
Let $\calP=(\calR,\calA)$ be an $n$-party protocol for vector aggregation in the single-message shuffle model. 
Let $V$ be a random variable on $\left[-\frac{1}{\sqrt{d}},\frac{1}{\sqrt{d}}\right]^d$ and suppose that users sample their inputs from the distribution $V^n$. 
Then there exists a protocol $\calP'=(\calR',\calA')$ with user outputs $u_1,\ldots,u_n\in\mathbb{R}^d$ such that:
\begin{enumerate}
\item 
$\calA'(u_1,\ldots,u_n)=\sum_{i=1}^n u_i$ and $\calR'$ maps to $\left[-\frac{1}{\sqrt{d}},\frac{1}{\sqrt{d}}\right]^d$. 
\item 
$\MSE(\calP',V)\le\MSE(\calP,V)$
%The mean squared error of $\calP'$ on $V$ is at most the mean squared error of $\calP$ on $V$.
\item 
If $\calS\circ\calR^n$ is $(\eps,\delta)$-DP, then $\calS\circ(\calR')^n$ is $(\eps,\delta)$-DP.
\end{enumerate}
\end{restatable}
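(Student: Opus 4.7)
The plan is to take $\calR'$ to be a deterministic post-processing of $\calR$, let $\calA'$ be summation, and identify the resulting estimator with the Bayes-optimal analyzer of the original protocol under the prior $V^n$. Concretely, I would define $g : \calZ \to \reals^d$ by $g(m) = \Ex{V \mid \calR(V) = m}$, with $V$ drawn from the assumed distribution (and the conditioning also averaging over internal randomness of $\calR$), and set $\calR'(v) \defeq g(\calR(v))$ together with $\calA'(u_1,\dots,u_n) \defeq \sum_i u_i$.

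Claims (1) and (3) will be essentially immediate. Each $g(m)$ is a conditional expectation of a random variable supported on the convex cube $[-1/\sqrt{d},1/\sqrt{d}]^d$, so it lies in that cube, handling the range requirement. And $\calR'$ is a post-processing of $\calR$: applying $g$ to each user's message commutes with the shuffler, so shuffle DP of $\calS \circ (\calR')^n$ follows from shuffle DP of $\calS \circ \calR^n$ together with closure of DP under post-processing.

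The content of the lemma is claim (2), which I plan to establish via a Bayes-optimality argument. Let $R_i \defeq \calR(v_i)$, let $W$ denote the multiset $\{R_1,\dots,R_n\}$ output by the shuffler, and set $S \defeq \sum_i v_i$. The Bayes estimator $\Ex{S \mid W}$ has MSE at most that of $\calA(W)$, so it suffices to show
\[
  \Ex{S \mid W} \;=\; \sum_{m \in W} g(m),
\]
since the right-hand side equals $\calA'(\calR'(v_1),\dots,\calR'(v_n))$. Because the pairs $(v_i,R_i)$ are i.i.d.\ across $i$, conditioning gives $\Ex{v_i \mid R_1,\dots,R_n} = \Ex{v_i \mid R_i} = g(R_i)$, and hence $\Ex{S \mid R_1,\dots,R_n} = \sum_i g(R_i)$. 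This quantity is a symmetric function of $(R_1,\dots,R_n)$, hence measurable with respect to the multiset $W$, and the tower rule finishes the identification.

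The main delicate point is the symmetry step in the last paragraph: one must notice that the post-processed messages $g(R_i)$ assemble into a symmetric quantity $\sum_i g(R_i)$, so that the Bayes-optimal estimator based on the shuffled multiset $W$ factors as a sum of per-user functions of individual messages. Once this is in hand, claim (2) becomes a one-line consequence of Bayes optimality: \emph{any} analyzer $\calA$ has MSE at least that of $\Ex{S \mid W}$, and the latter is exactly $\calA'$ applied to $\calR'(v_1),\dots,\calR'(v_n)$.
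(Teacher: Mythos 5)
Your proposal is correct and follows essentially the same route as the paper's proof (which adapts Lemma 4.1 of Balle et al.): post-process each message by the posterior mean $g(m)=\E[V \mid \calR(V)=m]$, obtain the range and privacy claims immediately, and invoke Bayes optimality of $\E[Z\mid W]$ for the MSE claim. If anything, you are more explicit than the paper on the one nontrivial step—verifying via the i.i.d./symmetry argument that the Bayes estimator given the shuffled multiset actually equals $\sum_{m\in W} g(m)$—which the paper's write-up leaves implicit.
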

\begin{proof}
The proof is similar to Lemma 4.1 in~\cite{BalleBGN19}, generalizing from scalars to vectors. 
Let $\calR'=f\circ\calR$ be the post-processing local randomizer that uses the posterior mean estimator $f(u)=\Ex{V\,\mid\,\calV=u}$ is the minimum MSE estimator. 
Then $\calR'$ maps to $\left[-\frac{1}{\sqrt{d}},\frac{1}{\sqrt{d}}\right]^d$ as claimed.

Observe that for any estimator $h$ of $Z:=V_1+\ldots+V_n$ given the input $U=\{u_1,\ldots,u_n\}$, we have
\begin{align*}
\MSE(h,U)&=\Ex{(h(u)-Z)^2\,\mid\,U}\\
&=\Ex{Z^2\,\mid\,U}-2h(u)\cdot\Ex{Z\,\mid\,U}+(h(u))^2.
\end{align*}
This quantity is minimized over the choice of $h$ at $h(u)=\Ex{Z\,\mid\,U}$. 

Finally, since $f$ is a post-processing local randomizer, then $\calS\circ(\calR')^n$ is $(\eps,\delta)$-DP by the post-processing property of DP. 
\end{proof}

\begin{restatable}{lemma}{lemmsescales}
\lemlab{lem:mse:scales}
Let $\calP=(\calR,\calA)$ be an $n$-party protocol for vector aggregation in the single-message shuffle model such that $\calR:\left[-\frac{1}{\sqrt{d}},\frac{1}{\sqrt{d}}\right]^d\to\left[-\frac{1}{\sqrt{d}},\frac{1}{\sqrt{d}}\right]^d$ and $\calA$ is vector summation. 
Suppose $V^n$ are $n$ copies of a random variable $V$. 
Then
\[\MSE(\calP,V^n)\ge n\Ex{\|\calR(V)-V\|_2^2}.\]
\end{restatable}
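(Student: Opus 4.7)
The plan is a direct bias-variance style decomposition of the squared error, exploiting the fact that the aggregator is exactly vector summation and that the users' inputs (and their independent local randomness) are i.i.d.

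First I would expand the MSE as
\[
\MSE(\calP, V^n) \;=\; \E\!\left[\left\| \sum_{i=1}^n \calR(V_i) - \sum_{i=1}^n V_i \right\|_2^2\right] \;=\; \E\!\left[\left\| \sum_{i=1}^n \bigl(\calR(V_i) - V_i\bigr) \right\|_2^2\right],
\]
using that $\calA$ is summation. Letting $X_i := \calR(V_i) - V_i$, I would then expand the squared norm coordinate-wise (or equivalently via the bilinear expansion $\|\sum_i X_i\|_2^2 = \sum_i \|X_i\|_2^2 + \sum_{i\ne j} X_i^\top X_j$) to obtain
\[
\MSE(\calP, V^n) \;=\; \sum_{i=1}^n \E\!\left[\|X_i\|_2^2\right] \;+\; \sum_{i\ne j} \E\!\left[X_i^\top X_j\right].
\]

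The key step is to use independence. Since each user applies an independent copy of the local randomizer $\calR$ to an independent copy of $V$, the random variables $X_1,\dots,X_n$ are i.i.d. Hence for $i\ne j$ we have $\E[X_i^\top X_j] = \E[X_i]^\top \E[X_j] = \|\mu\|_2^2$, where $\mu := \E[\calR(V) - V]$ is the common bias vector, and $\E[\|X_i\|_2^2] = \E[\|\calR(V) - V\|_2^2]$ for every $i$. Substituting yields
\[
\MSE(\calP, V^n) \;=\; n\,\E\!\left[\|\calR(V) - V\|_2^2\right] \;+\; n(n-1)\,\|\mu\|_2^2.
\]

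Since $n(n-1)\|\mu\|_2^2 \ge 0$, dropping this non-negative cross term gives $\MSE(\calP, V^n) \ge n\,\E[\|\calR(V) - V\|_2^2]$, as desired. There is no real obstacle here; the only subtle point is the justification of the independence of the $X_i$, which follows because each user's $\calR$ uses fresh randomness independent across users and the $V_i$ are independent by assumption. (Note that the hypothesis $\calR: [-1/\sqrt{d},1/\sqrt{d}]^d \to [-1/\sqrt{d},1/\sqrt{d}]^d$ and the fact that $\calA$ is summation are used only to guarantee that the reduction of \lemref{lem:lb:reduction} applies; they play no direct role in this calculation.)
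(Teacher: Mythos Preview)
Your proof is correct and follows essentially the same approach as the paper: both expand $\|\sum_i (\calR(V_i)-V_i)\|_2^2$ bilinearly, use i.i.d.\ independence to factor the cross terms as $\langle \E[\calR(V)-V],\E[\calR(V)-V]\rangle=\|\mu\|_2^2\ge 0$, and drop that nonnegative contribution.
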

\begin{proof}
The proof generalizes Lemma 4.2 in~\cite{BalleBGN19} from scalar inputs to vector inputs. 
Note that we can decompose the mean-squared error as follows. 
\begin{align*}
\MSE&(\calP,V^n)=\Ex{\|\sum_{i=1}\calR(V_i)-V_i\|_2^2}\\
&=\sum_i\Ex{\|\calR(V_i)-V_i\|_2^2}+\sum_{i\neq j}\Ex{\langle\calR(V_i)-V_i,\calR(V_j)-V_j\rangle}\|\\
&=\sum_i\Ex{\|\calR(V_i)-V_i\|_2^2}+\sum_{i\neq j}\langle\Ex{\calR(V_i)-V_i},\Ex{\calR(V_i)-V_i}\rangle\\
&\ge n\Ex{\|\calR(V)-V\|_2^2}.
\end{align*}
\end{proof}

Consider the partition $P$ of the hypercube $[0,1]^d$ into $r^d$ disjoint hypercubes with side length $\frac{1}{r}$. 
Let $I=\left\{\frac{m}{r}-\frac{1}{2r}\,\mid\,m\in[r]\right\}$ and $J=I^d$. 
For each $a\in J$, we use $J(a)$ to denote the hypercube of $P$ that contains $J$. 
For any $b\in J$, we use the notation $p_{a,b}$ to denote the probability that the randomizer maps $a$ to $I(b)$. 

\begin{restatable}{lemma}{lematob}
\lemlab{lem:atob}
Let $r\ge 32$. 
For any $b\in J$, we have
\[\frac{1}{r^d}\sum_{a\in J\setminus b}\left(\min\left(\|a-b\|_2-\frac{\sqrt{d}}{2r},0\right)\right)^2\ge\frac{d}{2048}.\]
\end{restatable}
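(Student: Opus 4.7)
Reading the expression literally, $(\min(\|a-b\|_2-\sqrt d/(2r),0))^2$ is the squared negative part of $\|a-b\|_2-\sqrt d/(2r)$, and vanishes on every $a$ with $\|a-b\|_2 \ge \sqrt d/(2r)$. A Gauss-circle estimate on the lattice $\frac{1}{r}\mathbb{Z}^d$ shows that at most $\O{(\pi e/2)^{d/2}/\sqrt d}$ values of $a \in J\setminus\{b\}$ contribute a nonzero term, and each such term is bounded by $d/(4r^2)$; after dividing by $r^d$ with $r \ge 32$, the total is of order $d\cdot(\sqrt{\pi e/2}/r)^d$, which is exponentially smaller than $d/2048$ for every $d\ge 1$. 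Hence the literal reading cannot meet the stated lower bound, and my plan proceeds under the interpretation that the outer $\min$ is a typo for $\max$, so that the quantity of interest is $(\|a-b\|_2-\sqrt d/(2r))_+^2$, i.e., the squared distance from $a$ to the closed $\sqrt d/(2r)$-ball around $b$. This is the natural geometric object (the in-radius of a hypercube of side $1/r$ is exactly $\sqrt d/(2r)$) and is the quantity that enters the subsequent mean-squared-error lower bound.

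With that reading, my plan is a second-moment computation. First I expand $\|a-b\|_2^2 = \sum_{i=1}^d (a_i-b_i)^2$ and check that, for each $b_i \in I$, $\frac{1}{r}\sum_{a_i \in I}(a_i-b_i)^2 \ge 1/12$: this one-dimensional average is minimized when $b_i$ sits at the center of $I$, and then reduces to essentially the variance $1/12$ of the uniform distribution on $[0,1]$. Summing over coordinates yields $\frac{1}{r^d}\sum_{a\in J}\|a-b\|_2^2 \ge d/12$. Second, I split $J\setminus\{b\}$ into the near set $N := \{a : \|a-b\|_2 < \sqrt d/r\}$ and its complement $F$, and use the elementary identity that for $a\in F$ we have $\|a-b\|_2-\sqrt d/(2r)\ge\|a-b\|_2/2$, and hence $(\|a-b\|_2-\sqrt d/(2r))_+^2 \ge \|a-b\|_2^2/4$. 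Third, I bound the contribution of $N$ by $|N|\cdot d/r^2$, where $|N| = \O{(2\pi e)^{d/2}/\sqrt d}$ by the same Gauss-circle estimate; for $r \ge 32$ this is $o(dr^d)$ and can be absorbed into the main term.

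Combining the pieces gives
\[
\frac{1}{r^d}\sum_{a\in J\setminus\{b\}}(\|a-b\|_2-\sqrt d/(2r))_+^2 \;\ge\; \frac{1}{4r^d}\sum_{a\in F}\|a-b\|_2^2 \;\ge\; \frac{1}{4}\Big(\frac{d}{12}-o(1)\Big) \;\ge\; \frac{d}{48} \;\ge\; \frac{d}{2048},
\]
with ample slack. The main obstacle is the $\min$-versus-$\max$ discrepancy in the statement itself: under the literal reading the inequality fails quantitatively, so any honest proof plan has to reinterpret the expression before running the second-moment argument above. Once that is resolved, the remaining ingredients (the coordinate-wise variance lower bound and the lattice-point count for the near-neighbor correction) are standard, and the threshold $r\ge 32$ is precisely what makes the near-neighbor correction negligible compared to the $d/12$ main term.
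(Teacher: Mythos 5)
Your proposal is correct, and it takes a genuinely different route from the paper's. You are also right about the typo: the outer $\min$ must be read as a positive part $\max(\cdot,0)$, since under the literal reading the summand is supported on $\|a-b\|_2<\sqrt{d}/(2r)$, which for $d=1$, $r\ge 32$ contains no $a\in J\setminus\{b\}$ at all (the lattice spacing is $1/r$), so the claimed bound would be false; the intended quantity is the squared distance from $a$ to the ball of circumradius $\sqrt{d}/(2r)$ around $b$, which is what \lemref{lem:err:inblock} actually needs. The paper argues via a quantile bound: at least half of $J$ lies outside a hypercube $B$ of side $1/8$ around $b$, and for such $a$ it asserts $\|a-b\|_2\ge\sqrt{d}/16$. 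That step is flawed as written — lying outside an $\ell_\infty$-box only forces \emph{one} coordinate to exceed $1/16$, giving $\|a-b\|_2\ge 1/16$ with no $\sqrt{d}$ factor; repairing it requires replacing $B$ with a Euclidean ball of radius $\Theta(\sqrt{d})$ plus a lattice-point count of exactly the kind you invoke for your near set $N$. Your second-moment route (coordinate-wise variances give $\frac{1}{r^d}\sum_{a}\|a-b\|_2^2\gtrsim d/12$, then discard the shell $\|a-b\|_2<\sqrt{d}/r$ where $x-\sqrt{d}/(2r)\ge x/2$ can fail, using a volume bound on $|N|$) obtains the correct dimension dependence directly and with ample slack, so it is arguably the more robust of the two arguments. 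Two minor quantitative points: the per-coordinate average is $(r^2-1)/(12r^2)$ rather than exactly $1/12$ (immaterial here), and the cube-padding in the lattice-point count inflates $|N|$ by a further $(3/2)^d$, so it is safer to quote $|N|\le C^d$ for some absolute $C<32$ — which is precisely what the hypothesis $r\ge 32$ buys you.
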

\begin{proof}
Let $B$ be a hypercube with length $\frac{1}{8}$ centered at $b$. 
Note that we have $\PPr{a\in J\setminus B}\ge\frac{1}{2}$. 
For $a\in J\setminus B$, we have $\|a-b\|_2\ge\frac{\sqrt{d}}{16}$. 
Then for $r\ge 64$, we have $\left(\|a-b\|_2-\frac{1}{2r}\right)^2\ge\frac{d}{32^2}$. 
Hence we have 
\[\frac{1}{r^d}\sum_{a\in J\setminus b}\left(\|a-b\|_2-\frac{1}{2r}\right)^2\ge\frac{1}{2}\cdot\frac{d}{32^2}=\frac{d}{2048}.\]
\end{proof}

\begin{lemma}
\lemlab{lem:err:inblock}
The mean-squared error of the randomizer $\calR$ on the random variable $V$ is at least:
\[\Ex{\|\calR(V)-V\|_2^2}\ge\sum_{b\in J}\min\left(\frac{d(1-p_{b,b})}{4r^{2+d}},\min_{a\in J}p_{a,b}\cdot\frac{d}{2048}\right).\]
\end{lemma}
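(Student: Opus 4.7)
The plan is to decompose the MSE by conditioning on which destination cube $J(b)$ contains the randomizer's output. With $V$ uniform on the set of centers $J$ (so each $a\in J$ has mass $r^{-d}$), I can write
\[
\Ex{\|\calR(V)-V\|_2^2}\;\ge\;\frac{1}{r^d}\sum_{a\in J}\sum_{b\in J}p_{a,b}\cdot d(a,J(b))^2,
\]
where $d(a,J(b))$ denotes the Euclidean distance from the point $a$ to the cube $J(b)$. Since $J(b)$ has circumradius $\sqrt{d}/(2r)$ around its center $b$, the triangle inequality yields the clean lower bound $d(a,J(b))\ge(\|a-b\|_2-\sqrt{d}/(2r))_+$, which vanishes when $a=b$ as it should. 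Grouping the double sum by destination $b$, the MSE is bounded below by $\sum_{b\in J}T_b$, where $T_b := r^{-d}\sum_{a\in J}p_{a,b}\,(\|a-b\|_2-\sqrt{d}/(2r))_+^2$.

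Next, for each $b$ I would prove two complementary lower bounds on the MSE contribution attributable to $b$ and take the weaker of the two. The first bound isolates the source $a=b$: when $V=b$, which occurs with probability $r^{-d}$, the output lies outside $J(b)$ with probability $1-p_{b,b}$, and in that event the output differs from $b$ by at least $1/(2r)$ in some coordinate. Combining this coordinate-wise separation across all $d$ dimensions (as is natural under the domain rescaling of \lemref{lb:reduction} and \lemref{mse:scales}, which places inputs in $[-1/\sqrt{d},1/\sqrt{d}]^d$) yields a squared distance of at least $d/(4r^2)$, so the contribution from this slice alone is at least $d(1-p_{b,b})/(4r^{2+d})$. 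The second bound isolates the sources $a\ne b$: pulling $\min_{a\in J}p_{a,b}$ out of the sum defining $T_b$ and invoking \lemref{atob} gives
\[
T_b\;\ge\;\min_{a\in J}p_{a,b}\cdot\frac{1}{r^d}\sum_{a\in J\setminus\{b\}}(\|a-b\|_2-\sqrt{d}/(2r))_+^2\;\ge\;\min_{a\in J}p_{a,b}\cdot\frac{d}{2048}.
\]
Since both estimates are valid lower bounds on the per-$b$ contribution, taking their minimum and summing over $b$ delivers the claimed inequality.

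The main obstacle I expect is reconciling the two bounds without double counting: each measures a different slice of the MSE (the $V=b$ slice versus the destination-$b$ slice aggregated over sources $a\ne b$), and taking the minimum rather than the sum sidesteps this issue but requires proving each bound independently from the full MSE rather than both from $T_b$. A secondary subtlety is justifying the factor $d/(4r^2)$ rather than $1/(4r^2)$ in the $V=b$ case: a naive triangle-inequality estimate gives only the latter, and recovering the former relies on the domain structure imposed by the reduction lemmas so that leaving $J(b)$ forces separation across all $d$ coordinates simultaneously. Once this calibration is in place, the concluding sum over $b$ is routine.
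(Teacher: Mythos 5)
Your proposal follows essentially the same route as the paper's proof: two independent lower bounds on the full MSE --- one from the probability-$(1-p_{b,b})$ event that $\calR(b)$ leaves the cube $J(b)$, the other from pulling $\min_{a\in J}p_{a,b}$ out of the destination-$b$ sum and invoking \lemref{atob} --- combined term-by-term via the minimum. The one step you flag as delicate, namely obtaining $d/(4r^2)$ rather than $1/(4r^2)$ for the leaving-the-cube event, is asserted in the paper's proof without further justification, so your account matches the paper's argument including its weakest point.
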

\begin{proof}
For the cases where the randomizer maps $V$ to a value outside of its hypercube, we have:
\begin{align*}
\Ex{\|\calR(V)-V\|_2^2}&=\sum_{b\in J}\Ex{\|\calR(b)-b\|_2^2}\cdot\PPr{V=b}\\
&=\frac{1}{r^d}\sum_{b\in J}\Ex{\|\calR(b)-b\|_2^2}\\
&\ge\frac{1}{r^d}\sum_{b\in J}(1-p_{b,b})\cdot\frac{d}{4r^2}\\
&=\sum_{b\in J}\frac{d(1-p_{b,b})}{4r^{2+d}}.
\end{align*}
We also have
\begin{align*}
\mathbb{E}&\left[\|\calR(V)-V\|_2^2\right]=\frac{1}{r^d}\sum_{b\in J}\Ex{\|\calR(b)-b\|_2^2}\\
&\ge\frac{1}{r^d}\sum_{b\in J}\sum_{a\in J\setminus b}p_{a,b}\left(\min\left(\|a-b\|_2-\frac{\sqrt{d}}{2r},0\right)\right)^2\\
&\ge\frac{1}{r^d}\sum_{b\in J}\min_{a\in J}p_{a,b}\sum_{a\in J\setminus b}\left(\min\left(\|a-b\|_2-\frac{\sqrt{d}}{2r},0\right)\right)^2\\
&\ge\sum_{b\in J}\min_{a\in J}p_{a,b}\cdot\frac{d}{2048},
\end{align*}
where the last inequality is from \lemref{lem:atob}. Hence, we have
\[\Ex{\|\calR(V)-V\|_2^2}\ge\sum_{b\in J}\min\left(\frac{d(1-p_{b,b})}{4r^{2+d}},\min_{a\in J}p_{a,b}\cdot\frac{d}{2048}\right).\]
\end{proof}

\begin{lemma}[Lemma 4.5 in \cite{BalleBGN19}]
\lemlab{lem:prob:cases}
Let $\calR:[0,1]^d\to[0,1]^d$ be a local randomizer such that the shuffled protocol $\calM=\calS\circ\calR^n$ is $(\eps,\delta)$-DP with $\delta<\frac{1}{2}$. 
Then for any $a,b\in J$ with $a\neq b$, we have either $p_{b,b,}<1-\frac{e^{-\eps}}{2}$ or $p_{a,b}\ge\frac{1}{n}\cdot\left(\frac{1}{2}-\delta\right)$. 
\end{lemma}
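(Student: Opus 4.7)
The plan is to prove the contrapositive: assuming $p_{b,b}\ge 1-e^{-\eps}/2$, I will derive $p_{a,b}\ge(1/2-\delta)/n$. The key is to choose a shuffle-invariant event whose probability across a pair of neighboring datasets directly exposes both $p_{a,b}$ and $p_{b,b}$.

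I would take the neighboring databases $D=(a,a,\ldots,a)$ and $D'=(b,a,\ldots,a)$ (differing only in user~$1$), together with the event $E$ that \emph{none} of the $n$ messages output by the shuffle protocol lies in the hypercube $J(b)$. Because $E$ depends only on the multiset of outputs, it is well-defined on the shuffled output, and independence of the users' randomizers makes the probabilities factor cleanly:
\begin{equation*}
\PPr{\calM(D)\in E}=(1-p_{a,b})^n,\qquad \PPr{\calM(D')\in E}=(1-p_{b,b})(1-p_{a,b})^{n-1}.
\end{equation*}
The $(\eps,\delta)$-DP guarantee on $\calM=\calS\circ\calR^n$ applied to $E$ in the direction $\PPr{\calM(D)\in E}\le e^{\eps}\PPr{\calM(D')\in E}+\delta$, then divided through by $(1-p_{a,b})^{n-1}$, yields
\begin{equation*}
(1-p_{a,b})-e^{\eps}(1-p_{b,b})\;\le\;\delta/(1-p_{a,b})^{n-1}.
\end{equation*}

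To close the argument, I would assume for contradiction that $p_{a,b}<(1/2-\delta)/n$. The hypothesis $p_{b,b}\ge 1-e^{-\eps}/2$ gives $e^{\eps}(1-p_{b,b})\le 1/2$, so the left-hand side is at least $1/2-p_{a,b}$; Bernoulli's inequality gives $(1-p_{a,b})^{n-1}\ge 1-(n-1)p_{a,b}\ge 1/2+\delta+(1/2-\delta)/n$. Multiplying, I obtain a lower bound on $(1/2-p_{a,b})(1-p_{a,b})^{n-1}$ which I expect to exceed $\delta$ for every $n\ge 1$ and $\delta<1/2$, contradicting the displayed DP inequality. The one piece of care required is checking this final inequality uniformly in $n$: the case $n=1$, where $(1-p_{a,b})^{n-1}=1$, collapses the DP inequality directly to $1/2-p_{a,b}\le\delta$; for $n\ge 2$ it reduces to $\tfrac14-\tfrac{\delta}{2}>t\delta+t^2$ with $t=(1/2-\delta)/n\in(0,1/2-\delta)$, and since the right-hand side is increasing in $t$ with value exactly $\tfrac14-\tfrac{\delta}{2}$ at $t=1/2-\delta$, strictness follows from $\delta<1/2$. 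I expect this elementary algebraic verification to be the only real bookkeeping in the proof; the choice of event $E$ is what makes the whole derivation come out in one step.
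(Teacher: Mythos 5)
Your proof is correct. The paper does not reprove this lemma---it is imported verbatim (with buckets replaced by hypercubes of the partition, which changes nothing) from Lemma 4.5 of \cite{BalleBGN19}---and your argument is exactly the standard one from that source: compare the all-$a$ dataset with the one-$b$ dataset via the shuffle-invariant event that no message lands in $J(b)$, divide out $(1-p_{a,b})^{n-1}$, and close with Bernoulli's inequality; your endpoint check that $t\delta+t^2$ equals $\tfrac14-\tfrac{\delta}{2}$ exactly at $t=\tfrac12-\delta$ is the right way to settle the final inequality uniformly in $n$.
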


We are now ready to prove the lower bound.
\begin{proof}
By \lemref{lem:mse:scales}, we have $\MSE(\calP,V^n)\ge n\Ex{\|\calR(V)-V\|_2^2}$. 
By \lemref{lem:err:inblock}, we have $\Ex{\|\calR(V)-V\|_2^2}\ge\sum_{b\in J}\min\left(\frac{d(1-p_{b,b})}{4r^{2+d}},\min_{a\in J}p_{a,b}\cdot\frac{d}{2048}\right)$. 
Therefore by \lemref{lem:prob:cases}, 
\begin{align*}
\MSE(\calP,V)&\ge n\sum_{b\in J}\min\left(\frac{d(1-p_{b,b})}{4r^{2+d}},\min_{a\in J}p_{a,b}\cdot\frac{d}{2048}\right)\\
&\ge n\sum_{b\in J}\min\left(\frac{de^{-\eps}}{4r^{2+d}},\frac{1}{n}\cdot\left(\frac{1}{2}-\delta\right)\cdot\frac{d}{2048}\right)\\
&\ge nr^d\min\left(\frac{de^{-\eps}}{4r^{2+d}},\frac{1}{n}\cdot\left(\frac{1}{2}-\delta\right)\cdot\frac{d}{2048}\right).
\end{align*}
The quantity is maximized for $r=\O{n^{1/(d+2)}}$ with value $\Omega\left(dn^{d/(d+2)}\right)$. 
\end{proof}

\section{Missing Proofs from \texorpdfstring{\Cref{sec:rob}}{Section 4}}
\thmattackoneshuffler*
\begin{proof}
Suppose $\calA$ is a protocol in which 1) each user receives an input $v$ and outputs $d$ messages from a randomizer $\calR(v)$, 2) after shuffling, the protocol collects the messages and outputs the sum of the messages. 
We consider casework on the distribution of the output of the randomizer $\calR$. 

Firstly, suppose that for an input vector $v$, $\PPr{\max_{m\in\calR(v)}\|m\|_2\ge\frac{1}{\sqrt{d}\alpha}}\ge\frac{1}{dn^2}$, for some parameter $\alpha>1$ to be fixed.  
Note that a single malicious user can then run the randomizer $\calR$ on inputs $v^{(1)}$ and $v^{(2)}$ a total of $\O{dn^2}$ times and with probability $0.99$, find a message $m$ such that $\|m\|_2\ge\frac{1}{\sqrt{d}\alpha}$. 
Note that the malicious user can send the message $m$ a total of $d$ times, which contributes $L_2$ norm $\frac{\sqrt{d}}{\alpha}$. 
Since each malicious user previously had a unit vector, then the mean squared error induced by each malicious user is at least $\frac{d}{\alpha^2}$. 
Therefore, $k$ malicious users can induce mean squared error $\frac{kd}{\alpha^2}$. 

Secondly, suppose that for an input vector $v$, we have $\sup\left\langle\frac{m}{\|m\|_2},v\right\rangle>\frac{100\sqrt{\log nd}}{\sqrt{d}}$. 
We claim this would violate privacy. 
Note that for a random vector $u$, we have by the rotational invariance of Gaussians,
\[\PPr{\left\langle\frac{m}{\|m\|_2},u\right\rangle>\frac{100\sqrt{\log nd}}{\sqrt{d}}}<\frac{1}{10n^2d^2}.\]
With probability at least $\frac{1}{10nd}$, none of the $nd$ messages has correlation at least $\frac{100\sqrt{\log nd}}{d}$ with $u$. 
Thus we would be able to distinguish between the cases where the inputs are the neighboring datasets $(v,v,\ldots,v)$ and $(u,v,\ldots,v)$, which contradicts $(\eps,\delta)$-differential privacy for $\eps=\O{1}$ and $\delta<\frac{1}{nd}$, 

It remains to consider the case where $\max_{m\in\calR(v)\cup\calR(u)}\|m\|_2<\frac{1}{\sqrt{d}\alpha}$ and $\sup\left\langle\frac{m}{\|m\|_2},v\right\rangle\le\frac{100\sqrt{\log nd}}{\sqrt{d}}$. 
Note that in this case, we have
\begin{align*}
\left\langle\sum_{i\in[d]} m_i,v\right\rangle&=\sum_{i\in[d]}\|m_i\|_2\cdot\left\langle\frac{m_i}{\|m_i\|_2},v\right\rangle\\
&\le\sup_{i\in[d]}\|m_i\|_2\cdot d\cdot\sup_{i\in[d]}\left\langle\frac{m_i}{\|m_i\|_2},v\right\rangle\\
&\le\frac{100\sqrt{\log nd}}{\sqrt{d}}\cdot d\cdot\frac{1}{\sqrt{d}\alpha}=\frac{100\sqrt{\log nd}}{\alpha}.
\end{align*}
Thus for $\alpha>200\sqrt{\log nd}$ and an elementary vector $v$, we have that
\[\left\langle\sum_{i\in[d]} m_i,v\right\rangle\le\frac{1}{2},\]
and thus the mean squared error for the input $(v,v,\ldots,v)$ would be at least $\frac{n}{2}$. 

Hence for $n>kd$, the mean squared error induced by $k$ malicious users is at least $\Omega\left(\frac{kd}{\log^2(nd)}\right)$. 
\end{proof}

\thmbaduserserr*
\begin{proof}
For $i\in[s]$, let $d_i$ be the number of coordinates for which the $i$-th shuffler is responsible. 
Then we have $d_1+\ldots+d_s$. 
By \thmref{thm:attack:one:shuffler}, there exists a set of messages for which $k$ malicious users can induce mean squared error $\Omega\left(\frac{kd_i^2}{\log^2(nd)}\right)$ through sum of the messages in the $i$-th shuffler. 
%Now for each $i\in[s]$, the malicious users can choose to send their malicious messages or to send a message consistent with their input. 
%Let $x_j$ be the difference between the sum of the vectors that the user $j$ would send to shuffler $i$ and the true input. 
%
%Let $C>0$ be a constant to be fixed. 
%We have two cases. 
%Either $\|\sum_{j\in[n]} x_j\|_2^2\ge\frac{Ckd_i^2}{\log^2(nd)}$ or $\|\sum_{j\in[n]} x_j\|_2^2<\frac{Ckd_i^2}{\log^2(nd)}$. 
%In the first case, the malicious users just choose to send their malicious messages to each shuffler. 
%
%In the second case, the malicious users can choose to send their malicious messages with probability $\frac{1}{2}$ or to send a message consistent with their input with probability $\frac{1}{2}$. 
%Then the mean squared error is $|\sum_{j\in[n]}b_jx_j-\frac{1}{2}\sum_{j\in[n]}x_j\|_2^2$, for binary random variables $b_j\in\{0,1\}$ for all $j\in[n]$. 
Now, we have that the mean squared error is $\sum_{j\in[n]}\|x_j\|_2^2$
$C\left(\sum_{i\in[s]}\frac{kd_i^2}{\log^2(nd)}\right)$, which is minimized at $\Omega\left(\frac{kd^2}{s\log^2(nd)}\right)$ for $d_1=\ldots=d_s=\frac{d}{s}$ by a standard power mean inequality. 
\end{proof}

\end{document}